\newcommand{\xn}{X^{n_A}}
\newcommand{\xnstar}{X^{n_{A^*}}}
\newcommand{\xnml}{X^{n_A}_{-\ell}}
\newcommand{\xnmlstar}{X^{n_{A^*}}_{-\ell}}
\newcommand{\xnmltwo}{X^{n_A}_{-\ell_1, -\ell_2}}
\newcommand{\txn}{\mathrm{T_\ell}[X^{n_A}_{-\ell}](x_\ell)}
\newcommand{\txntwo}{\mathrm{T_{\ell_1, \ell_2}}[X^{n_A}_{-\ell_1, -\ell_2}](x_{\ell_1}, x_{\ell_2})}
\newcommand{\tuxn}{\mathrm{T^U_\ell}[X^{n_U}_{-\ell}](x_\ell)}
\newcommand{\tjxn}{\mathrm{T^{(j)}_\ell}[X^{n_{A_j}}_{-\ell}](x_\ell)}
\newcommand{\xhatfull}{F_\ell[X^{n_A}_{-\ell}]}
\newcommand{\xnj}{X^{n_{A_j}}}
\newcommand{\xnjml}{X^{n_{A_j}}_{-\ell}}
\newcommand{\xon}{X^{n_{U}}}
\newcommand{\xnu}{X^{n_U}}
\newcommand{\xnuml}{X^{n_{U}}_{-\ell}}
\newcommand{\atuxn}{\mathrm{T^U_\ell}[X^{n_U}_{-\ell}](x_\ell)}
\newcommand{\atjxn}{\mathrm{T^{(j)}_\ell}[X^{n_{A_j}}_{-\ell}](x_\ell)}
\newcommand{\afuxn}{\mathrm{F^U_\ell}[X^{n_U}_{-\ell}]}
\newcommand{\afjxn}{\mathrm{F^j_\ell}[X^{n_{A_j}}_{-\ell}]}
\newcommand{\ept}{\operatornamewithlimits{E}}
\newcommand{\eat}[1]{}
\newcommand{\edit}[1]{{#1}}
\newcommand{\kedit}[1]{{#1}}
\newenvironment{tighterdescription}%
  {\begin{description}%
    \setlength{\itemsep}{1pt}%
    \setlength{\parskip}{1pt}}%
  {\end{description}}
\newtheorem{theorem}{Theorem}[section]
\newtheorem{lemma}[theorem]{Lemma}
\newtheorem{definition}[theorem]{Definition}
\newtheorem{observation}[theorem]{Observation}
\newtheorem{condition}[theorem]{Condition}
\newcommand{\romTS}{\hat{n}}
\newcommand{\romHLL}{\mathrm{HLL}}
\newcommand{\pIKMV}{\mathrm{pKMV}}
\newcommand{\romKMV}{\mathrm{KMV}}
\newcommand{\romAdapt}{\mathrm{Adapt}}
\newcommand{\romIAdapt}{\mathrm{multiAdapt}}
\newcommand{\romIKMV}{\mathrm{multiKMV}}
\newcommand{\IKMV}{multiKMV}
\newcommand{\rommultiKMV}{\textrm{multiKMV}}
\newcommand{\estonsub}{\mathrm{EstimateOnSubPopulation}}
\newcommand{\thetaunion}{\mathrm{ThetaUnion}}
\newcommand{\samp}{\mathrm{samp}}
\newlength\myindent
\newcommand\bindent{%
  \begingroup
  \setlength{\itemindent}{\myindent}
  \addtolength{\algorithmicindent}{\myindent}
}
\newcommand\eindent{\endgroup}
\newcommand{\distinctsub}{\textsc{DistinctOnSubPopulation}}
\newcommand{\distinctsubP}{\textsc{DistinctOnSubPopulation}$_P$}
\newcommand{\distinctP}{\textsc{Distinct}$_P$}
\newcommand{\distinct}{\textsc{Distinct}}
\newcommand{\distinctA}{\textsc{Distinct}$(A)$}
\newcommand{\DistinctElements}{\textsc{DistinctElements}}
\title{A Framework for Estimating Stream Expression Cardinalities} 
\author{Anirban Dasgupta\thanks{Indian Institute of Technology, Gandhinagar} \and Kevin Lang\thanks{Yahoo Labs} \and Lee Rhodes\thanks{Yahoo! Inc} \and Justin Thaler\thanks{Yahoo Labs}}
\date{}
\begin{document}

\maketitle

\begin{abstract}
Given $m$ distributed data streams $A_1, \dots, A_m$, we consider the problem 
of estimating the number of unique identifiers in streams defined by set expressions over $A_1, \dots, A_m$.
We identify a broad class of algorithms for solving this problem, and show that the estimators output by any algorithm in this class 
are perfectly unbiased and satisfy strong variance bounds. Our analysis unifies and generalizes a variety of earlier results in the literature. 
To demonstrate its generality, we describe several novel sampling algorithms in our class, and show that they
achieve a novel tradeoff between accuracy, space usage, update speed, and applicability. 
\end{abstract}

\section{Introduction} 
Consider an internet company that monitors the traffic flowing over its network by placing 
a sensor at each ingress and egress point. Because the volume of traffic is large,
each sensor stores only a small \emph{sample} of the observed traffic, using some simple sampling procedure. 
At some later point, the company
decides that it wishes to estimate the number of unique users who satisfy a certain property $P$ and have communicated over its network.
We refer to this as the \distinctsubP\ problem, or \distinct$_P$ for short.
How can the company combine the samples computed by each sensor, in order to accurately estimate the answer to this query?
 
In the case that $P$ is the trivial property that is satisfied by all users, the answer to the query is simply the 
number of \DistinctElements\ in the traffic stream, or \distinct\ for short. The problem of designing streaming algorithms and sampling procedures
for estimating \DistinctElements\ has been the subject of intense study. In general, however, $P$ may be significantly more complicated than the trivial property, and may not be known until query time. 
For example, the company may want to estimate the number of (unique) men in a certain age range,
 from a specified country, who accessed a certain set of websites during a designated time period, while excluding IP addresses belonging to a 
 designated blacklist.  
This more general setting, where $P$ is a nontrivial ad hoc property, has received somewhat less attention than the basic \distinct\ problem.

In this paper, our goal is to identify a simple method for combining the samples from each sensor, so that the following holds.
As long as each sensor is using a sampling procedure that satisfies a certain mild technical condition, then for any property $P$, the combining procedure outputs an estimate for
the \distinctP\ problem that is unbiased. Moreover, its variance should be bounded by that of the individual sensors' sampling procedures.\footnote{More precisely,
we are interested in showing that
the variance of the returned estimate is at most that of the (hypothetical) estimator obtained by running each individual sensor's sampling algorithm on the concatenated stream $A_1 \circ \dots \circ A_m$.
We refer to the latter estimator as ``hypothetical'' because it is typically infeasible to materialize the concatenated stream in distributed environments.} 

For reasons that will become clear later, we refer to our proposed combining procedure as the \emph{Theta-Sketch Framework}, and we refer to the mild technical condition
that each sampling procedure must satisfy to guarantee unbiasedness as \emph{$1$-Goodness}. If the sampling procedures
satisfy an additional property that we refer to as \emph{monotonicity}, then the variance of the estimate output by the combining procedure is guaranteed to satisfy the desired variance bound. 
 The Theta-Sketch Framework, and our analysis of it, unifies and generalizes a variety of results in the literature (see Section \ref{sec:contributions} for details). 

\medskip
\noindent \textbf{The Importance of Generality.} As we will see, there is a huge array of sampling procedures that the sensors could use. Each procedure comes with a unique tradeoff between 
accuracy, space requirements, update speed, and simplicity. Moreover, some of these procedures come with additional desirable properties, while others do not. We would like to support as many sampling procedures as possible, because the best one to use in any given given setting
will depend on the relative importance of each resource in that setting. 

\eat{In addition, there are realistic scenarios where different sensors may use different sampling procedures, or may use the same sampling procedure but with different settings of parameters. 
For example, if different sensors have different amounts of available memory (say, because some sensors are newer than others), then the sensors with less memory will have to downsample 
more aggressively than the others. 
The guarantees of the Theta-Sketch Framework continue to hold when different sensors use different sampling procedures, as long as all of the sampling procedures used satisfy 1-Goodness. }

\medskip
\noindent \textbf{Handling Set Expressions.}
\label{sec:setexpressions}
The scenario described above can be modeled as follows. Each sensor observes a stream of identifiers $A_j$ from a data universe of size $n$, and the goal
is to estimate the number of distinct identifiers that satisfy property $P$ in the combined stream $U=\cup_{j} A_j$. 
In full generality, we may wish to handle more complicated set expressions applied to the constituent streams, other than set-union. For example, we may have $m$ streams of identifiers $A_1, \dots, A_m$, and wish to estimate 
the number of distinct identifiers satisfying property $P$ that appear in \emph{all streams}. 
The Theta-Sketch Framework can be naturally extended to provide estimates for such queries. 
 Our analysis applies to any sequence of set operations on the $A_j$'s, but we restrict our attention to set-union and set-intersection throughout the paper for simplicity. 

\section{Preliminaries, Background, and Contributions}
\subsection{Notation and Assumptions}
\noindent \textbf{Streams and Set Operations.}
Throughout, $A$ denotes a stream of identifiers from a data universe $[n]:=\{1, \dots, n\}$. We view any \emph{property} $P$ on identifiers
as a subset of $[n]$, and let $n_{P, A} := \distinct_P(A)$ denote the number of distinct identifiers that appear in $A$ and satisfy $P$.
For brevity, we let $n_A$ denote $\distinct(A)$. 
 When working in a multi-stream setting, $A_1, \dots, A_m$ denote $m$ streams of identifiers from $[n]$, $U := \cup_{j=1}^m A_j$ will denote the concatenation of the $m$ input streams,
while $I := \cap_{j=1}^m A_j$ denotes the set of identifiers that appear at least once in all $m$ streams. Because we are interested only in \emph{distinct} counts, 
it does not matter for definitional purposes whether we view $U$ and $I$ as sets, or as multisets. 
For any property $P \colon [n] \rightarrow \{0, 1\}$, $n_{P, U} := \distinct_P(U)$ and $n_{P, I} := \distinct_P(I)$, while $n_U := \distinct(U)$ and $n_I := \distinct(I)$.

\medskip
\noindent \textbf{Hash Functions.} \label{sec:hashprelims}
For simplicity and clarity, and following prior work (e.g. \cite{beyer2009distinct, cohen2009leveraging}), we assume throughout that the sketching and sampling algorithms make use of a perfectly random hash function $h$
mapping the data universe $[n]$ to the open interval $(0, 1)$. That is, for each $x \in [n]$, $h(x)$ is a uniform random number in $(0, 1)$. 
Given a subset of hash values $S$ computed from a stream $A$, and a property $P  \subseteq [n]$,
$P(S)$ denotes the subset of hash values in $S$ whose corresponding identifiers in $[n]$ satisfy $P$.
Finally, given a stream $A$, the notation $X^{n_A}$ refers to the set of hash values obtained by mapping 
a hash function $h$ over the $n_A$ distinct identifiers in $A$.

\eat{\medskip
\noindent \textbf{Additional Notation.} Throughout, we will use the following strategy for simplifying the
presentation of our results. Let $A_{\mathrm{raw}}$ be an arbitrary stream that is allowed to contain multiple
occurrences of the various identifiers. Let $A$ be the stream derived from $A_{\mathrm{raw}}$ by deleting all
occurrences of each label except for the one that would be encountered first during 
a linear scan through $A_{\mathrm{raw}}$. Then every sketching algorithm that we consider
produces exactly the same sketch and exactly the same estimate whether it is applied to
$A_{\mathrm{raw}}$ or to $A$. 
Therefore, without loss of generality, all of our proofs concerning the contents of sketches and the values
of estimates will be written in terms of the ``uniquified'' stream $A$. 
Finally, the notation $X^{n_A}$ will always refer to the sequence of hash values obtained by mapping 
a hash function $h$ over a length-$n_A$ uniquified stream $A$.}

\subsection{Prior Art: Sketching Procedures for \distinct\ Queries}
\label{sec:priorwork}
There is a sizeable literature on streaming algorithms for estimating the number of distinct elements in a single data stream. Some, but not all,
of these algorithms can be modified to solve the \distinctP\ problem for general properties $P$. 
Depending on which functionality is required, systems based on HyperLogLog Sketches, 
K'th Minimum Value (KMV) Sketches, and Adaptive Sampling represent the
state of the art for practical systems \cite{heule2013hll}.\footnote{Algorithms with better asymptotic bit-complexity are known \cite{kane2010optimal}, but they do not match the practical
performance of the algorithms discussed here. See Section \ref{sec:related}.}
For clarity of exposition, we defer a thorough overview of these algorithms to Section \ref{app:priorwork}. Here, we briefly review the main concepts
and relevant properties of each. 

\medskip
\noindent \textbf{HLL: HyperLogLog Sketches}. HLL is a sketching algorithm for the vanilla \distinct\ problem. Its accuracy per bit 
is superior to the KMV and Adaptive Sampling algorithms described below.
However, unlike KMV and Adaptive Sampling, it is not known how to extend the HLL sketch to estimate $n_{P, A}$ for general properties $P$ (unless, of course, $P$ is known prior to stream
processing). 

\medskip
\noindent \textbf{KMV: K'th Minimum Value Sketches.} 
The KMV sketching procedure for estimating \distinctA\ works as follows. While processing an input stream $A$, KMV keeps track of the set $S$ of the $k$ smallest unique hashed values of stream elements.
The update time of a heap-based implementation of KMV is $O(\log k)$.
The KMV estimator for \distinctA\ is: 
$\romKMV_A = \; k / m_{k+1}$, where 
$m_{k+1}$ denotes the $k\!+\!1^{\text{st}}$ smallest unique 
hash value.\footnote{Some works use the estimate $k/m_k$, 
{\em e.g.} 
\cite{bar2002counting}. 
We use $k/m_{k+1}$ 
because it is unbiased, and
for consistency with the work of Cohen and Kaplan \cite{cohen2009leveraging} described below.}
It has been proved by \cite{beyer2009distinct}, \cite{giroire2009order}, and others, that $E(\romKMV_A) = n_A$, and
$\sigma^2(\romKMV_A) = \frac{n_A^2- k \; n_A}{k-1} < \frac{n_A^2}{k-1}.$
Duffield et al. \cite{DuffieldLT07} proposed to change 
the heap-based implementation of the KMV sketching algorithm
to an implementation based on quickselect \cite{quickselect61}. 
This reduces the sketch update cost from $O(\log k)$ to amortized $O(1)$. However, this $O(1)$ hides a larger constant than
competing methods.
At the cost of storing the sampled identifiers, and not just their hash values, the KMV sketching procedure can be extended to estimate $n_{P, A}$ for any property $P \subseteq [n]$ (Section \ref{app:priorwork} has details).

\medskip
\noindent \textbf{Adaptive Sampling.} 
Adaptive Sampling maintains a sampling level $i \ge 0$, and the set $S$
of all hash values less than $2^{-i}$; whenever $|S|$ exceeds a pre-specified
size limit, $i$ is incremented and $S$ is scanned discarding any hash value
that is now too big. Because a simple scan is cheaper than running
quickselect, an implementation of this scheme is typically faster than
KMV. The estimator of $n_A$ is $\romAdapt_A = \; |S| /
2^{-i}$. It has been proved by \cite{flajolet1990adaptive} that this
estimator is unbiased, and that $\sigma^2(\romAdapt_A) \approx 1.44
(n_A^2/(k-1))$, where the approximation sign hides oscillations caused
by the periodic culling of $S$.  
Like KMV, Adaptive Sampling can be extended to estimate $n_{P, A}$ for any property $P$. 
Although the stream processing speed of Adaptive Sampling is excellent,
the fact that its accuracy oscillates as $n_A$ increases is a shortcoming.

\medskip
\noindent \textbf{HLL for set operations on streams.} HLL can be directly adapted to handle set-union (see Section \ref{app:priorwork} for details). For set-intersection, the relevant adaptation uses the inclusion/exclusion principle. 
However, the variance of this estimate is approximately 
a factor of $n_U/n_I$ worse than the variance achieved by
the $\romIKMV$ algorithm described below. When $n_I \ll n_U$, this penalty
factor overwhelms HLL's fundamentally good accuracy per bit.

\medskip
\noindent \textbf{KMV for set operations on streams.} 
\edit{Given streams $A_1, \dots, A_m$, let $S_j$ denote the KMV sketch computed from stream $A_j$.
A trivial way to use these sketches to estimate the number of distinct items $n_U$ in the union stream $U$
is to let $M'_U$ denote the $(k+1)^{\text{st}}$ smallest value in the union of the sketches, and let $S'_U = \{x \in \cup_j S_j\colon x < M'_U\}$.
Then $S'_U$ is identical to the sketch that would have been obtained by running KMV directly on the concatenated stream $A_1 \circ \dots, A_m$,
and hence $\romKMV_{P,U} := k/M'_U$ is an unbiased estimator for $n_U$, by the same analysis as in the single-stream setting. 
We refer to this procedure as the ``non-growing union rule.''}

\edit{Intuitively, the non-growing union rule does not use all of the information available to it. The sets $S_j$ contain up to $k \cdot M$
distinct samples in total, but $S'_U$ ignores all but the $k$ smallest samples. 
With this in mind, Cohen and Kaplan \cite{cohen2009leveraging} proposed the following 
adaptation of KMV to handle unions of multiple streams. We denote their algorithm by $\romIKMV$, and also refer to it as the
``growing union rule''.}

For each KMV  sketch $S_j$ computed from stream $A_j$, let $M_j$ denote that sketch's value of $m_{k+1}$. 
Define $M_U = \min_{j=1}^m M_j$, and $S_U = \{x \in \cup_j S_j\colon x < M_U\}$.
Then $n_U$ is estimated by $\romIKMV_U := |S_U|/M_U$, 
and  $n_{P, U}$ is estimated by $\romIKMV_{P, U} := |P(S_U)| /M_U$. 

\edit{At first glance, it may seem obvious that the growing union rule yields an estimator that is 
``at least as good'' as the non-growing union, since the growing union rule makes use of at least as many samples
as the non-growing rule. However, it is by no means trivial to prove that $\romIKMV_{P, U}$ is unbiased,
nor that its variance is dominated by that of the non-growing union rule.}
\kedit{Nonetheless, \cite{cohen2009leveraging} managed to prove this: they showed that $\romIKMV_{P, U}$ is unbiased and has
variance that is dominated by the variance of $\romKMV_{P,U}$:}
\begin{align}
\sigma^2(\romIKMV_{P, U}) \le & \sigma^2(\romKMV_{P,U}). \label{growing-variance-dominates}
\end{align}
As observed in \cite{cohen2009leveraging}, multiKMV can be adapted in a similar manner to handle set-intersections 
\edit{(see Section \ref{sec:intersections} for details).}


\medskip
\noindent \textbf{Adaptive Sampling for set operations on streams.} 
Adaptive Sampling can handle set unions and intersections with a \edit{similar ``growing union rule''} in which ``$M_U$'' $\;:= \min_{j=1}^m (2^{-i})_j$. Here, $(2^{-i})_j$ denotes the threshold for discarding hash values that was computed by the $j$th Adaptive Sampling sketch.
We refer to this algorithm as $\romIAdapt$. 
\cite{gibbons2001estimating} proved epsilon-delta bounds on the error of $\romIAdapt_{P, U}$, but did not derive expressions for mean or variance. 
However, $\romIAdapt$ and $\romIKMV$ are both special cases
of our Theta-Sketch Framework, and in Section~\ref{sec:framework}
we will prove (apparently for the first time) that $\romIAdapt_{P, U}$ is unbiased, and satisfies strong variance bounds.
\edit{These results have the following two advantages over the epsilon-delta bounds of \cite{gibbons2001estimating}. 
First, proving unbiasedness is crucial for obtaining estimators for distinct counts over subpopulations: these estimators are analyzed as a sum of a huge number of per-item estimates (see Theorem \ref{thm:unbiased} for details), and biases add up. Second, variance bounds enable derivation of confidence intervals that an epsilon-delta guarantee cannot provide, unless the guarantee holds for many values of delta simultaneously.
}

\subsection{Overview of the Theta-Sketch Framework}
In this overview, we describe the Theta-Sketch Framework in the multi-stream setting where the goal is to output $n_{P, U}$, where $U=\cup_{j=1}^m A_j$ (we define the framework formally in Section \ref{app:framework}).
That is, the goal is to identify a very large class of sampling algorithms that can run on each constituent stream $A_j$,
as well as a ``universal'' method for combining the samples from each $A_j$ to obtain a good estimator for $n_{P, U}$.
We clarify that the Theta-Sketch Framework, and our analysis of it, yields unbiased estimators that are interesting even in the single-stream case, where $m=1$.

 We begin by noting the striking similarities between the $\romIKMV$ and
$\romIAdapt$ algorithms outlined in Section~\ref{sec:priorwork}. 
In both cases, a sketch can be viewed as pair $(\theta, S)$ where $\theta$ is a certain
threshold that depends on the stream, and $S$ is a set of hash values which are all strictly less than $\theta$.  
In this view, both schemes use the same estimator $|S|/\theta$, and also the same growing
union rule for combining samples from multiple streams.  The only difference lies in their respective rules for
mapping streams to thresholds $\theta$. 
The Theta-Sketch Framework
formalizes this pattern of similarities and differences.

\medskip
\noindent \textbf{The assumed form of the single-stream sampling algorithms.}
The Theta-Sketch Framework demands that each constituent stream $A_j$ be processed by 
a sampling algorithm $\samp_j$ of the following form. While processing $A_j$, $\samp_j$
evaluates a ``threshold choosing function'' (TCF) $T^{(j)}(A_j)$. The final state of $\samp_j$ must be of the form
$(\theta_j:=T^{(j)}(A_j), S)$, where $S$ is the set of all hash values strictly less than $\theta_j$ that were observed while processing $A_j$. 
If we want to estimate $n_{P, U}$ for non-trivial properties $P$, then $\samp_j$
must also store the corresponding identifier that hashed to each value in $S$.
Note that the framework itself does not specify the threshold-choosing functions $T^{(j)}$. Rather, any specification of the TCFs
$T^{(j)}$ defines a particular instantiation of the framework.

{\small
\begin{algorithm}[t]
{\small
\caption{\small{Theta Sketch Framework for estimating $n_{P, U}$. The framework is parameterized by 
choice of TCF's $T^{(j)}$($k$,$A_j$,$h$), one for each input stream.}}\label{code:framework}
\begin{algorithmic}[1]{\footnotesize
\vspace{0.5em}
\STATE \textbf{Definition:} Function samp$_j$[$T^{(j)}$]($k$, $A_j$, $h$)
\bindent
 \STATE  $\theta_j \leftarrow \mathrm{T}^{(j)}(k,A_j,h)$
\STATE $S_j \leftarrow \{ (x \in h(A_j)) < \theta_j \}$. \label{code:framework-line-3}
\RETURN $(\theta_j,S_j)$.
\eindent
\vspace{0.5em}
\STATE \textbf{Definition:} Function ThetaUnion(Theta Sketches $\{ (\theta_j,S_j) \}$)
\bindent
 \STATE  $\theta_U \leftarrow \min \{\theta_j\}$. 
 \STATE $S_U \leftarrow \{ (x \in (\cup S_j)) < \theta_U \}$. \label{code:framework-line-7}
 \RETURN $(\theta_U,S_U)$.
 \eindent
\vspace{0.5em}
\STATE \textbf{Definition:} Function EstimateOnSubPopulation(Theta Sketch $(\theta,S)$ produced from stream $A$, Property $P$ mapping identifiers to $\{0, 1\}$)
\bindent \RETURN  $\hat{n}_{A,P} := \frac{|P(S)|}{\theta}$.
\eindent
}\end{algorithmic}
}
\end{algorithm}
}

\medskip
\noindent \textbf{Remark.} It might appear from Algorithm~\ref{code:framework} that for any TCF $T^{(j)}$, the function $\samp_j[T^{(j)}]$ makes two passes over the input stream: one to compute $\theta_j$, and another to compute $S_j$.  However, in all of the instantiations we consider, both operations can be performed in a single pass.

\medskip
\noindent \textbf{The universal combining rule.} Given the states $(\theta_j:=T^{(j)}(A_j), S_j)$ of each of the $m$ sampling algorithms when run on the streams $A_1, \dots, A_m$,
define $\theta_U := \min_{j=1}^{m} \theta_j$, and $S_U := \{x \in \cup_j S_j \colon x < \theta_U\}$ (see the function $\thetaunion$ in Algorithm \ref{code:framework}). 
Then $n_U$ is estimated by $\romTS_U := |S_U|/\theta_U$, and $n_{P, U}$ as 
$\romTS_{P, U} := 
|P(S_U)| / \theta_U$
(see the function $\estonsub$ in Algorithm \ref{code:framework}).

\medskip
\noindent \textbf{The analysis.} Our analysis shows that, so long as each threshold-choosing function $T^{(j)}$ satisfies a mild technical condition that we 
call \emph{1-Goodness}, then $\romTS_{P, U}$ is unbiased. We also show that if each $T^{(j)}$ satisfies a certain additional condition that we call \emph{monotonicity}, then 
$\romTS_{P, U}$ satisfies strong variance bounds (analogous to the bound of Equation \eqref{growing-variance-dominates} for $\romKMV$).  
Our analysis is arguably surprising, because $1$-Goodness does not imply certain properties that have
traditionally been considered important, such as permutation invariance, or $S$ being a uniform
random sample of the hashed unique items of the input stream.

\medskip
\noindent \textbf{Applicability.}  To demonstrate the generality of our analysis, we identify several valid instantiations of the Theta-Sketch Framework. 
First, we show that the TCF's used in KMV and Adaptive Sampling both satisfy $1$-Goodness and monotonicity, implying that $\romIKMV$ and $\romIAdapt$ are both unbiased and 
satisfy the aforementioned variance bounds. For $\romIKMV$, this is a reproof
of Cohen and Kaplan's results \cite{cohen2009leveraging}, but for $\romIAdapt$ the results are new.
Second, we identify a variant of KMV that we call $\pIKMV$,  which is useful in multi-stream settings where the lengths of constituent streams are highly skewed.
We show that $\pIKMV$ satisfies both $1$-Goodness and monotonicity. 
Third, we introduce a new sampling procedure that we call the \emph{Alpha Algorithm}. Unlike earlier algorithms, the Alpha Algorithm's
 final state actually depends on the stream order, yet we  
 show that it satisfies $1$-Goodness, and hence
is unbiased in both the single- and multi-stream settings.
We also establish variance bounds on the Alpha Algorithm in the single-stream setting.
We show experimentally that the Alpha Algorithm, in both the single- and multi-stream settings, 
achieves a novel tradeoff between accuracy, space usage, update speed, and applicability. 

Unlike KMV and Adaptive Sampling, 
 the Alpha Algorithm does not satisfy monotonicity in general. 
In fact, we have identified contrived examples
in the multi-stream setting on which the aforementioned variance bounds are (weakly) violated.
The Alpha Algorithm does, however, satisfy monotonicity under the promise that the $A_1, \dots, A_m$ are pairwise disjoint, implying variance bounds in this case. 
Our experiments suggest that, in practice, the \edit{normalized} variance in the multi-stream setting is not much larger than in the pairwise disjoint case.
\edit{\subparagraph*{Deployment of Algorithms.}  Within Yahoo, the pKMV and Alpha algorithms are used widely. In particular,
stream cardinalities in Yahoo empirically satisfy a power law, with some very large streams and many short ones, and pKMV is an attractive option for such settings. 
We have released an optimized open-source implementation of our algorithms at \texttt{http://datasketches.github.io/}.}

\subsection{Formal Definition of Theta-Sketch Framework}
\label{app:framework}
The Theta-Sketch Framework is defined as follows. This definition is specific to the multi-stream setting where the goal is to output $n_{P, U}$, where $U=\cup_{j=1}^m A_j$ is the union of constituent streams $A_1, \dots, A_m$.
\begin{definition} \label{def:tsf}
The Theta-Sketch Framework consists of the following components:
 \begin{itemize}
  \item The data type $(\theta, S)$, where $0 < \theta \le 1$ is a 
   threshold, and $S$ is the set of all unique hashed stream items
   $0 \le x < 1$ that are less than $\theta$. We will generically use the term ``theta-sketch'' to refer to an instance
   of this data type.
  \item The universal ``combining function'' $\thetaunion()$, defined in Algorithm~\ref{code:framework}, that takes as input a 
  collection of theta-sketches (purportedly obtained by running $\mathrm{\samp}[T]$() on constituent streams $A_1, \dots, A_m$),
  and returns a single theta-sketch (purportedly of the union stream $U = \cup_{i=1}^m A_i$).
  \item The function $\estonsub()$, defined in Algorithm~\ref{code:framework},
  that takes as input a theta-sketch $(\theta,S)$ (purportedly obtained from some stream $A$) and a property $P\subseteq [n]$
  and returns an estimate of $\romTS_{P, A}$.
  \end{itemize}
  Any instantiation of the Theta-Sketch Framework must specify a ``threshold choosing function'' (TCF), denoted $T(k,A, h)$,
   that maps a target sketch size, a stream, and a hash function $h$ to a threshold $\theta$. 
   Any TCF $T$ implies a ``base'' sampling procedure $\mathrm{\samp}[T]$() that maps a target size, a stream $A$, and a hash function to a theta-sketch
  using the pseudocode shown in Algorithm~\ref{code:framework}. One can obtain an estimate $\romTS_{P, A}$ for $n_{P, A}$ by feeding the resulting theta-sketch
  into $\estonsub$().
  
  Given constituent streams $A_1, \dots, A_m$, the instantiation obtains an estimate $\romTS_{P, U}$ of $n_{P, U}$ by running $\mathrm{\samp}[T]$() 
  on each constituent stream $A_j$, feeding the resulting theta-sketches to $\thetaunion$() to obtain a ``combined'' theta-sketch for $U=\cup_{i=1}^m A_i$, and then running
  $\estonsub$()  on this combined sketch. 
\end{definition}

\medskip \noindent \textbf{Remark.}
Definition \ref{def:tsf} assumes for simplicity that the same TCF $T$ is used in the base sampling algorithms run 
on each of the constituent streams. However, all of our results that depend only on $1$-Goodness ({\em e.g.} unbiasedness
of estimates and non-correlation of ``per-item estimates'') hold even if different $1$-Good
TCF's are used on each stream, and even if different values of $k$ are employed.

\subsection{Summary of Contributions}
\label{sec:contributions}
In summary, our contributions are: (1) Formulating the Theta-Sketch Framework. (2) Identifying a mild technical condition ($1$-Goodness) on TCF's 
ensuring that the framework's estimators 
are unbiased. If each TCF 
also satisfies a monotonicity condition, the framework's estimators come with strong variance bounds analogous to Equation \eqref{growing-variance-dominates}. 
(3) Proving $\romIKMV$, $\romIAdapt$, and $\pIKMV$ all satisfy $1$-Goodness and monotonicity, implying unbiasedness and variance bounds for each.
(4) Introducing the Alpha Algorithm, proving that it is unbiased, 
and establishing quantitative bounds on its variance in the single-stream setting.
(5) Experimental results showing that the Alpha Algorithm instantiation achieves a novel tradeoff between accuracy, space usage, update speed, and applicability.


\section{Analysis of the Theta-Sketch Framework}
\label{sec:framework}
\noindent \textbf{Section Outline.} 
Section~\ref{sec:exampleinstantiations} shows that KMV and Adaptive Sampling are both instantiations of the Theta-Sketch Framework.
Section~\ref{sec:sufficient-condition} defines $1$-Goodness.
Sections~\ref{sec:kmv-sat} and~\ref{sec:adapt-sat} prove that the TCF's 
that instantiate behavior identical to $\romKMV$ and $\romAdapt$ both satisfy $1$-Goodness.
Section~\ref{sec:union-sat} proves that if a framework instantiation's
TCF satisfies $1$-Goodness, then so does the TCF that is implicitly applied to the union stream via the composition of the 
instantiation's base algorithm and the function $\thetaunion$(). 
Section~\ref{sec:subpop-unbiased} proves that the estimator $\romTS_{P, A}$ for $n_{P, A}$ returned by $\estonsub$()
is unbiased when applied to any theta-sketch produced by a TCF
satisfying $1$-Goodness. 
Section~\ref{sec:variance} defines monotonicity and shows that $1$-Goodness and monotonicity together imply variance bounds on $\romTS_{P, U}$. 
Section \ref{sec:intersections} explains how to tweak the Theta-Sketch Framework
to handle set intersections and other set operations on streams.
Finally, Section \ref{sec:pkmv} describes the $\pIKMV$ variant of KMV. 

\subsection{Example Instantiations}
\label{sec:exampleinstantiations} 
%
Define $m_{k+1}$ to be the $k\!+\!1^{\text{st}}$ smallest unique hash value in 
$h(A)$ (the hashed version of the input stream).
The following is an easy observation. 
\begin{observation}
\label{obs:kmv}
When the Theta-Sketch Framework is instantiated with the TCF $T(k,A,h) = m_{k+1}$, 
the resulting instantiation is equivalent to the $\romIKMV$
algorithm outlined in  
Section~\ref{sec:priorwork}.
\end{observation}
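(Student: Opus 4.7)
The plan is to verify the equivalence in three steps, corresponding to the three components of the Theta-Sketch Framework (base sampling, union combining, estimation), and to argue each step is an immediate unpacking of definitions.

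First, I would unpack the base sampling procedure $\samp_j[T]$ with $T(k,A_j,h) = m_{k+1}$. By definition of $m_{k+1}$, the threshold $\theta_j$ equals the $(k+1)^{\text{st}}$ smallest unique hash value in $h(A_j)$, so the set $S_j = \{x \in h(A_j) : x < \theta_j\}$ is precisely the set of the $k$ smallest unique hash values of $h(A_j)$. This is exactly the KMV sketch of $A_j$ as defined in Section~\ref{sec:priorwork}, and the pair $(\theta_j, S_j)$ carries exactly the information that a KMV sketch stores (the $k$ smallest hash values, with $m_{k+1}$ retained as the discard threshold).

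Second, I would compare the combining rule $\thetaunion$ to the growing union rule of $\romIKMV$. The function $\thetaunion$ computes $\theta_U = \min_j \theta_j$ and $S_U = \{x \in \cup_j S_j : x < \theta_U\}$, which matches verbatim the definitions $M_U = \min_j M_j$ and $S_U = \{x \in \cup_j S_j : x < M_U\}$ in the $\romIKMV$ description, once one identifies $\theta_j = M_j = m_{k+1}$ for stream $A_j$. Finally, $\estonsub$ returns $|P(S_U)|/\theta_U$, which matches the $\romIKMV$ estimator $|P(S_U)|/M_U$. Stringing these three identifications together shows that the framework instantiation produces the same sketches and the same estimates as $\romIKMV$ on every input.

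There is no substantive obstacle here; the statement is essentially a definitional check, which is why the authors label it an observation. The only mild subtlety worth flagging explicitly in the write-up is that the framework definition uses a strict inequality ($x < \theta_j$) in line~\ref{code:framework-line-3} of Algorithm~\ref{code:framework}, which correctly excludes the element attaining the $(k+1)^{\text{st}}$ smallest hash value itself, yielding exactly the $k$ smallest and not $k+1$. Once this boundary convention is noted, the three equalities above give the claimed equivalence.
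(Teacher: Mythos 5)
Your proposal is correct and matches the paper's intent: the paper offers no explicit proof, presenting the claim as an immediate definitional check, and your three-step unpacking (base sketch via $T = m_{k+1}$, $\thetaunion$ versus the growing union rule, and the matching estimators) is exactly that check, with the strict-inequality boundary point a reasonable detail to flag.
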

Let $\beta$ be any real value in $(0,1)$. For any $z$, define $\beta^{i(z)}$ to be the 
largest value of $\beta^i$ (with $i$ a
non-negative integer) that is less than $z$. 
\begin{observation} 
\label{obs:adapt}
When the Theta-Sketch Framework is instantiated with the TCF $T(k,A,h) = \beta^{i(m_{k+1})}$
the resulting instantiation is equivalent to $\romIAdapt$, which 
combines Adaptive Sampling with a growing union rule (cf. Section~\ref{sec:priorwork}).\footnote{ 
Section~\ref{sec:priorwork} assumed that the parameter $\beta$ was set to the most common value: $1/2$.}
\end{observation}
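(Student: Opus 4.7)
The plan is to verify the equivalence in two stages: (i) on any single constituent stream $A$, the base sampling procedure $\samp[T]$ with $T(k,A,h) = \beta^{i(m_{k+1})}$ produces exactly the same $(\theta, S)$ state as the Adaptive Sampling sketch of $A$; and (ii) the composition of $\thetaunion$ and $\estonsub$ on the resulting theta-sketches reproduces the growing union rule and estimator of $\romIAdapt$ as described in Section~\ref{sec:priorwork}.

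For (i), the key observation is that Adaptive Sampling, when given size limit $k$, terminates with level $i^*$ equal to the smallest non-negative integer for which $|\{x \in h(A): x < \beta^{i^*}\}| \le k$, and this condition is equivalent to $\beta^{i^*} \le m_{k+1}$ since exactly $k$ hash values lie strictly below the $(k\!+\!1)^{\text{st}}$ smallest hash. Because the hash values are independent and continuously distributed on $(0,1)$ by the assumption of Section~\ref{sec:hashprelims}, $m_{k+1}$ is almost surely not a power of $\beta$, so $\beta^{i^*} \le m_{k+1}$ iff $\beta^{i^*} < m_{k+1}$. Since $\beta^i$ is strictly decreasing in $i$, the smallest such $i^*$ yields the largest $\beta^{i^*}$ strictly less than $m_{k+1}$, which by definition is $\beta^{i(m_{k+1})}$. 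With this threshold, Adaptive Sampling's final retained set $\{x \in h(A): x < \beta^{i^*}\}$ matches verbatim the set $S$ computed in Line~\ref{code:framework-line-3} of $\samp[T]$. The edge case where $A$ has fewer than $k+1$ unique identifiers is handled by the natural convention $m_{k+1} := 1$, yielding $\beta^{i(m_{k+1})} = \beta^0 = 1$ and matching Adaptive Sampling's default level $i=0$.

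For (ii), the growing union rule of $\romIAdapt$ defines $M_U := \min_j \beta^{i_j}$ across the per-stream thresholds $\beta^{i_j}$ and retains the hash values in $\cup_j S_j$ strictly less than $M_U$. Identifying $\theta_j$ with $\beta^{i_j}$, this is precisely the body of $\thetaunion$ in Algorithm~\ref{code:framework}, and the estimates $|S_U|/\theta_U$ and $|P(S_U)|/\theta_U$ returned by $\estonsub$ coincide with $\romIAdapt_U$ and $\romIAdapt_{P,U}$ respectively. The only mildly delicate point in the whole argument is the handling of the measure-zero exceptional events in which $m_{k+1}$ coincides with a power of $\beta$; under the perfectly-random-hash assumption these events have probability zero, so the equivalence holds almost surely, which is sufficient for all subsequent distributional claims (unbiasedness and variance bounds) to transfer between the two descriptions.
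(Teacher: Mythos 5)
The paper states Observation~\ref{obs:adapt} without proof (it is offered as an easy observation), so there is no ``paper proof'' to compare against; your write-up supplies exactly the verification that the paper leaves implicit, and its core is sound. The key step --- characterizing the final Adaptive Sampling level $i^*$ as the smallest non-negative integer with at most $k$ distinct hash values below $\beta^{i^*}$, translating that into $\beta^{i^*} \le m_{k+1}$, and then invoking the continuity of the hash distribution to upgrade $\le$ to $<$ almost surely so that $\beta^{i^*} = \beta^{i(m_{k+1})}$ --- is correct, as is the second stage identifying $\min_j \beta^{i_j} = \beta^{\max_j i_j}$ with $\theta_U$ in $\thetaunion$ and matching the estimators $|S_U|/\theta_U$ and $|P(S_U)|/\theta_U$ with $\romIAdapt_U$ and $\romIAdapt_{P,U}$. (Implicit in your level characterization is that the culling loop repeats until $|S|$ is back under the limit, so that every level $j < i^*$ really did see more than $k$ distinct values over the whole stream; it is worth saying this in one sentence, but it is the standard behavior of Adaptive Sampling and the argument goes through.)

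The one concrete slip is the short-stream edge case: with the paper's definition of $\beta^{i(z)}$ as the \emph{largest} $\beta^i$ (for non-negative integer $i$) that is \emph{strictly less than} $z$, the convention $m_{k+1} := 1$ gives $\beta^{i(1)} = \beta^{1} = \beta$, not $\beta^{0} = 1$ as you claim, so the TCF would cull hash values in $[\beta, 1)$ while Adaptive Sampling at level $i=0$ keeps them. To make that case work you should either state the observation under the (implicit in the paper) assumption that each constituent stream has more than $k$ distinct identifiers, or adopt a convention that directly sets $\theta = 1$ when $n_A \le k$, matching Adaptive Sampling's level-$0$ behavior. This is a definitional corner the paper itself does not address, so it does not undermine the main equivalence, but the parenthetical as written is incorrect.
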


\subsection{Definition of $1$-Goodness}\label{sec:sufficient-condition}

The following circularity is a main source of technical difficulty in analyzing theta
sketches:
for any given identifier $\ell$ in a stream $A$, whether its hashed value $x_\ell = h(\ell)$
will end up in a sketch's sample set $S$ depends on a comparison of $x_\ell$ versus a threshold $T(X^{n_A})$ that
depends on 
$x_\ell$ itself.
Adapting a technique from \cite{cohen2009leveraging}, we partially break this circularity by analyzing 
the following infinite family of projections of a given threshold choosing function $T(X^{n_A})$. 

\begin{definition}[Definition of Fix-All-But-One Projection]\label{def:fabo-projection}
Let $T$ be a threshold choosing function.
Let $\ell$ be one of the $n_A$ unique identifiers in a stream $A$. Let $\xnml$ be a fixed assignment of 
hash values to all unique identifiers in $A$ {\em except} for $\ell$. 
Then the fix-all-but-one projection $T_\ell[\xnml](x_\ell) : (0,1) \rightarrow (0,1]$ of $T$
is the function that maps values of $x_\ell$ to theta-sketch thresholds via the definition
$T_\ell[\xnml](x_\ell) = T(X^{n_A}),$ where $X^{n_A}$ is the obvious combination of $\xnml$ and $x_\ell$.
\end{definition}
\cite{cohen2009leveraging} analyzed similar projections under the assumption
that the base algorithm is specifically (a weighted version of) KMV; we will instead 
impose the weaker condition that every fix-all-but-one projection satisfies $1$-Goodness, defined below.%
\footnote{We chose the name $1$-Goodness due to the reference to Fix-All-But-\emph{One} Projections.}

\begin{definition}[Definition of $1$-Goodness for Univariate Functions]\label{def:good-shape}
A function $f(x): (0,1) \rightarrow (0,1]$ satisfies $1$-Goodness iff there exists a fixed threshold $F$ such that:
\begin{align}
\text{If} \;\; x  <  F, & \;\; \mathrm{then} \;\; f(x)  = F. \label{shape-condition-a} \\
\text{If} \;\; x \ge F, & \;\; \mathrm{then} \;\; f(x) \le x. \label {shape-condition-b}
\end{align}
\end{definition}

\noindent Figure~\ref{fig:six-shape-plots}  contains six examples of hypothetical projections of TCF's. 
Four of them satisfy $1$-Goodness; the other two do not. 

\label{app:figures}
\begin{figure}
\begin{center}
\includegraphics[width=.5\linewidth]{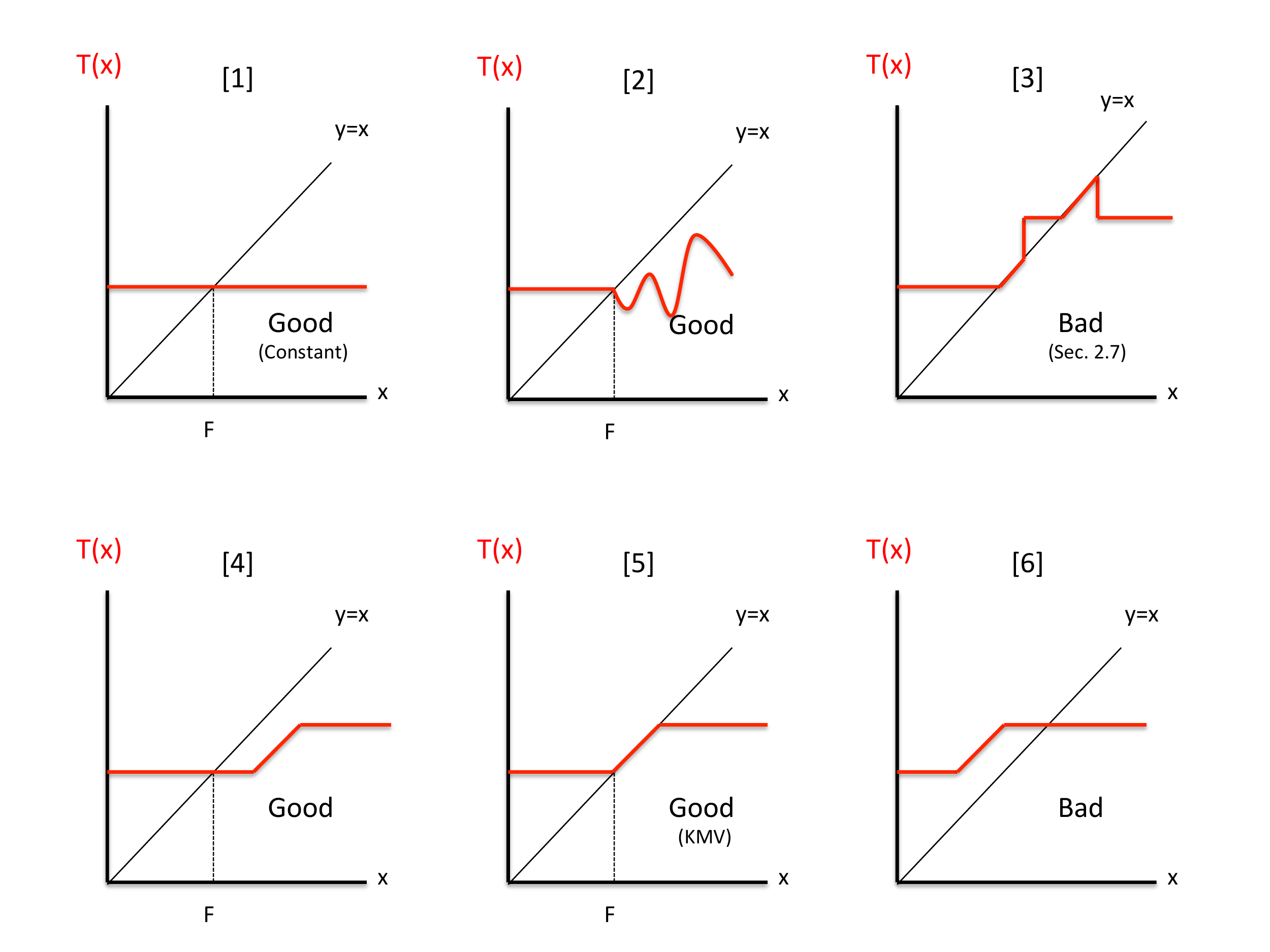}
\end{center}
\caption{Six examples of hypothetical projections of TCF's. Four of them satisfy $1$-Goodness; the other two do not.}
\label{fig:six-shape-plots}
\end{figure}


\begin{condition}[Definition of $1$-Goodness for TCF's]\label{key-condition}
A TCF $T(X^{n_A})$ satisfies $1$-Goodness iff for every stream $A$
containing $n_A$ unique identifiers, every label $\ell \in A$, and every fixed 
assignment $X^{n_A}_{-\ell}$ of hash values to the identifiers in $A \!\setminus\! \ell$,
the fix-all-but-one projection $\txn$ satisfies Definition~\ref{def:good-shape}.
\end{condition}

\subsection{TCF of $\romIKMV$ Satisfies $1$-Goodness}\label{sec:kmv-sat}
The following theorem shows that the TCF used in KMV satisfies $1$-Goodness. 
\begin{theorem}\label{thm:kmv-sat}
If $T(X^{n_A}) = m_{k+1}$, then every fix-all-but-one projection $\txn$ of $T$ satisfies $1$-Goodness.
\end{theorem}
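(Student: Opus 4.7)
Fix an identifier $\ell$ and an assignment $X^{n_A}_{-\ell}$, and let $y_1 < y_2 < \cdots < y_{n_A-1}$ denote the sorted hash values in $X^{n_A}_{-\ell}$. The plan is to exhibit an explicit $F$ depending on $X^{n_A}_{-\ell}$, and then do a direct case analysis on the position at which $x_\ell$ is inserted into the sorted list of $y_i$'s.

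Concretely, I would set
\[
F \;=\;\begin{cases} y_k & \text{if } n_A - 1 \ge k, \\ 1 & \text{otherwise,}\end{cases}
\]
and verify Definition~\ref{def:good-shape} in three cases. The degenerate case $n_A \le k$ is immediate: the combined stream $X^{n_A}$ has at most $k$ elements, so the $(k{+}1)^{\text{st}}$ smallest does not exist and by the usual KMV convention the TCF returns $1 = F$ for every $x_\ell$, making condition \eqref{shape-condition-a} trivial and \eqref{shape-condition-b} vacuous. The case $n_A = k+1$ is similar: the only candidates for $m_{k+1}$ are the $y_i$'s and $x_\ell$, so $T_\ell[X^{n_A}_{-\ell}](x_\ell) = \max(y_k, x_\ell)$, which equals $y_k = F$ when $x_\ell < F$ and equals $x_\ell$ when $x_\ell \ge F$.

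The main case is $n_A - 1 \ge k+1$. Here I would determine the $(k{+}1)^{\text{st}}$ smallest in the combined multiset $X^{n_A}_{-\ell} \cup \{x_\ell\}$ by locating $x_\ell$ relative to $y_k$ and $y_{k+1}$: (i) if $x_\ell < y_k$, then $x_\ell$ is among the smallest $k$ values of the combined set, so $m_{k+1} = y_k$; (ii) if $y_k < x_\ell < y_{k+1}$, then $x_\ell$ itself sits in position $k+1$, so $m_{k+1} = x_\ell$; (iii) if $x_\ell > y_{k+1}$, then $x_\ell$ is pushed beyond position $k+1$ and $m_{k+1} = y_{k+1}$. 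Setting $F = y_k$, case (i) yields $T_\ell[X^{n_A}_{-\ell}](x_\ell) = F$, matching \eqref{shape-condition-a}, while cases (ii) and (iii) yield $T_\ell[X^{n_A}_{-\ell}](x_\ell) \in \{x_\ell, y_{k+1}\} \le x_\ell$, matching \eqref{shape-condition-b}.

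There is essentially no hard step; the proof is a bookkeeping exercise. The only subtlety I anticipate is the handling of degenerate small-stream cases and the (measure-zero) possibility that $x_\ell$ coincides with some $y_i$, which I would dispatch by appealing to the continuity of the hash codomain $(0,1)$ so that ties occur with probability zero and may be broken arbitrarily without affecting the piecewise definition. Once these edge cases are dealt with, the three-way split above directly verifies Condition~\ref{key-condition}.
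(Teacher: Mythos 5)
Your proposal is correct and follows essentially the same route as the paper's proof: with $a=y_k$ and $b=y_{k+1}$ (the $k$'th and $(k{+}1)^{\text{st}}$ smallest values of $X^{n_A}_{-\ell}$), your main three-way case split and choice $F=y_k$ coincide exactly with the paper's argument. The only difference is that you additionally spell out the degenerate cases $n_A\le k$ and $n_A=k+1$, which the paper leaves implicit; that is harmless extra care, not a different approach.
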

\begin{proof}
Let $\txn$ be any specific fix-all-but-one-projection of $T(X^{n_A}) = m_{k+1}$.
We will exhibit the fixed value $\xhatfull$ that causes (\ref{shape-condition-a}) and (\ref{shape-condition-b})
to be true for this projection.
Let $a$ and $b$ respectively be the $k$'th and $(k\!+\!1)^{\text{st}}$
smallest hash values in $\xnml$. 
Then Subconditions (\ref{shape-condition-a}) and (\ref{shape-condition-b}) hold for
$\xhatfull = a$. There are three cases:
\begin{tighterdescription}
\item[Case $(x_\ell < a < b):\;\;$] 
In this case, 
$\txn = T(X^{n_A}) = m_{k+1} = a$. 
Since $x_\ell < (\xhatfull = a)$, 
(\ref{shape-condition-a}) holds because $(\txn = a) = \xhatfull$, and (\ref{shape-condition-b}) holds vacuously.
\item[Case $(a < x_\ell < b):$] 
In this case, 
$\txn = T(X^{n_A}) = m_{k+1} = x_\ell$. 
Since $ x_\ell \ge (\xhatfull = a)$, 
(\ref{shape-condition-b}) holds because  $(\txn = x_\ell) \le x_\ell$, and (\ref{shape-condition-a}) holds vacuously.
\item[Case $(a < b < x_\ell):\;\;$] 
In this case, 
$\txn = T(X^{n_A}) = m_{k+1} = b$. 
Since $ x_\ell \ge (\xhatfull = a)$, 
(\ref{shape-condition-b}) holds because  $(\txn = b) < x_\ell$, and (\ref{shape-condition-a}) holds vacuously.
\end{tighterdescription}
\end{proof}


\subsection{TCF of $\romIAdapt$ Satisfies $1$-Goodness}\label{sec:adapt-sat}
The following theorem shows that the TCF used in Adaptive Sampling satisfies $1$-Goodness.
\begin{theorem}\label{thm:adapt-sat}
If $T(X^{n_A}) = \beta^{i(m_{k+1})}$, 
then every fix-all-but-one projection $\txn$ of $T$ satisfies $1$-Goodness.
\end{theorem}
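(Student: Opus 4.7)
The plan is to mirror the structure of the proof of Theorem~\ref{thm:kmv-sat}, exploiting the fact that $\beta^{i(\cdot)}$ is a non-decreasing step function satisfying $\beta^{i(z)} < z$ for every $z \in (0,1)$. Fix a projection $\txn$ where $T(X^{n_A}) = \beta^{i(m_{k+1})}$. As in the KMV proof, let $a$ and $b$ denote the $k$'th and $(k{+}1)$'st smallest hash values in $\xnml$ respectively. I would propose the candidate fixed threshold
\[
\xhatfull \;=\; \beta^{i(a)},
\]
which is well-defined since $\beta \in (0,1)$ forces $\beta^i \to 0$, so some $\beta^i$ is less than $a$.

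Next, compute $m_{k+1}$ of the full assignment $\xn$ by inserting $x_\ell$ into the sorted list, obtaining the three-way case split inherited from the KMV proof: $m_{k+1} = a$ if $x_\ell < a$; $m_{k+1} = x_\ell$ if $a < x_\ell < b$; and $m_{k+1} = b$ if $x_\ell > b$. Applying $\beta^{i(\cdot)}$ yields $\txn = \beta^{i(a)}$, $\beta^{i(x_\ell)}$, or $\beta^{i(b)}$ respectively.

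I would then verify (\ref{shape-condition-a}) and (\ref{shape-condition-b}) with $\xhatfull = \beta^{i(a)}$ as follows. For condition (\ref{shape-condition-a}), if $x_\ell < \xhatfull = \beta^{i(a)}$, then in particular $x_\ell < a$ (since $\beta^{i(a)} < a$), so we land in the first case and $\txn = \beta^{i(a)} = \xhatfull$. For condition (\ref{shape-condition-b}), suppose $x_\ell \ge \xhatfull$. I would handle three sub-cases: (i) $\xhatfull \le x_\ell < a$, where $\txn = \beta^{i(a)} = \xhatfull \le x_\ell$; (ii) $a \le x_\ell < b$, where $\txn = \beta^{i(x_\ell)} < x_\ell$ by the defining property of $\beta^{i(\cdot)}$; and (iii) $x_\ell > b$, where $\txn = \beta^{i(b)} < b \le x_\ell$. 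In all sub-cases $\txn \le x_\ell$, as required.

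The only subtle point, and the one place where the proof is not a literal transcription of the KMV case, is sub-case (i) of condition (\ref{shape-condition-b}): because the Adaptive Sampling threshold ``rounds down'' past $a$ to $\beta^{i(a)}$, there is a non-trivial interval $[\xhatfull, a)$ on which the projection is constant and equal to $\xhatfull$ even though $x_\ell \ge \xhatfull$. This is handled automatically by choosing $\xhatfull = \beta^{i(a)}$ rather than $a$, but it is the one spot that deserves explicit verification. (As elsewhere in the paper, ties such as $x_\ell = a$ or $x_\ell = b$ are handled implicitly since continuous hash values coincide with probability zero.)
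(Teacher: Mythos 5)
Your proposal is correct and follows essentially the same argument as the paper's proof: the same choice of fixed threshold $F = \beta^{i(a)}$ (with $a$ the $k$'th smallest hash value in $\xnml$), and the same case analysis on the position of $x_\ell$ relative to $\beta^{i(a)}$, $a$, and $b$ (your Case~1 plus sub-cases (i)--(iii) correspond exactly to the paper's four cases). The extra remark about the interval $[\beta^{i(a)}, a)$ is precisely the paper's second case, so no new idea is needed beyond what the paper already does.
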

\begin{proof}
Let $\txn$ be any specific fix-all-but-one-projection of $T(X^{n_A}) = \beta^{i(m_{k+1})}$.
We will exhibit the fixed value $\xhatfull$ that causes (\ref{shape-condition-a}) and (\ref{shape-condition-b})
to be true for this projection.
Let $a$ and $b$ respectively be the $k$'th and $(k\!+\!1)^{\text{st}}$
smallest hash values in $\xnml$. 
Then Subconditions (\ref{shape-condition-a}) and (\ref{shape-condition-b}) hold for
$\xhatfull = \beta^{i(a)}$. There are four cases:

\begin{tighterdescription}
\item[Case $(x_\ell < \beta^{i(a)} < a < b):\;\;$]
$m_{k+1} = a$, so $\txn = \beta^{i(a)}$.
Since $x_\ell < (\xhatfull = \beta^{i(a)})$, (\ref{shape-condition-a}) holds because  $(\txn = \beta^{i(a)}) = \xhatfull$,
and (\ref{shape-condition-b}) holds vacuously.

\item[Case $(\beta^{i(a)} < x_\ell < a < b):\;\;$]
$m_{k+1} = a$, so $\txn = \beta^{i(a)}$.
Since $x_\ell \ge (\xhatfull = \beta^{i(a)})$, (\ref{shape-condition-b}) holds because  $(\txn = \beta^{i(a)}) < x_\ell$, 
and (\ref{shape-condition-a}) holds vacuously.

\item[Case $(\beta^{i(a)} < a < x_\ell < b):\;\;$]
$m_{k+1} = x_\ell$, so $\txn = \beta^{i(x_\ell)}$.
Since $x_\ell \ge (\xhatfull = \beta^{i(a)})$, (\ref{shape-condition-b}) holds because $(\txn = \beta^{i(x_\ell)}) < x_\ell$,
and (\ref{shape-condition-a}) holds vacuously.

\item[Case $\beta^{i(a)} < a < b < x_\ell):\;\;$]
$m_{k+1} = b$, so $\txn = \beta^{i(b)}$.
Since $x_\ell \ge (\xhatfull = \beta^{i(a)})$, (\ref{shape-condition-b}) holds because  $(\txn = \beta^{i(b)}) < b < x_\ell$,
and (\ref{shape-condition-a}) holds vacuously.
\end{tighterdescription}
\end{proof}

\subsection{$1$-Goodness Is Preserved by the Function $\thetaunion()$}\label{sec:union-sat}
Next, we show that if a framework instantiation's TCF $T$ satisfies $1$-Goodness, 
then so does the TCF $T^U$ that is implicitly being used by the theta-sketch construction 
algorithm defined by the composition of the instantiation's base sampling algorithms and the function $\thetaunion$().
We begin by formally extending the definition of a fix-all-but-one projection to cover 
the degenerate case where the label $\ell$ isn't actually a member of the given stream $A$.

\begin{definition}\label{def:extended-projection}
Let $A$ be a stream containing $n_A$ identifiers. Let $\ell$ be a label that is {\em not} a member of $A$. 
Let the notation $\xnml$ refer to an assignment of hash value to {\em all} identifiers in $A$.
For any hash value $x_\ell$ of the non-member label $\ell$,
define the value of the ``fix-all-but-one'' projection
$T_\ell[\xnml](x_\ell)$ to be the constant $T(\xnml)$.
\end{definition}

\begin{theorem}\label{thm-union-preserves-condition} \label{thm:preserved}
If the threshold choosing functions $T^{(j)}(\xnj)$ of the base algorithms used to create sketches 
of $m$ streams $A_j$ all satisfy Condition~\ref{key-condition},
then so does the TCF:
\begin{align}
T^U(\xon) = \min_j \{T^{(j)}(\xnj)\}\label{eqn-define-tcfu}
\end{align}
that is implicitly applied to the union stream via the composition of those base
algorithms and the procedure $\thetaunion$().
\end{theorem}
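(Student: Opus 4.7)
The plan is to construct, for any stream $U=\cup_j A_j$, any label $\ell$ in $U$, and any fixed $X^{n_U}_{-\ell}$, an explicit threshold $\xhatfullo$ that witnesses $1$-Goodness for the projection $\mathrm{T^U_\ell}[X^{n_U}_{-\ell}](x_\ell)$. The starting observation is that, since $X^{n_U}_{-\ell}$ already pins down the hash of every identifier in $U$ except $\ell$, it in particular pins down the hashes $X^{n_{A_j}}_{-\ell}$ of the identifiers in each constituent stream $A_j$ (here I am using Definition~\ref{def:extended-projection} for those $j$ with $\ell\notin A_j$, so that ``$X^{n_{A_j}}_{-\ell}$'' makes sense uniformly). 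Therefore the univariate function $\mathrm{T^U_\ell}[X^{n_U}_{-\ell}](x_\ell)$ factors as
\[
\mathrm{T^U_\ell}[X^{n_U}_{-\ell}](x_\ell) \;=\; \min_{j} \mathrm{T^{(j)}_\ell}[X^{n_{A_j}}_{-\ell}](x_\ell),
\]
where for each $j$ with $\ell \in A_j$ the inner function is an honest fix-all-but-one projection, while for each $j$ with $\ell \notin A_j$ it is constant in $x_\ell$ by Definition~\ref{def:extended-projection}.

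Next, I would apply $1$-Goodness of each $T^{(j)}$ (Condition~\ref{key-condition}) to the projection $\mathrm{T^{(j)}_\ell}[X^{n_{A_j}}_{-\ell}]$, obtaining a fixed threshold $F^j_\ell$ satisfying (\ref{shape-condition-a})--(\ref{shape-condition-b}); for $j$ with $\ell \notin A_j$ take $F^j_\ell$ to be the constant value of that projection, which trivially satisfies both conditions. Then define
\[
\xhatfullo \;:=\; \min_{j} F^j_\ell.
\]
The claim is that this $\xhatfullo$ witnesses $1$-Goodness of $\mathrm{T^U_\ell}[X^{n_U}_{-\ell}]$.

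To verify this, I would split on the two cases of Definition~\ref{def:good-shape}. For $x_\ell < \xhatfullo$ we have $x_\ell < F^j_\ell$ for every $j$, so (\ref{shape-condition-a}) applied to each projection (or the constancy for the non-$A_j$ case) gives $\mathrm{T^{(j)}_\ell}[X^{n_{A_j}}_{-\ell}](x_\ell) = F^j_\ell$ for all $j$, and taking the minimum yields exactly $\xhatfullo$, confirming (\ref{shape-condition-a}) for $T^U$. For $x_\ell \ge \xhatfullo$, pick any index $j^\star$ achieving the minimum $F^{j^\star}_\ell = \xhatfullo$. If $\ell \in A_{j^\star}$, then $x_\ell \ge F^{j^\star}_\ell$ gives $\mathrm{T^{(j^\star)}_\ell}[X^{n_{A_{j^\star}}}_{-\ell}](x_\ell) \le x_\ell$ by (\ref{shape-condition-b}); if $\ell \notin A_{j^\star}$, then that projection is identically $F^{j^\star}_\ell = \xhatfullo \le x_\ell$. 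Either way, the minimum over $j$ is at most $x_\ell$, confirming (\ref{shape-condition-b}) for $T^U$.

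There is no real obstacle beyond unifying the in-stream and not-in-stream cases under Definition~\ref{def:extended-projection}; the mild subtlety is making sure the witness $\xhatfullo$ is defined uniformly across all $j$, so that the minimum structure of $T^U$ can be exploited cleanly in both directions (identifying the minimizer $j^\star$ for the upper-bound case, and using the universal lower bound $F^j_\ell$ for the constant-equal-to-$\xhatfullo$ case).
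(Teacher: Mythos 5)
Your proposal is correct and follows essentially the same route as the paper's proof: decompose the projection of $T^U$ as the minimum of the per-stream projections (using Definition~\ref{def:extended-projection} for streams not containing $\ell$), take the witness threshold to be $\min_j F^j$, and verify Subconditions (\ref{shape-condition-a}) and (\ref{shape-condition-b}) by the same two-case argument. The only cosmetic difference is that you name a minimizing index $j^\star$ in the second case, whereas the paper simply uses any $j$ with $x_\ell \ge F^j$; the reasoning is identical.
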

\begin{proof}
Let $\tuxn$ be any specific fix-all-but-one projection of the threshold choosing function 
$T^U(\xnu)$ defined by Equation (\ref{eqn-define-tcfu}). We will exhibit the fixed value $F^U[\xnuml]$ that
causes (\ref{shape-condition-a}) and (\ref{shape-condition-b}) to be true for $\tuxn$.

The projection $\tuxn$ is specified by a label $\ell \in (A_U = \cup_j A_j)$, and a set $\xnuml$ 
of fixed hash values for the identifiers in 
\mbox{$A_U \!\! \setminus \! \ell$}.
For each $j$, those fixed hash values $\xnuml$ induce a set $\xnjml$ of 
fixed hash values for the identifiers in $A_j \! \setminus \! \ell$. 
The combination of $\ell$ and $\xnjml$ then specifies
a projection $\tjxn$ of $T^{(j)}(X^j)$. 
Now, if $\ell \in A_j$, this is a fix-all-but-one projection according to the original Definition \ref{def:fabo-projection},
and according to the current theorem's pre-condition, this projection must satisfy $1$-Goodness for univariate functions.
On the other hand, if $\ell \not \in A_j$, this is a fix-all-but-one projection according to the extended Definition 
\ref{def:extended-projection}, and is therefore a constant function, and therefore 
satisfies $1$-Goodness.
Because the projection $\tjxn$ 
satisfies $1$-Goodness
either way, there must exist a fixed value $F^j[\xnjml]$
such that Subconditions (\ref{shape-condition-a}) and (\ref{shape-condition-b}) are true for $\tjxn$. 

\noindent We now show that the value $\afuxn := \min_j (\afjxn)$
causes Subconditions (\ref{shape-condition-a}) and (\ref{shape-condition-b}) to be true for the projection
$\tuxn$, thus proving that this projection
satisfies $1$-Goodness.

\medskip 
\noindent {\bf To show:} 
$x_\ell < \afuxn \; \mathrm{implies} \; \atuxn = \afuxn $. 
The condition $x_\ell < \afuxn$ implies
that  for all $j$, $x_\ell < \afjxn$. Then, for all $j$, $\atjxn = \afjxn$ 
by Subcondition (\ref{shape-condition-a}) for the various $\atjxn$. Therefore,
$\afuxn = \min_j (\afjxn) = \min_j (\atjxn) = \atuxn$,
where the last step is by Eqn~(\ref{eqn-define-tcfu}).
This establishes Subcondition (\ref{shape-condition-a}) for the projection $\atuxn$. 

\medskip
\noindent {\bf To show:} 
$x_\ell \ge \afuxn \; \mathrm{implies} \; x_\ell \ge \atuxn $. Because $x_\ell \ge \afuxn = \min_j (\afjxn)$,
there exists a $j \;\text{such that} \; x_\ell \ge \afjxn$. 
By Subcondition (\ref{shape-condition-b}) for this $\atjxn$, we have $x_\ell \ge \atjxn$. By
Eqn~(\ref{eqn-define-tcfu}), we then have $ x_\ell \ge \atuxn $, thus establishing
Subcondition (\ref{shape-condition-b}) for $\atuxn$. 

Finally, because the above argument applies to every
projection $\tuxn$ of $T^U(\xon)$, we have proved the desired 
result that $T^U(\xon)$ satisfies condition~\ref{key-condition}.
\end{proof}


\subsection{Unbiasedness of $\estonsub$()}\label{sec:subpop-unbiased}
We now show that $1$-Goodness of a TCF implies that the corresponding instantiation 
of the Theta-Sketch Framework provides unbiased estimates of the number of unique identifiers
on a stream or on the union of multiple streams.
\begin{theorem}\label{thm-key-property-implies-unbiased-for-single-streams} \label{thm:unbiased}
Let $A$ be a stream containing $n_A$ unique identifiers, and let $P$ be a property evaluating to $1$ on an arbitrary subset of the identifiers.
Let $h$ denote a random hash function.
Let $T$ be a threshold choosing function that satisfies Condition~\ref{key-condition}. 
Let $(\theta,S_A)$ denote a sketch of $A$ created by $\mathrm{\samp}[T](k,A,h)$,
and as usual let $P(S_A)$ denote the subset of hash values in $S_A$ whose corresponding identifiers satisfy $P$.
Then 
$\ept_h\left(\romTS_{P, A}\right) := \ept_h\left(
\frac{|P(S_A)|}{\theta}
\right) = n_{P, A}.$
\end{theorem}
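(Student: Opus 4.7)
The plan is to decompose the estimator into a sum of per-item indicators and then analyze each one using the fix-all-but-one projection technique. Write
\[
\romTS_{P,A} \;=\; \frac{|P(S_A)|}{\theta} \;=\; \sum_{\ell \in A,\; P(\ell)=1} \frac{\mathbb{1}[x_\ell < \theta]}{\theta},
\]
since a hash $x_\ell$ lands in $S_A$ exactly when $x_\ell < \theta$. By linearity of expectation, it suffices to show that for each of the $n_{P,A}$ identifiers $\ell$ satisfying $P$, the per-item estimator $\mathbb{1}[x_\ell < \theta]/\theta$ has expectation exactly $1$. Summing then yields $\ept_h(\romTS_{P,A}) = n_{P,A}$ as desired.

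The key obstacle is the circularity already flagged in Section~\ref{sec:sufficient-condition}: the threshold $\theta = T(X^{n_A})$ depends on $x_\ell$ itself, so $\mathbb{1}[x_\ell < \theta]$ is not a simple event in $x_\ell$. To handle this, I would fix an arbitrary identifier $\ell$ with $P(\ell)=1$ and condition on the hash values $\xnml$ of all other identifiers in $A$. Conditional on $\xnml$, the threshold becomes the univariate function $f(x_\ell) := \txn$ of $x_\ell$ alone, and the per-item estimator reduces to $\mathbb{1}[x_\ell < f(x_\ell)]/f(x_\ell)$. By Condition~\ref{key-condition}, $f$ satisfies $1$-Goodness, so there exists a fixed value $\xhatfull$ (depending on $\xnml$ but not on $x_\ell$) for which conditions \eqref{shape-condition-a} and \eqref{shape-condition-b} hold.

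Now $x_\ell$ is uniform on $(0,1)$ independently of $\xnml$, so the conditional expectation of the per-item estimator equals
\[
\int_0^1 \frac{\mathbb{1}[x < f(x)]}{f(x)}\, dx.
\]
I would split this integral at $\xhatfull$. On $[0,\xhatfull)$, condition \eqref{shape-condition-a} gives $f(x) = \xhatfull > x$, so the indicator is $1$ and the integrand equals $1/\xhatfull$, contributing $\xhatfull \cdot (1/\xhatfull) = 1$. On $[\xhatfull,1)$, condition \eqref{shape-condition-b} gives $f(x) \le x$, so $x < f(x)$ fails, the indicator is $0$, and the contribution vanishes. Hence the conditional expectation is exactly $1$ for every fixed $\xnml$; taking a further expectation over $\xnml$ preserves this, and summing over the $n_{P,A}$ identifiers of $A$ satisfying $P$ completes the proof.

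I would close by noting that the same argument, applied instead to the composite TCF $T^U$ of Theorem~\ref{thm-union-preserves-condition}, immediately yields $\ept_h(\romTS_{P,U}) = n_{P,U}$ in the multi-stream setting, since $T^U$ also satisfies $1$-Goodness. No additional work is needed for the union case.
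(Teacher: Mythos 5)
Your proposal is correct and follows essentially the same route as the paper's own proof: decompose $\romTS_{P,A}$ into per-identifier estimates, condition on $\xnml$, invoke $1$-Goodness of the fix-all-but-one projection to obtain the fixed threshold $\xhatfull$, and integrate over $x_\ell$ to get conditional expectation $1$, finishing by linearity of expectation. Your explicit case split of the integral is just the ``straightforward exercise'' the paper leaves implicit, and your closing remark on the union case via Theorem~\ref{thm-union-preserves-condition} matches the paper's remark following the theorem.
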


\noindent Theorems \ref{thm-union-preserves-condition} and \ref{thm-key-property-implies-unbiased-for-single-streams}  together
imply that, in the multi-stream setting, the estimate $\romTS_{P, U}$ for $n_{P, U}$ output by the Theta-Sketch Framework is unbiased,
assuming the base sampling schemes $\samp_j$() each use a TCF $T^{(j)}$ satisfying $1$-Goodness.

\begin{proof}
Let $A$ be a stream, and let $T$ be a Threshold Choosing Function that satisfies $1$-Goodness. Fix any $\ell \in A$. For any assignment $\xn$ of hash values to identifiers in $A$,
define the ``per-identifier estimate'' $V_\ell$ as follows:

\begin{equation}\label{define-v-ell}
V_\ell(\xn) = \frac{S_\ell(\xn)}{T(\xn)}\quad \mathrm{where} \quad 
S_\ell(\xn) = \left\{
\begin{array}{l}
1 \;\mathrm{if}\; x_\ell < T(\xn) \\
0 \;\mathrm{otherwise}.
\end{array}\right.
\end{equation}

\noindent Because $T$ satisfies $1$-Goodness, there exists a fixed threshold $F(\xnml)$ for which it is
a straightforward exercise to verify that:

\begin{equation}\label{v-in-terms-of-f}
V_\ell(\xn) = \left\{
\begin{array}{l}
1 / F(\xnml)
\;\mathrm{if}\; x_\ell < F(\xnml) \\
0 \;\mathrm{otherwise}.
\end{array}\right.
\end{equation}

\noindent Now, conditioning on $\xnml$ and taking the expectation with respect to $x_\ell$:

\begin{equation} \label{expectedeq}
E(V_\ell | \xnml) = \int_0^1 V_\ell[\xn](x_\ell) dx_\ell = F(\xnml) \cdot \frac{1}{F(\xnml)} = 1.
\end{equation}

\noindent Since Equation \eqref{expectedeq} establishes that $E(V_\ell) = 1$ when conditioned on each $\xnml$, we also have
$E(V_\ell) = 1$ when the expectation is taken over all $\xn$.
By linearity of expectation, we conclude that
$E(\hat{n}_{P,A}) = \sum_{\ell \in A : P(\ell) = 1} E(V_\ell) = n_{P,A}.$
\end{proof}

\noindent{\textbf{Is $1$-Goodness Necessary for Unbiasedness?}}\label{section:negative-example}
Here we give an example showing
that $1$-Goodness cannot be substantially weakened while still guaranteeing unbiasedness of the estimate 
 $\romTS_{P, U}$
returned by the Theta-Sketch Framework. 
By construction, the following threshold choosing function causes the 
estimator of the Theta-Sketch Framework to be biased upwards. 
\begin{equation}
\mathrm{T}(X^{n_A}) = 
\left\{
\begin{array}{l}
m_k     \;\mathrm{if}\; \frac{k-1}{m_k} > \frac{k}{m_{k+1}} \\
m_{k+1} \;\mathrm{otherwise}
\end{array}\right.
\end{equation}
Therefore, by the contrapositive of Theorem~\ref{thm:unbiased},
it cannot satisfy Condition \ref{key-condition}. 
It is an interesting exercise to try to 
establish this fact directly. It can be done by exhibiting a specific
target size $k$, stream $A$, and partial assignment of hash values $\xnml$ 
such that no fixed threshold $\xhatfull$ exists that would satisfy
(\ref{shape-condition-a}) and (\ref{shape-condition-b}). Here is one such 
example: $k=3$, $h(A) = \{0.1, 0.2, 0.4, 0.7, x_\ell\}$. 
\begin{center}
\includegraphics[width=0.35\linewidth]{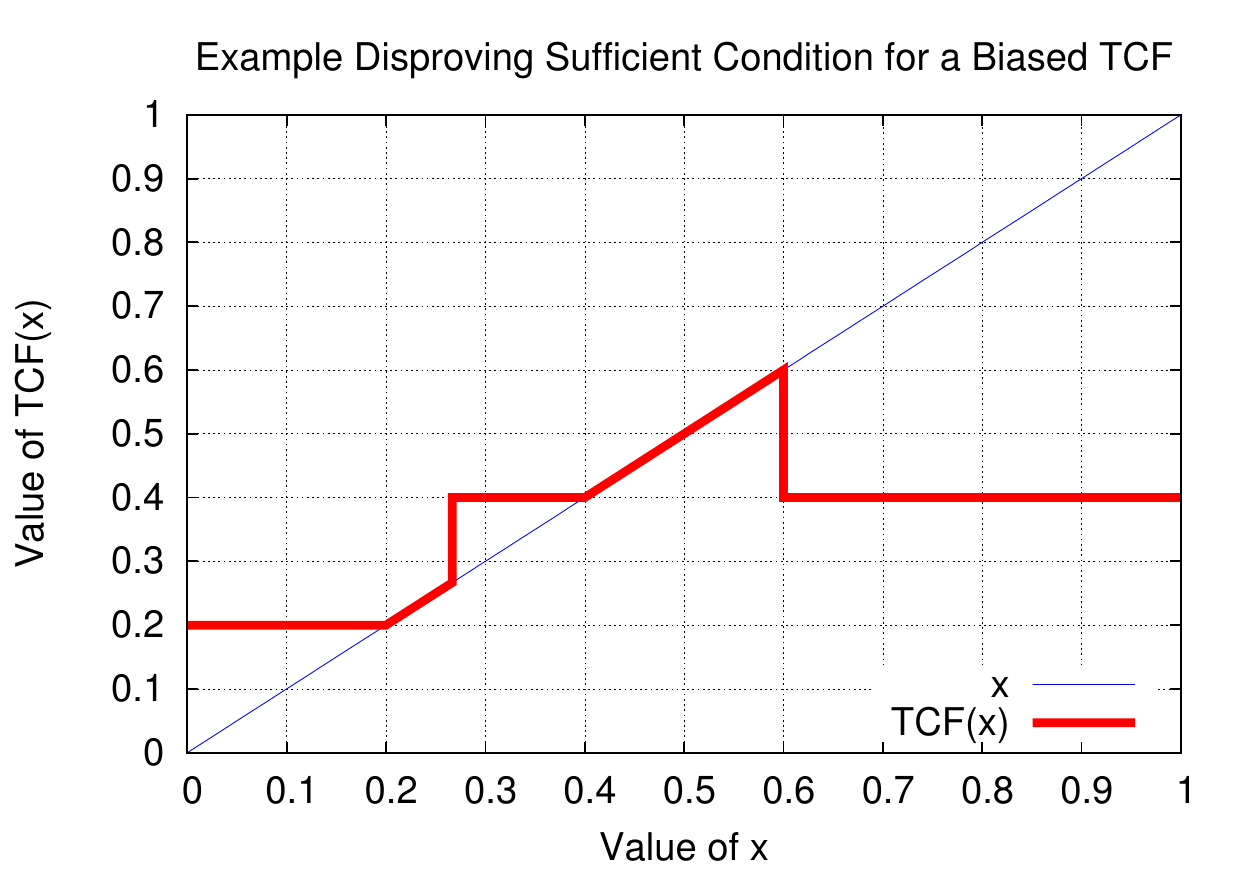}
\end{center}
The non-existence of the required fixed threshold is proved
by the above plot of $T(x_\ell)$. The only
value of $\xhatfull$ that would satisfy subcondition (\ref{shape-condition-a}) is 0.2.
However, that value does {\em not} satisfy (\ref{shape-condition-b}), 
because $T(x_\ell) > x_\ell$ for $8/30 < x_\ell < 0.4$.

\subsection{$1$-Goodness and Monotonicity Imply Variance Bound}
\label{sec:variance}

As usual, let $U=\cup_{i=1}^m A_i$ be the union of $m$ data streams. 
Our goal in this section is to identify conditions on a threshold choosing function which guarantee the following: whenever the Theta-Sketch Framework is instantiated with a TCF $T$ satisfying the conditions,
then for any property $P \subseteq [n]$, the variance $\sigma^2(\romTS_{P, U})$ of the estimator obtained from the Theta-Sketch Framework
is bounded above by 
the variance of the estimator obtained by running $\samp[T]$() on the stream $A^* := A_1 \circ A_2 \circ \dots \circ A_m$ obtained by concatenating $A_1, \dots, A_m$. 

It is easy to see that $1$-Goodness alone is not sufficient to ensure such a variance bound.
Consider, for example, a TCF $T$ that runs KMV 
on a stream $A$ unless it determines that $n_A \geq C$, for some fixed value $C$, at which point it sets $\theta$ to $1$ (thereby causing $\samp[T]$() to sample all elements from $A$). 
Note that such a base sampling algorithm is not implementable by a sublinear space streaming algorithm, but $T$ nonetheless satisfies $1$-Goodness.
It is easy to see that such a base sampling algorithm will fail to satisfy our desired comparative variance result when run on constituent streams $A_1, \dots, A_m$ satisfying $n_{A_i} < C$ for all $i$, and $n_U > C$. 
In this case, the variance of 
$\romTS_U$ will be positive, while the variance of the estimator obtained by running $\samp[T]$ directly on $A^*$ will be 0. 

Thus, for our comparative variance result to hold, we assume that $T$ satisfies both $1$-Goodness and the following additional monotonicity condition.

\begin{condition}[Monotonicity Condition] \label{mildcondition}
Let $A_0, A_1, A_2$ be any three streams, and let $A^* := A_0 \circ A_1 \circ A_2$ denote their concatenation. Fix any hash function $h$ and parameter $k$. Let $\theta=T(k, A_1, h)$, and $\theta'=T(k, A^*, h)$. Then $\theta' \leq \theta$. 
\end{condition}
\begin{theorem} \label{thm:variance}
Suppose that the Theta-Sketch Framework is instantiated with a TCF $T$ that satisfies Condition~\ref{key-condition} ($1$-Goodness), as well as Condition \ref{mildcondition} (monotonicity). Fix a property $P$, and
 let $A_1$,
\ldots $A_m$, be $m$ input streams.  Let $U = \cup A_j$ denote the union of the distinct labels in the input streams.
Let $A^* = A_1 \circ A_2 \circ \ldots \circ A_m$ denote the concatenation of the input streams.
Let $(\theta^*,S^*) = \samp[T](k,A^*,h)$, and let $\hat{n}^{A^*}_{P,A^*}$  denote the 
estimate of $n_{P,A^*}=n_{P, U}$ obtained by 
evaluating $\estonsub((\theta^*, S^*),P)$.
Let $(\theta^U,S^U) = ThetaUnion(\{(\theta_j,S_j)\})$, and 
let $\hat{n}^U_{P,U}$ 
denote the estimate of $n_{P,U} = n_{P,A^*}$ obtained by 
evaluating $\estonsub((\theta^U,S^U),P)$.
Then, with the randomness being over the choice of hash function $h$,
$\sigma^2(\hat{n}^U_{P,U}) \le \sigma^2(\hat{n}^{A^*}_{P,A^*}).$
\end{theorem}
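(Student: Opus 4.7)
My plan is to decompose both estimators as sums of per-item estimators, use the non-correlation of per-item estimates that follows from $1$-Goodness to reduce the problem to a per-item variance comparison, and then use monotonicity to compare the two fix-all-but-one thresholds.

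First, applying the per-item representation from the proof of Theorem~\ref{thm:unbiased}, and writing $V^*_\ell$ and $V^U_\ell$ for the per-item estimators of $T$ applied to $A^*$ and of $T^U$ (the TCF implicitly induced on the union through $\thetaunion$), respectively, we have $\hat{n}^{A^*}_{P,A^*} = \sum_{\ell \in A^*,\, P(\ell) = 1} V^*_\ell$ and $\hat{n}^{U}_{P,U} = \sum_{\ell \in U,\, P(\ell) = 1} V^U_\ell$. Since $U$ and $A^*$ share the same set of distinct identifiers, these two sums range over exactly the same collection of $\ell$'s, and by Theorem~\ref{thm:preserved} both $T$ and $T^U$ are $1$-Good.

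Second, I would invoke the non-correlation property of per-item estimators that follows from $1$-Goodness (as noted immediately after Definition~\ref{def:tsf}): for distinct labels $\ell \neq \ell'$, $\mathrm{Cov}(V_\ell, V_{\ell'}) = 0$. The underlying argument conditions on $\xnmltwo$ and applies $1$-Goodness iteratively in $x_{\ell_1}$ and $x_{\ell_2}$. With non-correlation in hand, the variance of each of the two estimators equals the sum of its per-item variances, so it suffices to show $\sigma^2(V^U_\ell) \le \sigma^2(V^*_\ell)$ for every $\ell$ with $P(\ell) = 1$.

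Third, I would express each per-item variance in terms of its fix-all-but-one threshold. By $1$-Goodness, conditioned on $\xnml$, the estimator $V_\ell$ takes the value $1/F_\ell$ on an interval of $x_\ell$-length $F_\ell$ and vanishes elsewhere; since $E[V_\ell \mid \xnml] = 1$, the law of total variance gives $\sigma^2(V_\ell) = E\!\left[1 / F_\ell(\xnml)\right] - 1$. Hence the comparison reduces to the pointwise inequality $F^*_\ell[\xnmlstar] \le F^U_\ell[\xnuml]$. To establish this, fix any such assignment of hash values and pick $x_\ell$ smaller than both fixed thresholds; by $1$-Goodness, the resulting $\theta$-values are exactly $F^*_\ell$ and $F^U_\ell$. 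For each $j$, Condition~\ref{mildcondition} applied with $A_0 := A_1 \circ \cdots \circ A_{j-1}$ and $A_2 := A_{j+1} \circ \cdots \circ A_m$ yields $\theta^* \le \theta_j$, and minimizing over $j$ gives $\theta^* \le \theta^U$, i.e., $F^*_\ell \le F^U_\ell$. Taking reciprocals and expectations, then summing over $\ell$, completes the proof.

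The main obstacle is the careful bookkeeping in this last step: verifying that the fix-all-but-one projections of $T^U$ and of $T$ on $A^*$ are indexed by the same hash-value assignment on $U \setminus \ell = A^* \setminus \ell$, and producing a common ``witness'' value of $x_\ell$ at which both projections attain their $F$-values so that monotonicity can be invoked at a single configuration. The non-correlation step is the other subtlety; isolating it as a standalone lemma would be cleaner than stringing its derivation into this proof.
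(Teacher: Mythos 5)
Your proposal is correct and follows essentially the same route as the paper: decompose both estimators into per-identifier estimates, use zero covariance of these estimates (which the paper proves by showing $1$-Goodness implies a bivariate ``$2$-Goodness'' property---exactly the iterate-the-univariate-condition argument you sketch) to reduce to per-item variances $E[1/F_\ell]-1$, and then prove the pointwise threshold comparison $F^*_\ell \le F^U_\ell$ by evaluating both TCFs at a witness $x_\ell$ below both thresholds and invoking Condition~\ref{mildcondition} with $A_j$ as the middle stream of the concatenation. The only cosmetic difference is that the paper applies monotonicity once to the index $j$ attaining $F^U_\ell = \min_j F^j_\ell$, whereas you apply it to every $j$ and take the minimum; both are fine.
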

The proof of Theorem \ref{thm:variance} is somewhat involved, and is deferred to Appendix \ref{app:variance}. 

\medskip
\noindent \textbf{On the applicability of Theorem \ref{thm:variance}.}
\label{sec:applicability}
It is easy to see that Condition \ref{mildcondition} holds for any TCF that is (1) order-insensitive and (2)
has the property that adding another distinct item to the stream cannot increase the resulting threshold $\theta$.
The TCF $T$ used in $\romIKMV$ (namely, $T(k,A,h) = m_{k+1}$), satisfies these properties, as does
the TCF used in Adaptive Sampling. Since we already showed that both of these TCF's satisfy $1$-Goodness, Theorem \ref{thm:variance} applies
to $\romIKMV$ and $\romIAdapt$.
In Section \ref{sec:pkmv}, we introduce the $\pIKMV$ algorithm, which is useful in multi-stream settings where the distribution of stream lengths is highly skewed, and we
show that Theorem \ref{thm:variance} applies to this algorithm as well. 


In Section \ref{sec:alpha},
we introduce the Alpha Algorithm and show that it satisfies $1$-Goodness. Unfortunately, the Alpha Algorithm 
does not satisfy monotonicity in general. 
The algorithm does, however, satisfy monotonicity under the promise that $A_1, \dots, A_m$ are pairwise disjoint, and Theorem \ref{thm:variance} applies in this case. 
Our experiments (Section \ref{app:webscope-experiment}) suggest that, in practice, the \edit{normalized} 
variance in the multi-stream setting is not much larger than in the pairwise disjoint case.
%
%
%
\subsection{Handling Set Intersections}\label{sec:intersections}
The Theta-Sketch Framework can be tweaked in a natural way to handle set intersection and other set operations, just as was the case
for $\romIKMV$ (cf. Section \ref{sec:IKMV}).
Specifically, define $\theta_U = \min_{j=1}^m \theta_j$, and $S_I =  \{(x \in \cap_j S_j) < \theta_U\}$.
The estimator for $n_{P, I}$ is $\romTS_{P, I} := |P(S_I)| / \theta_U$. 

It is not difficult to see that $\romTS_{P, I}$ is exactly equal to $\romTS_{P', U}$, where $P'$ is the property that evaluates to 1 on an identifier if and only if
the identifier satisfies $P$ and is also in $I$. Since the latter estimator was already shown to be unbiased with variance bounded as per Theorem \ref{thm:variance},
$\romTS_{P, I}$ satisfies the same properties. 

\eat{\subsection{Practical Notes}\label{sec:practical-notes}

It might appear from Algorithm~\ref{code:framework} that the Theta-Sketch Framework makes two passes over the 
input stream: one to compute $\theta$, and another one to compute $S$.  Actual instantiations of the framework (such as when
the base algorithms are KMV, Adaptive Sampling, pKMV, or the Alpha Algorithm) can
perform both operations in a single pass. 

Also, it is highly advantageous for real systems to be able to compute and save the results of 
sub-expressions of larger set expressions as soon as the necessary data becomes available. Because 
theta-sketches remember the sampling level via $\theta$, it is possible to simply drop
samples from $S$ that don't satisfy a set expression.\footnote{By contrast, the AKMV system
of \cite{beyer2009distinct} must keep those ``non-satisfying samples'' as placeholders, so it marks them using a
extra vector of $k$ bits that is included in each sketch.} An implementation of the
Theta-Sketch Framework can enable this ``sketches from set expressions''
feature by providing the following function SetOp2(). 
Arbitrary set expressions can then be handled in the obvious way using evaluation trees.

\begin{algorithmic}[1]{\small
\STATE Function SetOp2 ($(\theta_1,S_1)$, $(\theta_2,S_2)$, {\bf op} $\in \{\cup,\cap,\setminus\}$ )
\STATE $\theta_U \leftarrow \min (\theta_1, \theta_2)$.
\STATE $S_r \leftarrow \{ (x \in (S_1\; \mathrm{\bf op}\; S_2)) < \theta_U \}$.
\RETURN $(\theta_U,S_r)$.}
\end{algorithmic}
}
\subsection{The pKMV Variant of KMV}
\label{sec:pkmv}

\label{app:pkmv}
\noindent \textbf{Motivation.}
An internet company involved in online advertising typically faces
some version of the following problem: there is a huge stream of
events representing visits of users to web pages, and a huge
number of relevant ``profiles'', each defined by the combination of a predicate
on users and a predicate on web pages. On behalf of advertisers, the
internet company must keep track of the count of distinct users
who generate events that match each profile. The distribution (over profiles)
of these counts typically is highly skewed and covers a huge dynamic
range, from hundreds of millions down to just a few.

Because the summed cardinalities of all profiles is huge, the brute
force technique (of maintaining, for each profile, a hash table of distinct user ids) would use an impractical amount of space.  
A more sophisticated approach would be to run $\romIKMV$,
treating each profile as separate stream $A_i$. This effectively replaces each hash table in the brute force approach with a KMV sketch.
The problem with $\romIKMV$ in this setting is that, while KMV does avoid storing the entire data stream for streams containing more than $k$ distinct identifiers, KMV
produces no space savings for streams shorter than $k$. 
Because the vast majority of profiles contain only a few users, replacing the hash tables in the brute force approach by KMV
sketches might still use an impractical amount of space.

On the other hand, fixed-threshold sampling with $\theta = p$ for a suitable sampling rate $p$, would
always result in an expected factor $1/p$ saving in space, relative to storing the entire input stream. 
However, this method may result in too large a sample rate for long streams (i.e., for profiles satisfied by many users), also resulting in an impractical amount of space.

\noindent \textbf{The $\pIKMV$ algorithm.} In this scenario, the hybrid Threshold Choosing Function $T(k, A, h) =                                                                      
\min (m_{k+1}, p)$ can be a useful compromise, as it ensures that even short streams get downsampled by a factor of $p$,
while long streams produce at most $k$ samples. 
While it is possible to
prove that this TCF satisfies $1$-Goodness via a direct case analysis, the property
can also established by an easier argument: Consider a hypothetical
computation in which the $\thetaunion$ procedure is used to combine two
sketches of the same input stream: one constructed by KMV with
parameter $k$, and one constructed by fixed-threshold sampling with
parameter $p$.  Clearly, this computation outputs $\theta = \min                                                                           
(m_{k+1}, p)$. Also, since KMV and fixed-threshold sampling both
satisfy $1$-Goodness, and $\thetaunion$ preserves $1$-Goodness (cf. Theorem \ref{thm:unbiased}), $T$ also satisfies $1$-Goodness.

It is easy to see that Condition \ref{mildcondition} applies to $T(k, A, h) = \min (m_{k+1}, p)$ as well. Indeed,
$T$ is clearly order-insensitive, so it suffices to show that adding an additional identifier to the stream cannot
increase the resulting threshold. Since $p$ never changes,
the only way that adding another distinct item to the stream could increase the threshold would be
by increasing $m_{k+1}$. However, that cannot happen.

\section{Alpha Algorithm}
\label{sec:alpha}
\subsection{Motivation and Comparison to Prior Art}
Section \ref{sec:framework}'s
theoretical results 
are strong because they cover such a wide class of base sampling algorithms. In fact, $1$-Goodness even 
covers base algorithms that lack certain traditional properties such as invariance to permutations of the 
input, and uniform random sampling of the input. We are now going to take advantage of these strong
theoretical results for the Theta Sketch Framework by devising a novel base sampling algorithm that lacks those traditional
properties, but still satisfies $1$-Goodness. 
Our main purpose for describing our Alpha Algorithm in detail
is to exhibit the generality of the Theta-Sketch Framework. Nonetheless the Alpha Algorithm does have the following advantages relative to HLL, KMV, and Adaptive Sampling.

\medskip
\noindent \textbf{Advantages over HLL.} 
Unlike HLL,
the Alpha Algorithm provides unbiased estimates for
$\distinct_P$ queries for non-trivial predicates $P$. Also, when instantiating the Theta-Sketch Framework
via the Alpha Algorithm in the multi-stream setting, the error behavior scales better than HLL for general set operations (cf. Section \ref{sec:priorwork}). Finally,
because the Alpha Algorithm computes a sample, its output is human-interpretable and amenable to post-processing. 

\medskip
\noindent \textbf{Advantages over KMV.} Implementations of KMV must either use a heap data structure or quickselect \cite{quickselect61} to give quick access to the 
$k\!+\!1^{\text{st}}$
smallest unique
hash value seen so far. The heap-based implementation yields $O(\log k)$ update time, and quickselect, while achieving $O(1)$ update time, hides a large constant factor in the Big-Oh notation
(cf. Section \ref{sec:priorwork}).  
The Alpha Algorithm avoids the need for a heap or quickselect, yielding superior practical performance. 

\medskip
\noindent \textbf{Advantages over Adaptive Sampling.} The accuracy of Adaptive Sampling oscillates as $n_A$ increases. The Alpha Algorithm avoids this
behavior. 

\medskip
The remainder of this section provides a detailed analysis of the Alpha Algorithm. In particular, we show that it satisfies $1$-Goodness, and 
we give quantitative bounds on its variance in the single-stream setting. 
Later (see Section \ref{app:experiments}), we describe experiments showing that, in both the single- and multi-stream settings, the Alpha Algorithm
achieves a novel tradeoff between accuracy, space usage, update speed, and applicability. 

\paragraph{Detailed Section Roadmap.} 
Section \ref{sec:alpha-tcf} describes the 
threshold choosing function AlphaTCF that creates the instantiation
of the Theta Sketch Framework whose base algorithm we refer to as the Alpha Algorithm.
Section~\ref{sec:alpha-sat} establishes that AlphaTCF satisfies $1$-Goodness, implying, via 
Theorem~\ref{thm-key-property-implies-unbiased-for-single-streams}
that EstimateOnSubPopulation() is unbiased on single streams and on unions and intersections of streams in the framework
instantiation created by plugging in AlphaTCF.
 Section~\ref{sec:alpha-space} bounds the space usage of the Alpha Algorithm, as well as its variance in
 the single-stream setting. 
Section \ref{multistream} discusses
the algorithm's variance in the multistream setting. Finally, 
Section \ref{sec:hipsec} describes the HIP estimator derived from the Alpha Algorithm (see Section \ref{app:priorwork} for an introduction to HIP estimators).

\label{app:alpha}

\subsection{AlphaTCF}\label{sec:alpha-tcf}

Algorithm~\ref{code:alpha-tcf} describes the threshold choosing function AlphaTCF. AlphaTCF can be
viewed as a tightly interleaved combination of two different processes. One process uses the set $D$ to remove
duplicate items from the raw input stream; the other process uses \edit{uses a technique similar to} Approximate Counting \cite{morris1978counting}
to estimate the number of items in the de-duped stream created by the first process. In addition, the second
process maintains and frequently reduces a threshold $\theta = \alpha^i$ that is used by the first process to identify 
hash values that {\em cannot} be members of $S$, and therefore don't need to be placed in the de-duping set $D$,
thus limiting the growth of that set.

If the set $D$ is implemented using a standard dynamically-resized
hash table, then well-known results imply that the amortized cost\footnote{Recent
theoretical results imply that the update time can be made worst-case $O(1)$ \cite{deamortizedcuckoo1, deamortizedcuckoo2}.}
of processing each stream element is $O(1)$, and the space occupied by
the hash table is $O(|D|)$, which grows logarithmically with $n$.

However, there is a simple optimized implementation of the Alpha
Algorithm, based on Cuckoo Hashing, that implicitly, and at zero cost,
deletes all members of $D$ that are not less that $\theta$, and therefore
are not members of $S$ (see Section \ref{sec:experiments}). This does not affect correctness, because those 
deleted members will not be needed for future de-duping tests of hash
values that will all be less than $\theta$. Furthermore, in Theorem~\ref{space-theorem} below, it
is proved that $|S|$ is tightly concentrated around $k$. Hence, the
space usage of this optimized implementation is $O(k)$ with probability $1-o(1)$.

\begin{algorithm}[t]
\caption{The Alpha Algorithm's Threshold Choosing Function}\label{code:alpha-tcf}
\begin{algorithmic}[1]{\footnotesize
\STATE Function AlphaTCF (target size $k$, stream $A$, hash function $h$)
\STATE $\alpha \leftarrow k/(k+1)$.
\STATE prefix$(h(A)) \leftarrow$ shortest prefix of $h(A)$ containing exactly
       $k$ unique hash values.
\STATE suffix$(h(A)) \leftarrow$ the corresponding suffix.
\STATE $D \leftarrow$ the set of unique hash values in prefix$(h(A))$.  
\STATE $i \leftarrow 0$.
\FORALL{$x \in \mathrm{suffix}(h(A))$}
\IF{$x < \alpha^i$}  \label{code:alpha-tcf-line-8}
\IF{$x \not \in D$}  \label{code:alpha-tcf-line-9}
\STATE $i \leftarrow i + 1$. \label{code:alpha-tcf-line-10}
\STATE $D \leftarrow D \cup \{x\}$.
\ENDIF
\ENDIF
\ENDFOR
\RETURN $\theta \leftarrow \alpha^{i}$. \label{code:alpha-tcf-line-15}
}\end{algorithmic}
\end{algorithm}


\subsection{AlphaTCF Satisfies $1$-Goodness}\label{sec:alpha-sat}

We will now prove that AlphaTCF satisfies $1$-Goodness.

\begin{theorem}\label{thm:alpha-sat}
If $T(X^{n_A}) = $ AlphaTCF, then every fix-all-but-one projection $\txn$ of $T(X^{n_A})$ satisfies $1$-Goodness.
\end{theorem}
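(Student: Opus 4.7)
The plan is a case split on whether the fixed position of $\ell$ in the stream puts it among the first $k$ distinct identifiers (so $\ell$ lies in the prefix that initializes $D$ in Algorithm~\ref{code:alpha-tcf}) or strictly later (so $\ell$ lies in the suffix consumed by the main loop). In the prefix case, $x_\ell$ is unconditionally inserted into $D$ at initialization and the counter starts at $i = 0$; for any $x_\ell$ distinct from every other fixed hash value in $\xnml$, the behavior of the main loop does not depend on $x_\ell$, so $\txn$ equals a constant $F$ and both~(\ref{shape-condition-a}) and~(\ref{shape-condition-b}) hold trivially with $\xhatfull := F$.

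The substantive case is when $\ell$ lies in the suffix at some position $p$. Let $(i^*, D^*)$ denote the state of the algorithm just before $\ell$ is processed; because every hash value appearing at positions before $p$ is fixed by $\xnml$, this state is independent of $x_\ell$. The update at position $p$ fires iff $x_\ell < \alpha^{i^*}$. Define $\Delta(j)$ to be the number of additional updates triggered during the post-$\ell$ portion of the suffix when the main loop is resumed from state $(j, D^*)$; since the post-$\ell$ hashes are also fixed by $\xnml$, $\Delta$ depends only on $j$. Hence the algorithm's final threshold is $F := \alpha^{i^* + \Delta(i^*)}$ when $x_\ell \geq \alpha^{i^*}$ (the ``no update at $\ell$'' regime), and $F^* := \alpha^{i^* + 1 + \Delta(i^* + 1)}$ when $x_\ell < \alpha^{i^*}$ (the ``update at $\ell$'' regime). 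I would then propose $\xhatfull := F^*$ and verify both subconditions of Definition~\ref{def:good-shape} by partitioning $(0,1)$ into $(0, F^*)$, $[F^*, \alpha^{i^*})$, and $[\alpha^{i^*}, 1)$.

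The only nontrivial step is the monotonicity inequality $\Delta(j+1) \geq \Delta(j) - 1$, which ensures $F^* \leq F \leq \alpha^{i^*}$, and in turn that $x_\ell < F^*$ really does put us in the update regime. I would prove it by coupling: run the main loop in parallel on the same fixed post-$\ell$ sequence, starting from states $(j, D^*)$ and $(j+1, D^*)$, and maintain the invariant that their counter-difference always lies in $\{0, 1\}$. From difference $1$, either both runs update or neither does (preserving $1$), or only the lower-$i$ run updates (shrinking to $0$); the dual case is impossible because $\alpha < 1$ makes the higher-$i$ run's threshold strictly smaller. From difference $0$, both runs take identical actions forever after. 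So the final counters differ by at most $1$, which rearranges to the claimed inequality. After that, the three-case verification of~(\ref{shape-condition-a}) and~(\ref{shape-condition-b}) is immediate, and together with the prefix case this completes the proof.
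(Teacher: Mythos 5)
Your proof is correct and is, at its core, the same argument the paper gives: condition on the algorithm's state $(i^*,D^*)$ reached just before $\ell$ is read (the paper's $a$ is your $i^*$), observe that the final threshold is $\alpha^{i^*+1+\Delta(i^*+1)}$ in the update regime $x_\ell<\alpha^{i^*}$ and $\alpha^{i^*+\Delta(i^*)}$ otherwise (the paper's $b$ and $c$ are your $\Delta(i^*+1)$ and $\Delta(i^*)$), take $\xhatfull:=F^*=\alpha^{i^*+1+\Delta(i^*+1)}$, and verify Definition~\ref{def:good-shape} on the three intervals $(0,F^*)$, $[F^*,\alpha^{i^*})$, $[\alpha^{i^*},1)$, with the prefix case dispatched as a constant projection. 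The one genuine difference is that the step you single out as the only nontrivial one --- the coupled-runs inequality $\Delta(j+1)\ge\Delta(j)-1$, equivalently $F^*\le F$ --- is not needed, and the paper proves nothing of the sort. The three cases use only: (i) $F^*\le\alpha^{i^*+1}<\alpha^{i^*}$, which is immediate from $\Delta\ge 0$ and $\alpha<1$ and is all that is required for $x_\ell<F^*$ to land in the update regime; (ii) on $[F^*,\alpha^{i^*})$ the final threshold equals $F^*\le x_\ell$; and (iii) on $[\alpha^{i^*},1)$ the final threshold is $\alpha^{i^*+\Delta(i^*)}\le\alpha^{i^*}\le x_\ell$. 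Subcondition~(\ref{shape-condition-b}) only demands $\txn\le x_\ell$, never a comparison between the two regimes' thresholds, so the coupling lemma can simply be deleted. (Had it been needed, your sketch of it is also slightly too quick: after the counter difference drops to $0$ the two runs need not take identical actions forever, because their de-duplication sets $D$ may have diverged; one must add the observation that any hash value present in the low-counter run's set but absent from the high-counter run's set is at least the high-counter run's current, ever-shrinking threshold and so cannot trigger an update there either --- or restrict attention to duplicate-free hashed streams, which Line~\ref{code:alpha-tcf-line-9} justifies.)
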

\begin{proof}
Fix the number of distinct identifiers $n_A$ in $A$. Consider any identifier $\ell$ appearing in the stream,
and let $x=h(\ell)$ be its hash value. Fix the hash values of all other elements of the sequence of values $\xnml$.
We need to exhibit a threshold $F$ such that $x < F$ implies $T_\ell[\xnml](x_\ell)(x) = F$ and $x \ge F$ implies $T_\ell[\xnml](x) \le x$.

First, if $x$ lies in one of the first $k+1$ positions in the stream, then $T_\ell[\xnml](x)$ is a constant independent of $x$; in
this case, $F$ can be set to that constant.


Now for the main case, suppose that $\ell$ does not lie in one of the first $k+1$ positions of the stream.
Consider a subdivision of the hashed stream into the initial segment preceding $x=h(\ell)$, 
then $x$ itself, then the final segment that follows $x$. Because all hash values besides $x$ are fixed in $\xnml$,
during the initial segment, there is a specific number $a$ of times that 
$\theta$ is decreased. When $x$ is processed, $\theta$ is decreased either zero or one times,
depending on whether $x < \alpha^a$.
Then, during the final segment, $\theta$ will be decreased a certain number of additional times, where 
this number depends on whether $x < \alpha^a$. Let $b$ denote the number of additional times $\theta$
is decreased if $x < \alpha^a$, and $c$ the number of additional times $\theta$ is decreased otherwise.
This analysis is summarized in the following table:

\begin{center}
\begin{tabular}{|c|c|c|}
\hline
Rule & Condition on $x$ & Final value of $\theta$ \\
\hline
\edit{L} & $x < \alpha^a$ & $\alpha^{a + b + 1}$ \\
\hline
\edit{G} & $x \ge \alpha^a$ & $\alpha^{a + c + 0}$ \\
\hline
\end{tabular}
\end{center}

We prove the theorem using the threshold $F = \alpha^{a + b + 1}$. We note that
$F = \alpha^{a + b + 1} < \alpha^a$, so $F$ and $\alpha^a$ divide the range of $x$
into three disjoint intervals, creating three cases that need to be considered.

Case 1: $x < F < \alpha^a$. In this case, because $x < F$, we need to show that $T_\ell[\xnml](x) = F$.
By Rule \edit{L}, $T_\ell[\xnml](x) = \alpha^{a + b + 1} = F$.

Case 2: $F \le x < \alpha^a$. Because $x \ge F$, we need to show that $T_\ell[\xnml](x) \le x$. 
By Rule \edit{L}, $T_\ell[\xnml](x) = \alpha^{a + b + 1} = F \le x$.

Case 3: $F < \alpha^a \le x$. Because $x \ge F$, we need to show that $T_\ell[\xnml](x) \le x$. 
By Rule \edit{G}, $T_\ell[\xnml](x) = \alpha^{a + c + 0} \le \alpha^a \le x$.
\end{proof}

\subsection{Analysis of Alpha Algorithm on Single Streams}\label{sec:alpha-space}
The following two theorems show that the Alpha Algorithm's space usage
and single-stream estimation accuracy are quite similar to those of KMV.
That means that it is safe to use the Alpha Algorithm as a drop-in replacement
for KMV in a sketching-based big-data system, which then allows the system to benefit 
from the Alpha Algorithm's low update cost. See the Experiments in Section~\ref{app:experiments}.

\noindent \textbf{Random Variables.} When Line~\ref{code:alpha-tcf-line-15} of
Algorithm~\ref{code:alpha-tcf} is reached after processing a randomly hashed
stream, the program variable $i$ is governed by a random variable $\mathcal{I}$.
Similarly, when Line~\ref{code:framework-line-3} of Algorithm~\ref{code:framework}
is subsequently reached, the cardinality of the set $S$ is governed by a random 
variable $\mathcal{S}$. The following two theorems characterize the distributions
of $\mathcal{S}$ and of the Theta Sketch Framework's estimator $\mathcal{S}/(\alpha^\mathcal{I})$.
Specifically, Theorem \ref{space-theorem} shows that the number of elements sampled by the Alpha Algorithm is tightly concentrated around $k$,
and hence its space usage is concentrated around that of KMV.
Theorem \ref{thm:alpha-basic-variance} shows that the variance of the estimate returned by the Alpha Algorithm
is very close to that of KMV.
Their proofs are rather involved, and are deferred to Appendices \ref{appendix-proof-of-space-theorem} and
\ref{appendix-proof-of-alpha-basic-variance} respectively.

\begin{theorem}\label{space-theorem}
Let $\mathcal{S}$ denote the cardinality of the set $S$ computed by the Alpha Algorithm's Threshold Choosing Function (Algorithm~\ref{code:alpha-tcf}). Then:
\begin{align} 
\ept(\mathcal{S}) = & \; k. \\
\sigma^2(\mathcal{S}) < & \; \frac{k}{2} + \frac{1}{4}.
\end{align}
\end{theorem}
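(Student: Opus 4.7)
The plan is to identify an underlying Markov chain on $(\mathcal{I}_t, \mathcal{S}_t)$, where $\mathcal{S}_t$ denotes the size of $\{x \in h(A)_{\leq t}: x < \alpha^{\mathcal{I}_t}\}$ after $t$ distinct hashes have been processed, and then analyze its behavior by conditioning on the final counter $\mathcal{I}_n$. The key structural lemma I would prove is that, conditional on $(\mathcal{I}_t, \mathcal{S}_t) = (i, s)$, the $s$ hashes currently below $\alpha^i$ are i.i.d.\ uniform on $(0, \alpha^i)$. This is shown by induction on $t$, starting at the end of the prefix (where all $k$ hashes are i.i.d.\ uniform on $(0,1)$): upon an increment (probability $\alpha^i$), the fresh hash $X$ conditioned on $X < \alpha^i$ is uniform on $(0, \alpha^i)$ and joins the existing $s$ uniforms, and then each of the resulting $s+1$ values survives the narrowing of the threshold to $(0, \alpha^{i+1})$ independently with probability $\alpha$; on no increment, nothing changes. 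This yields the transition rule: from $(i,s)$, with probability $\alpha^i$ jump to $(i+1, \mathrm{Bin}(s+1, \alpha))$, and otherwise stay.

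Because the $\mathcal{I}$-transitions depend only on $\mathcal{I}$, the $\mathcal{I}$-process is autonomous, while $\mathcal{S}$ updates only at $\mathcal{I}$-increments. Consequently, conditional on $\mathcal{I}_n = j$ (regardless of when the $j$ increments occurred), $\mathcal{S}_n$ has the same distribution as $T_j$, where $T_0 = k$ and $T_j = \mathrm{Bin}(T_{j-1}+1,\alpha)$. Since $\alpha = k/(k+1)$ gives $\alpha(k+1) = k$, the affine map $x \mapsto \alpha(x+1)$ fixes $k$, so the recurrence $\ept(T_j) = \alpha(\ept(T_{j-1})+1)$ with $\ept(T_0)=k$ yields $\ept(T_j)=k$ for all $j$ by induction. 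This gives $\ept(\mathcal{S}_n \mid \mathcal{I}_n) = k$ and hence $\ept(\mathcal{S}) = k$.

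For the variance, the law of total variance gives $\sigma^2(\mathcal{S}_n) = \ept[\sigma^2(T_{\mathcal{I}_n})]$, since the conditional mean is constant in $\mathcal{I}_n$. The standard conditional-variance decomposition for $T_j = \mathrm{Bin}(T_{j-1}+1,\alpha)$ produces $\sigma^2(T_j) = \alpha^2 \sigma^2(T_{j-1}) + \alpha(1-\alpha)(k+1) = \alpha^2 \sigma^2(T_{j-1}) + \alpha$, using $(1-\alpha)(k+1) = 1$. Starting from $\sigma^2(T_0) = 0$, the resulting geometric recurrence gives $\sigma^2(T_j) = \alpha(1 - \alpha^{2j})/(1 - \alpha^2) < \alpha/(1-\alpha^2) = k(k+1)/(2k+1)$, and the elementary inequality $4k(k+1) < (2k+1)^2$ shows this is strictly less than $(2k+1)/4 = k/2 + 1/4$, uniformly in $j$. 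Taking expectation over $\mathcal{I}_n$ concludes $\sigma^2(\mathcal{S}) < k/2 + 1/4$.

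The main obstacle I expect is establishing the distributional invariant in the first paragraph; once it is in hand, everything else reduces to routine calculations on a simple two-coordinate Markov chain. One must be careful that the invariant is stated conditional on the current Markov state $(\mathcal{I}_t, \mathcal{S}_t)$ rather than on the full $\sigma$-algebra generated by the hashes: the transition dynamics depend only on the state plus an independent fresh uniform hash, so past non-state information is irrelevant to the forward evolution, and that is exactly what lets the $\mathrm{Bin}(s+1,\alpha)$ update be read off cleanly.
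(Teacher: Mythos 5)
Your proof is correct, and it reaches the theorem by a genuinely different route than the paper. You analyze the process dynamically: a Markov chain on $(\mathcal{I}_t,\mathcal{S}_t)$, the invariant that the below-threshold hashes are conditionally i.i.d.\ uniform on $(0,\alpha^{\mathcal{I}_t})$, the binomial-thinning transition $s \mapsto \mathrm{Bin}(s+1,\alpha)$ at increments, and then mean/variance recurrences plus the law of total variance. The paper instead argues statically: it conditions on the final value $\mathcal{I}=i$ \emph{and} on the exact set $\mathcal{J}^+=J$ of increment positions (Lemma~\ref{space-lemma}), shows that conditionally the per-position membership indicators are independent Bernoullis with success probabilities $\alpha^i$ for the $k$ initial positions, $0$ for non-increment positions, and $\alpha^i,\alpha^{i-1},\dots,\alpha$ for the increment positions, and sums means and variances, observing the answers depend only on $i$. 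The two arguments rest on the same conditional-uniformity fact and produce literally the same conditional law: your $T_j$ is $\mathrm{Bin}(k,\alpha^j)$ plus independent Bernoullis with probabilities $\alpha^j,\dots,\alpha$, and your recurrence value $\sigma^2(T_j)=\alpha(1-\alpha^{2j})/(1-\alpha^2)$ is exactly the paper's conditional variance $(\alpha-\alpha^{2i+1})/(1-\alpha^2)$. What your route buys is the elimination of positional bookkeeping (the map $f(p,J)$ and the three-way case analysis) in favor of two one-line recurrences; what the paper's route buys is an explicit independent-Bernoulli representation and, more importantly, the conditional second-moment formula $E(\mathcal{S}^2\mid\mathcal{I}=i)=k^2+(\alpha-\alpha^{2i+1})/(1-\alpha^2)$ in Equation~\eqref{s-eqn-5}, which is reused verbatim in the proof of Theorem~\ref{thm:alpha-basic-variance}, so the heavier conditioning pays off downstream. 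One presentational caution: to make ``conditional on $\mathcal{I}_n=j$, regardless of when the $j$ increments occurred'' airtight, condition on the full increment pattern (the analogue of the paper's $\mathcal{J}^+$) and note that the conditional law of $\mathcal{S}_n$ depends only on its cardinality, because the increment indicators are driven solely by $\mathcal{I}$ and fresh uniforms while the thinning variables are conditionally independent of them; your ``autonomy'' remark is the right idea but deserves that explicit sentence.
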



\begin{theorem}\label{thm:alpha-basic-variance}
Let $\mathcal{S}$ denote the cardinality of the set $S$ computed by the Alpha Algorithm's Threshold Choosing Function (Algorithm~\ref{code:alpha-tcf}). Then:
\begin{align}
\sigma^2(\mathcal{S}/(\alpha^\mathcal{I})) = & \frac{(2k+1)n_A^2 - (k^2+k)(2n_A-1)- n_A}{2k^2} \\
                         < & \frac{n_A^2}{k - \frac{1}{2}}.
\end{align}
\end{theorem}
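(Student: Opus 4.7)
The plan is to compute $\mathbb{E}[(\mathcal{S}/\alpha^\mathcal{I})^2]$ inductively in $n_A$ using the per-step dynamics of the Alpha Algorithm, and then subtract $n_A^2 = (\mathbb{E}[\mathcal{S}/\alpha^\mathcal{I}])^2$ (by unbiasedness from Theorem~\ref{thm:unbiased}) to obtain the variance. The base case is $n_A = k$, where $\mathcal{I} = 0$ and $\mathcal{S} = k$ deterministically, giving variance $0$ and matching the closed-form formula evaluated at $n_A = k$. For the inductive step, I denote by $(\mathcal{I}_t, D_t)$ the algorithm's state after processing the first $t$ distinct identifiers and write $\hat{n}_t := \mathcal{S}_t / \alpha^{\mathcal{I}_t}$ with $\mathcal{S}_t := |\{x \in D_t : x < \alpha^{\mathcal{I}_t}\}|$.

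The key technical ingredient is a \emph{conditional uniformity lemma}: conditionally on $\mathcal{I}_t$ together with the identities of the currently surviving items, the surviving hash values are independent and each uniform on $(0, \alpha^{\mathcal{I}_t})$. I would prove this by a direct posterior analysis: the state records which items in $D$ were added at which threshold level $\alpha^{i_j}$, and each such hash is a priori uniform on $(0, \alpha^{i_j})$ independently across items; filtering by the survival condition $x_j < \alpha^{\mathcal{I}_t}$ preserves independence and shrinks each support uniformly to $(0, \alpha^{\mathcal{I}_t})$. Given the lemma, the one-step dynamics simplify: writing $p := \alpha^{\mathcal{I}_t}$, with probability $p$ the next hash lies below $p$ and $\mathcal{I}$ increments, after which the new hash and each previously-surviving hash independently remain below the new threshold $\alpha p$ with probability $\alpha$. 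A direct conditional calculation then verifies $\mathbb{E}[\hat{n}_{t+1} \mid \mathcal{I}_t, \mathcal{S}_t] = \hat{n}_t + 1$, reconfirming that $\hat{n}_t - t$ is a martingale, and yields
\begin{equation*}
\mathrm{Var}(\hat{n}_{t+1} \mid \mathcal{I}_t, \mathcal{S}_t) = \frac{\mathcal{S}_t + k + 1}{k\, \alpha^{\mathcal{I}_t}} - 1.
\end{equation*}

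Taking unconditional expectations using $\mathbb{E}[\mathcal{S}_t/\alpha^{\mathcal{I}_t}] = t$ (from unbiasedness) together with the Morris-style identity $\mathbb{E}[\alpha^{-\mathcal{I}_t}] = t/k$ (which follows from $\mathbb{E}[\alpha^{-\mathcal{I}_{t+1}} - \alpha^{-\mathcal{I}_t} \mid \mathcal{I}_t] = (1-\alpha)/\alpha = 1/k$ and telescoping) gives $\mathbb{E}[\mathrm{Var}(\hat{n}_{t+1} \mid \text{state at } t)] = (2k+1)t/k^2 - 1$. The law of total variance then produces the recurrence $V(t+1) = V(t) + (2k+1)t/k^2 - 1$ for $V(t) := \mathrm{Var}(\hat{n}_t)$; summing from $t = k$ to $t = n_A - 1$ with $V(k) = 0$ yields the exact formula in the theorem, and the upper bound $n_A^2/(k - \tfrac{1}{2})$ follows by elementary algebra on the exact expression. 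The main obstacle I anticipate is the conditional uniformity lemma itself: the rejection rule $x \ge \alpha^i$ combined with the deduplication test creates dependencies between stream items and the current threshold that require careful accounting, particularly around transitions in which $\mathcal{I}$ increments and some previously-retained hashes drop out of the surviving set.
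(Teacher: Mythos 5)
Your proposal is correct, and it reaches the stated formula by a genuinely different route than the paper. The paper's proof (Appendix~\ref{appendix-proof-of-alpha-basic-variance}) conditions on the \emph{terminal} state: it computes $E(\mathcal{S}^2\mid\mathcal{I}=i)$ exactly by conditioning on the set $\mathcal{J}^+$ of increment positions (Lemma~\ref{space-lemma}), and then evaluates $E(\mathcal{Z}^2)=\sum_i \alpha^{-2i}\Pr(\mathcal{I}=i)E(\mathcal{S}^2\mid\mathcal{I}=i)$, which requires the exponential moments $g(1,k,u)=E(\alpha^{-\mathcal{I}})$ and $g(2,k,u)=E(\alpha^{-2\mathcal{I}})$ of the approximate-counting distribution, obtained from Flajolet's recurrence for $\Pr(\mathcal{I}=i;u)$ and guess-and-verify closed forms (Lemma~\ref{lemma:gqku-formulas}). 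You instead run a time-indexed one-step recursion: conditional uniformity of the surviving hashes given the current state, the one-step mean $\hat{n}_t+1$, the conditional variance $\frac{\mathcal{S}_t+k+1}{k\alpha^{\mathcal{I}_t}}-1$, and the law of total variance, needing only the first moments $E[\hat{n}_t]=t$ and $E[\alpha^{-\mathcal{I}_t}]=t/k$ at each step; I checked these intermediate formulas and the final summation, and they do reproduce the exact expression and the bound $n_A^2/(k-\tfrac12)$. The conditional uniformity lemma you flag as the main obstacle is real but surmountable by exactly the argument you sketch, and it is in essence the same conditioning the paper performs in Lemma~\ref{space-lemma}: given the increment-position set (a product event over independent coordinates), each retained hash is uniform on an interval containing $(0,\alpha^{\mathcal{I}_t})$, independence is preserved, and further conditioning on survival shrinks each support to $(0,\alpha^{\mathcal{I}_t})$; since this conditional law is the same for every compatible trajectory, it also holds under your coarser conditioning on $(\mathcal{I}_t,\mathcal{S}_t)$. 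What the two routes buy: yours is more self-contained and elementary (it never needs $E(\alpha^{-2\mathcal{I}})$ explicitly, and the martingale step re-derives unbiasedness rather than importing Theorem~\ref{thm:unbiased}), whereas the paper's explicit formulas for $g(1,k,u)$ and $g(2,k,u)$ are reused wholesale in the HIP analysis of Theorem~\ref{thm:hip}, so the paper's detour through the $g$ function pays for itself later. One housekeeping point to make explicit in a write-up: as the paper notes, duplicates are ignored by Line~\ref{code:alpha-tcf-line-9} of Algorithm~\ref{code:alpha-tcf}, so the reduction to a stream of $n_A$ distinct identifiers (which your induction implicitly assumes) should be stated up front.
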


\subsection{Variance of the Alpha Algorithm in the Multi-Stream Setting}
\label{multistream}
Unfortunately, the Alpha Algorithm does not satisfies monotonicity (Condition \ref{mildcondition}) in general,
and hence Theorem \ref{thm:variance} does not immediately imply variance bounds
in the multi-stream setting. In fact, we have identified contrived examples
in the multi-stream setting on which the variance of the Theta-Sketch Framework when instantiated with the TCF of the Alpha Algorithm
is slightly larger than the hypothetical estimator obtained by running the Alpha Algorithm on the concatenated stream $A_1 \circ \dots A_m$ (the worst-case
setting appears to be when $A_1 \dots A_m$ are all permutations of each other). 

However, we show in this section that the Alpha Algorithm does satisfy monotonicity under the promise that all constituent streams are pairwise disjoint.
This implies the variance guarantees of Theorem \ref{thm:variance} do apply to the Alpha Algorithm under the promise that $A_1, \dots, A_m$ are pairwise disjoint.
Our experiments in Section~\ref{app:webscope-experiment} suggest that, in practice, the \edit{normalized} variance of the Alpha Algorithm in the multi-stream setting is not much larger than in the pairwise disjoint case.

\begin{theorem}
The TCF computed by the Alpha Algorithm satisfies Condition \ref{mildcondition} under the promise that the streams $A_1, A_2, A_3$ appearing
in Condition \ref{mildcondition} are pairwise disjoint.
\end{theorem}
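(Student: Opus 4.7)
The plan is to reduce the theorem, under the pairwise-disjointness promise, to two monotonicity lemmas: prepending a disjoint stream can only decrease the Alpha threshold (``Prepending Lemma''), and the analogous statement for appending (``Appending Lemma''). Given these, for pairwise disjoint $A_0, A_1, A_2$, the Appending Lemma yields $\theta(A_1 \circ A_2) \le \theta(A_1)$, and the Prepending Lemma applied to $P := A_0$ and $Q := A_1 \circ A_2$ (disjoint since $A_0$ is disjoint from both $A_1$ and $A_2$) yields $\theta(A_0 \circ A_1 \circ A_2) \le \theta(A_1 \circ A_2)$; chaining delivers the conclusion.

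To prove the lemmas I would first rewrite AlphaTCF in a prefix-free form: process the hashed stream one item at a time, maintaining a dedup set $D$; for each $x$, add $x$ to $D$ iff $x \notin D$ and $x < \alpha^{\max(0,|D|-k)}$; return $\theta = \alpha^{\max(0,|D|-k)}$ at the end. This is equivalent to Algorithm~\ref{code:alpha-tcf} because while $|D| \le k$ the threshold $\alpha^0 = 1$ admits every unseen hash, so exactly the first $k$ distinct hashes populate the ``prefix'', and beyond that each addition increments the exponent by one just as in the original suffix phase. In this form $\theta$ is a non-increasing function of the final $|D|$, so it suffices to compare $|D|$ across runs; moreover, since the algorithm's behavior depends only on the sequence of first-appearance hashes, I may WLOG assume each stream consists of its distinct hashes in order of first appearance. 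The Appending Lemma follows by inspection: on input $Q \circ R$ with $Q,R$ disjoint, after processing $Q$ the state of $D$ coincides with its state on $Q$ alone, and processing $R$ only enlarges $D$.

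For the Prepending Lemma, let $D_P$ denote the final $D$ on input $P$, and let $d_Q(j)$ and $d_{PQ}(j)$ denote the number of the first $j$ hashes of $Q$ accepted when AlphaTCF runs on $Q$ alone and on $P \circ Q$, respectively. I would prove by induction on $j$ the invariant $|D_P| + d_{PQ}(j) \ge d_Q(j)$; the lemma then follows by taking $j = |Q|$ and invoking monotonicity of $\theta$ in $|D|$. Disjointness makes the dedup check at step $j$ identical in both runs (both ask only whether the hash of $q_j$ previously appeared in $Q[1..j{-}1]$), and the IH implies $\max(0, |D_P| + d_{PQ}(j-1) - k) \ge \max(0, d_Q(j-1) - k)$, so the $P \circ Q$ threshold is at most the $Q$-alone threshold. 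Consequently, acceptance in $P \circ Q$ forces acceptance in $Q$ alone, ruling out one of the four acceptance/rejection combinations. The ``both accept'' and ``neither accepts'' cases preserve the invariant trivially.

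The main obstacle is the single remaining case — $q_j$ accepted in $Q$ alone but rejected in $P \circ Q$ — since there $d_Q$ increments while $d_{PQ}$ does not. The key observation is that in this case the two thresholds must differ strictly, which forces $|D_P| + d_{PQ}(j-1) - k > 0$ (else both clipped exponents are $0$ and the thresholds coincide). Hence $|D_P| + d_{PQ}(j-1) > k \ge d_Q(j-1)$, which in particular yields strict inequality $|D_P| + d_{PQ}(j-1) > d_Q(j-1)$ in the IH, supplying exactly the $+1$ of slack needed to absorb $d_Q$'s increment and preserve the invariant.
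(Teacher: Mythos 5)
Your overall strategy is essentially the paper's: reduce Condition~\ref{mildcondition} to a prepending statement by observing that $\theta$ never increases during processing, then compare the run on $Q$ alone with the run on $P \circ Q$ position by position, using pairwise disjointness to make the de-duplication test have the same outcome in both runs, and maintain an inductive dominance invariant. Your reformulation of AlphaTCF with threshold $\alpha^{\max(0,|D|-k)}$ and the reduction to duplicate-free streams are both sound (the paper performs the same de-dup reduction), and you correctly identify that the only delicate case is ``accepted on $Q$ alone, rejected on $P \circ Q$.'' The paper runs the same induction directly on the increment counter $i$ of Algorithm~\ref{code:alpha-tcf}, with a careful indexing that aligns the two runs at the end of $A_2$'s prefix, and rules out the bad case by noting it could only occur when the two counters are equal, in which case both tests coincide; your invariant $|D_P| + d_{PQ}(j) \ge d_Q(j)$ is the mirror image of that argument.

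However, the justification you give for the delicate case contains a false inequality: you write $|D_P| + d_{PQ}(j-1) > k \ge d_Q(j-1)$, but $d_Q(j-1)$, the number of elements of $Q[1..j-1]$ accepted by the $Q$-alone run, is not bounded by $k$. Once that run has seen its first $k$ distinct hashes it keeps accepting further small hashes, so $d_Q(j-1)$ can be as large as $j-1$; e.g.\ take $P$ with many small distinct hashes (so the union run's exponent after $P$ is large) and $Q$ whose early elements are small enough that the $Q$-alone run has already accepted more than $k$ of them, and let $q_j$ lie between the two thresholds --- this realizes your problem case with $d_Q(j-1) > k$. The step is repairable within your framework in one line: in this case the two thresholds differ, and since the induction hypothesis gives $\max(0,|D_P|+d_{PQ}(j-1)-k) \ge \max(0,d_Q(j-1)-k)$, the clipped exponents must differ \emph{strictly}; the left exponent is then positive, hence equals $|D_P|+d_{PQ}(j-1)-k$, and $|D_P|+d_{PQ}(j-1)-k > \max(0,d_Q(j-1)-k) \ge d_Q(j-1)-k$ yields $|D_P|+d_{PQ}(j-1) \ge d_Q(j-1)+1$, which is exactly the slack you need --- no bound of $d_Q(j-1)$ by $k$ is required. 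With that correction your proof is complete and is essentially the same as the paper's.
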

\begin{proof}
Inspection of Algorithm~\ref{code:alpha-tcf} shows that the Alpha Algorithm never increases $\theta$ while processing a stream.
Therefore, processing $A_3$ after $A_2$ cannot increase $\theta$ above the value that it had at the end of processing
$A_2$. Hence, it will suffice to prove that $T(A_2) \ge T(A_1 \circ A_2)$. Referring to Line~\ref{code:alpha-tcf-line-15} of the pseudocode, we see
that $\theta = \alpha^I$, where $I$ is the final value of the program variable $i$, so it suffices to
prove that $I(A_2) \le I(A_1 \circ A_2)$. 

We will compare two execution paths of the Alpha Algorithm. The first path results from processing
$A_2$ by itself. The second path results from processing $A_1 \circ A_2$. We will now
index the sequence of hash values of $h(A_1 \circ A_2)$ in a special way: $x_0$ will be the
first hash value that reaches Line~\ref{code:alpha-tcf-line-8} of the pseudocode during the first execution path (where $A_2$ is
processed by itself). Elements of $h(A_1 \circ A_2)$ that follow $x_0$ will be numbered $x_1, x_2, \ldots$, while elements
of $h(A_1 \circ A_2)$ that precede $x_0$ will be numbered $\dots, x_{-2}, x_{-1}$. We remark that
the boundary between negative and positive indices does not coincide with the boundary between $A_1$ and $A_2$.

For $j \ge 0$, let
$I(j)$ denote the value of the program variable $i$ immediately before processing the hash value
$x_j$ on the first execution path ($A_2$ alone), and let $I'(j)$ denote the same quantity for the second
execution path ($A_1 \circ A_2$). We will prove by induction that for all $j \geq 0$, $I(j) \le I'(j)$.
The base case is trivial: by construction of our indexing scheme, at position $0$, execution
path one has had no opportunities yet to increment $i$, while execution path two might have had some opportunities to increment $i$.
Hence $I(0) = 0$ while $I'(0) \ge 0$. 

Now for the induction step. At position $j$, $I(j) \le I'(j)$, and the two values of $i$
are both integers, so the only possible way for $I(j+1) > I'(j+1)$ to occur would be for $I(j) = I'(j)$, and for the
tests at
Line~\ref{code:alpha-tcf-line-8}
and
Line~\ref{code:alpha-tcf-line-9} of the pseudocode to both {\em pass} on the first execution path, while at least one of them
{\em fails} on the second execution path. However, the test in Line~\ref{code:alpha-tcf-line-8} must have the same outcome for both paths,
since they are comparing the same hash value $x_j$ against the same threshold $\alpha^i = \alpha^{i'}$. Also, given the
assumption that $A_1$ and $A_2$ are disjoint, the ``novelty test'' in Line~\ref{code:alpha-tcf-line-9} is
determined solely by novelty within
$A_2$. Hence, it must have the same outcome on both paths. We conclude that it is impossible for $i$ to be incremented on the
first path but not on the second path, so $I(j+1) > I'(j+1)$ is impossible.
\end{proof}

\subsection{HIP estimator} 
\label{sec:hipsec} 
For single streams, the HIP estimator (see Section \ref{app:priorwork} for an introduction to HIP estimators) derived from the Alpha 
Algorithm turns out to equal $k/\alpha^i$.
This estimator does not involve the size of the sample set $S$, and is therefore not the same thing
as the estimator $|S|/\alpha^i$ derived by instantiating the Theta-Sketch Framework with 
the Alpha Algorithm.
The following theorem shows that the variance bound of the HIP estimator guaranteed by
Theorem \ref{thm:hip} is smaller than the variance bound for the
vanilla Alpha Algorithm (cf. Theorem \ref{thm:alpha-basic-variance}) by a
factor of 2. It 
can be proved by using the analysis of Approximate Counting
of \cite{morris1978counting,flajolet1985approximate},
or by using the analysis of HIP estimators of \cite{cohennew, ting}. To 
keep the paper self-contained, Appendix \ref{app:hip} contains
a proof of this result that utilizes several immediate results developed in the proof of
Theorem \ref{thm:alpha-basic-variance}.

\begin{theorem} \label{thm:hip} Let $n_A$ denote the number of distinct elements in stream $A$.
If $\alpha^i = \mathrm{AlphaTCF}(k,A,h)$, then: 
\begin{align*}
\ept(k/\alpha^i) = & n_A, \\
\sigma^2(k/\alpha^i) = & \frac{n_A^2-2n_Ak+k^2-n_A+k}{2k} < \frac{n_A^2}{2k}, \\
\mathrm{S.E.}(k/\alpha^i) < & 0.708 / \sqrt{k}.
\end{align*}
\end{theorem}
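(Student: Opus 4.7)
The plan is to reduce the analysis to a clean Markov chain that is essentially Morris's Approximate Counting with base $\alpha = k/(k+1)$. After the initial $k$ distinct identifiers have seeded the set $D$, let $J_t$ denote the value of the program variable $i$ just after the $(k+t)$-th distinct identifier in the stream has been processed, so that $J_0 = 0$ and $\mathcal{I} = J_{n_A - k}$. The first key observation I would justify is that $(J_t)_{t \geq 0}$ is a Markov chain satisfying $P(J_{t+1} = j+1 \mid J_t = j) = \alpha^j$ and $P(J_{t+1} = j \mid J_t = j) = 1 - \alpha^j$. This uses two facts about the pseudocode: a duplicate hash always fails the test on line~\ref{code:alpha-tcf-line-9}, so only first appearances of distinct identifiers can move $i$; and because $\alpha^i$ is monotonically nonincreasing, a hash that fails line~\ref{code:alpha-tcf-line-8} once will fail it forever, so each distinct identifier has exactly one opportunity to increment $i$ (its first appearance), where its uniform hash is independent of $J_t$.

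Given the Markov chain, I would analyze $Y_t := \alpha^{-J_t}$ and $Z_t := Y_t^{2}$ via one-step recurrences. For the first moment, a direct computation conditional on $J_t = j$ gives
\[
E[Y_{t+1} \mid J_t] \;=\; \alpha^{j}\,\alpha^{-(j+1)} + (1-\alpha^{j})\,\alpha^{-j} \;=\; Y_t + \bigl(\alpha^{-1} - 1\bigr) \;=\; Y_t + \tfrac{1}{k},
\]
using $\alpha^{-1} - 1 = 1/k$. Iterating from $Y_0 = 1$ over $n_A - k$ steps yields $E[Y_{n_A-k}] = n_A/k$, and multiplying by $k$ gives the unbiasedness claim $E[k/\alpha^{\mathcal{I}}] = n_A$. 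The same one-step argument applied to $Z_t$ gives $E[Z_{t+1}\mid J_t] = Z_t + \frac{2k+1}{k^{2}}\,Y_t$, and telescoping with the closed form for $E[Y_t]$ produces an explicit formula for $E[Z_{n_A-k}]$. Computing $\sigma^{2}(k/\alpha^{\mathcal{I}}) = k^{2}E[Z_{n_A-k}] - n_A^{2}$ and simplifying the arithmetic collapses to the stated $\frac{n_A^{2} - 2n_A k + k^{2} - n_A + k}{2k}$. The upper bound $n_A^{2}/(2k)$ follows by noting that $-2n_A k + k^{2} - n_A + k \leq 0$ whenever $n_A \geq k$ (which is the regime where the estimator is nontrivial; the small-$n_A$ case can be handled separately), and the standard-error claim $\mathrm{S.E.} < 0.708/\sqrt{k}$ is immediate from $\sqrt{n_A^{2}/(2k)}/n_A = 1/\sqrt{2k}$ together with $1/\sqrt{2} < 0.708$.

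The main obstacle is conceptual rather than computational: cleanly justifying the reduction to the Markov chain $(J_t)$ with the stated transition law. One must argue carefully that the novelty set $D$ never interferes with the analysis, because the only "new" identifiers that matter are those whose hash is being seen for the first time, and for those the conditional distribution of the hash given the past is exactly uniform on $(0,1)$. Once this reduction is in hand, the recurrences for $E[Y_t]$ and $E[Z_t]$ are the same mean/variance recurrences that already appear in the proof of Theorem~\ref{thm:alpha-basic-variance}, so the calculation is essentially a bookkeeping exercise already done there; the novelty of the HIP estimator $k/\alpha^{\mathcal{I}}$ relative to $|S|/\alpha^{\mathcal{I}}$ is that it removes the randomness coming from $|S|$, which is exactly what yields the factor-of-$2$ improvement in the variance bound.
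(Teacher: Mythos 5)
Your proposal is correct and follows essentially the same route as the paper: the paper's Appendix proof evaluates $E(k/\alpha^{\mathcal I}) = k\,g(1,k,u)$ and $\sigma^2(k/\alpha^{\mathcal I}) = k^2 g(2,k,u) - n_A^2$ with $g(q,k,u) = E(\alpha^{-q\mathcal I})$ and $u = n_A - k$, and the recurrence it uses for $g$ (Lemma \ref{lemma:gqku-recurrences}, derived from the Morris/Flajolet distributional recurrence for $\mathcal I$) is exactly your one-step identities $E[Y_{t+1}\mid J_t] = Y_t + \tfrac{1}{k}$ and $E[Z_{t+1}\mid J_t] = Z_t + \tfrac{2k+1}{k^2}Y_t$, solved there by verifying closed forms by induction rather than by telescoping. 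Your justification that only first appearances of distinct identifiers can increment $i$ (so the chain has transition probability $\alpha^{J_t}$, with the fresh hash uniform and independent of the current counter) is the same reduction the paper invokes more tersely via its ``duplicates are ignored'' remark, and your arithmetic, giving $\sigma^2 = u(u-1)/(2k)$ and the bound $1/\sqrt{2k} < 0.708/\sqrt{k}$, matches the paper's.
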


\eat{\subsection{Per-Position Sampling Probabilities}\label{sec:alpha-unconventional}

Let $A$ be a pre-uniquified stream of length $n$. 
When the Alpha Algorithm is used to map this stream to a theta-sketch $(\theta, S)$
using a randomly chosen hash function, then the hash value of the label located in each position of the stream has
a specific probability of ending up in the set $S$. These ``per-position sampling probabilities'' are not equal,
and whatever formula they obey does not appear to be simple. However, it is possible
to carve up the full probability space into equivalence classes that enable an exhaustive combinatorial calculation
yielding the exact rational values of these sampling probabilities for specific values of $n$ and $k$.
For example, when $n=8$ and $k=3$:

\begin{center}
{
\begin{tabular}{|c|c|}
\hline
pos & SamplingProbability(pos) \\
\hline
1-4 & 423681879 / 1073741824 \\
5   & 405084276 / 1073741824 \\
6   & 388441584 / 1073741824 \\
7   & 373366080 / 1073741824 \\
8   & 359606016 / 1073741824 \\
\hline
\end{tabular}}
\end{center}


Observe that the first $k+1$ probabilities are equal, and are all greater than $k/n$.
After that, the probabilities decrease monotonically to a final probability that is less than $k/n$.
We conjecture that this same pattern occurs for any $n$ and $k$. 

Also, observe that Probability(first position) / Probability(last position) = 
$423681879 /359606016 < 4/3 = (k+1)/k $. The following theorem states that this is true for any $n$ and $k$.


\begin{theorem}
The ratio between the Alpha Algorithm's sampling probabilities for the first and last positions of a pre-uniquified
stream does not exceed $(k+1)/k$.
\end{theorem}
\begin{proof}
See extended version of this paper.
\end{proof}
 
}

\section{Experiments}
\subsection{Single-Stream Experiments Using Synthetic Data}
\label{sec:experiments}
\label{app:experiments}


In this section we describe experiments using synthetic data
showing that implementations of KMV, Adaptive Sampling, and the Alpha Algorithm can provide different tradeoffs
between time, space, and accuracy in the single-stream setting.
All three implementations take advantage of a version of cuckoo hashing
that treats as empty all slots containing hash values that are not
less than the current value of $\theta$.

The code for our streaming implementation of the Alpha Algorithm
closely resembles the pseudocode presented as Algorithm~\ref{code:alpha-tcf}.
The de-duping set $D$ is stored in a cuckoo hash table that uses the just-mentioned self-cleaning
trick. Hence $D$ is in fact {\em always} equal to $S$, with no extra
work needed in the form of table rebuilds or explicit delete operations.

Our implementation of Adaptive Sampling uses the same self-cleaning
hash tables, but has a different rule for reducing $\theta$: multiply by 1/2
each time $|S|$ reaches a pre-specified limit. Again, no delete operations
or table rebuilds are needed, but this program needs to scan the table
after each reduction in $\theta$ to discover the current size of $|S|$.

Finally, our implementation of KMV again uses the same self-cleaning
hash tables, but it also uses a heap to keep track of the current
value of $\theta = m_{k+1}$. Hence it either uses more space than
the other two algorithms, or it suffers from a reduction in
accuracy due to sharing the space budget between the hash table
and the heap. Also, it is slower than the other two algorithms
because it performs heap operations in addition to hash table
operations.

Our experiments compare the speed and accuracy on single streams
of these implementations of the three algorithms. 
Accuracy was evaluated using the metric $\sqrt{\mathrm{meanSquaredError}}/n_A$,
measured during the course of 1 million runs of each algorithm.
We employed two different sets of experimental conditions. 

First, we compare under ``equal-$k$'' conditions, in which all three
algorithms aim for $|S| = t/2$, where $t=2^{16}$ denotes the size of
the hash table. Adaptive Sampling is configured to oscillate between
roughly $|S| = (1/3)t$ and $|S| = (2/3)t$.  We remark that KMV
consumes more space than the other two algorithms under these
conditions because of its heap.

Second, we compare under ``equal-space'' conditions reflective of a
live streaming system that needs to limit the amount of memory
consumed by each sketch data structure. Under these conditions, KMV is
forced to devote half of its space budget to the heap, while both
Adaptive Sampling and the Alpha Algorithm are free to employ
parameters that cause their hash tables to run at occupancy levels
well over 1/2.  In detail, for KMV $|S|=(2/5)t$, for the Alpha
Algorithm $|S|=(4/5)t$, while Adaptive Sampling oscillates between
roughly $|S| = (2/5)t$ and $|S| = (4/5)t$.

Experimental results are plotted in Figure~\ref{fig:base-algo-tradeoffs}. Two things
are obvious. First, the heap-based implementation of KMV
is much slower than the other two algorithms. Second, the error curves
of Adaptive Sampling have a strongly oscillating shape that can be 
undesirable in practice.

Under the equal-$k$ conditions, the error curves of KMV and the Alpha
Algorithm are so similar that they cannot be distinguished from each
other in the plot. However, under the equal space conditions, the
Alpha Algorithm's ability to operate at a high, steady occupancy level (of
the hash table) causes its error to be the lowest of the three
algorithms. This high, steady occupancy level also causes the Alpha Algorithm
to be slightly slower than Adaptive sampling under these conditions,
even though the latter needs to re-scan the table periodically, 
while the Alpha Algorithm does not.

\begin{figure}
\begin{center}
\includegraphics[width=0.45\linewidth]{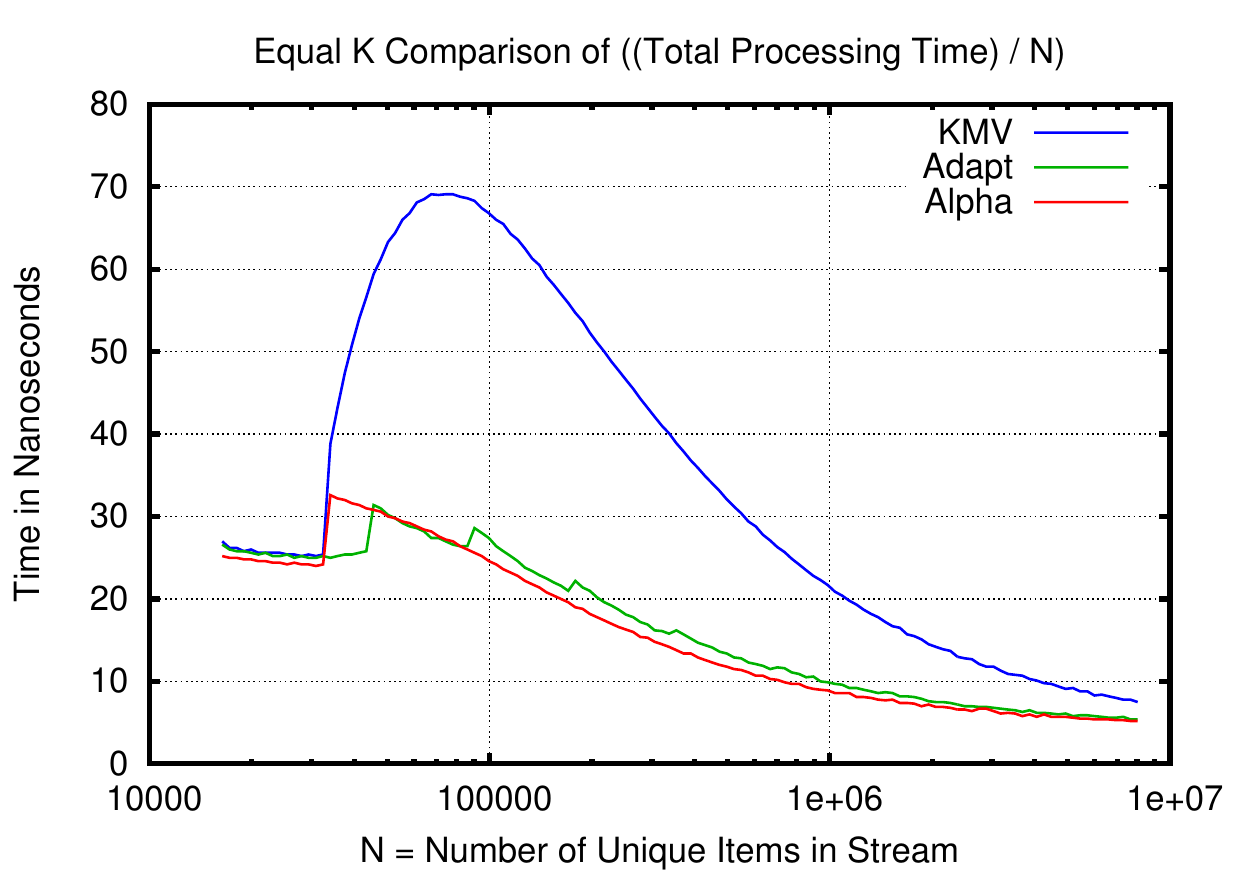}\quad
\includegraphics[width=0.45\linewidth]{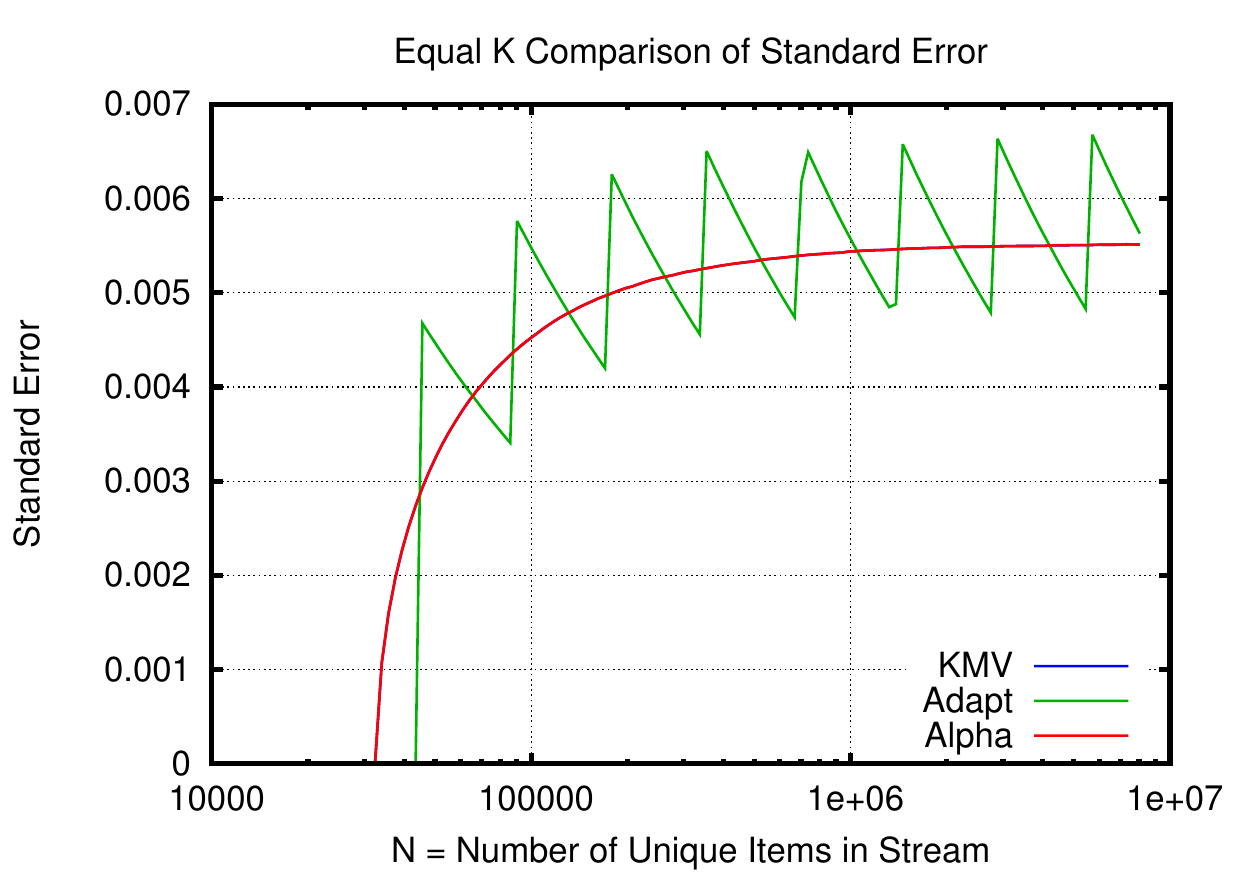} \\
\includegraphics[width=0.45\linewidth]{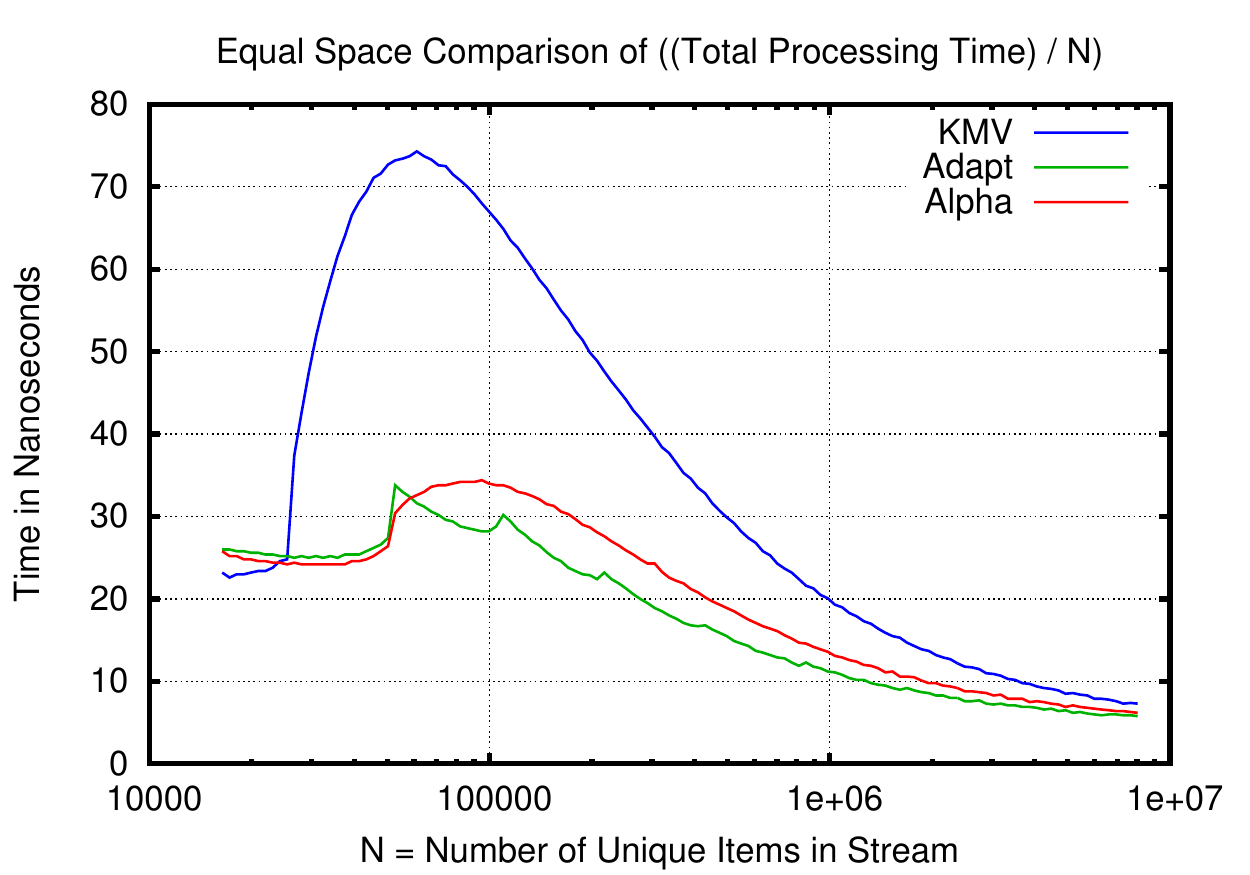}\quad
\includegraphics[width=0.45\linewidth]{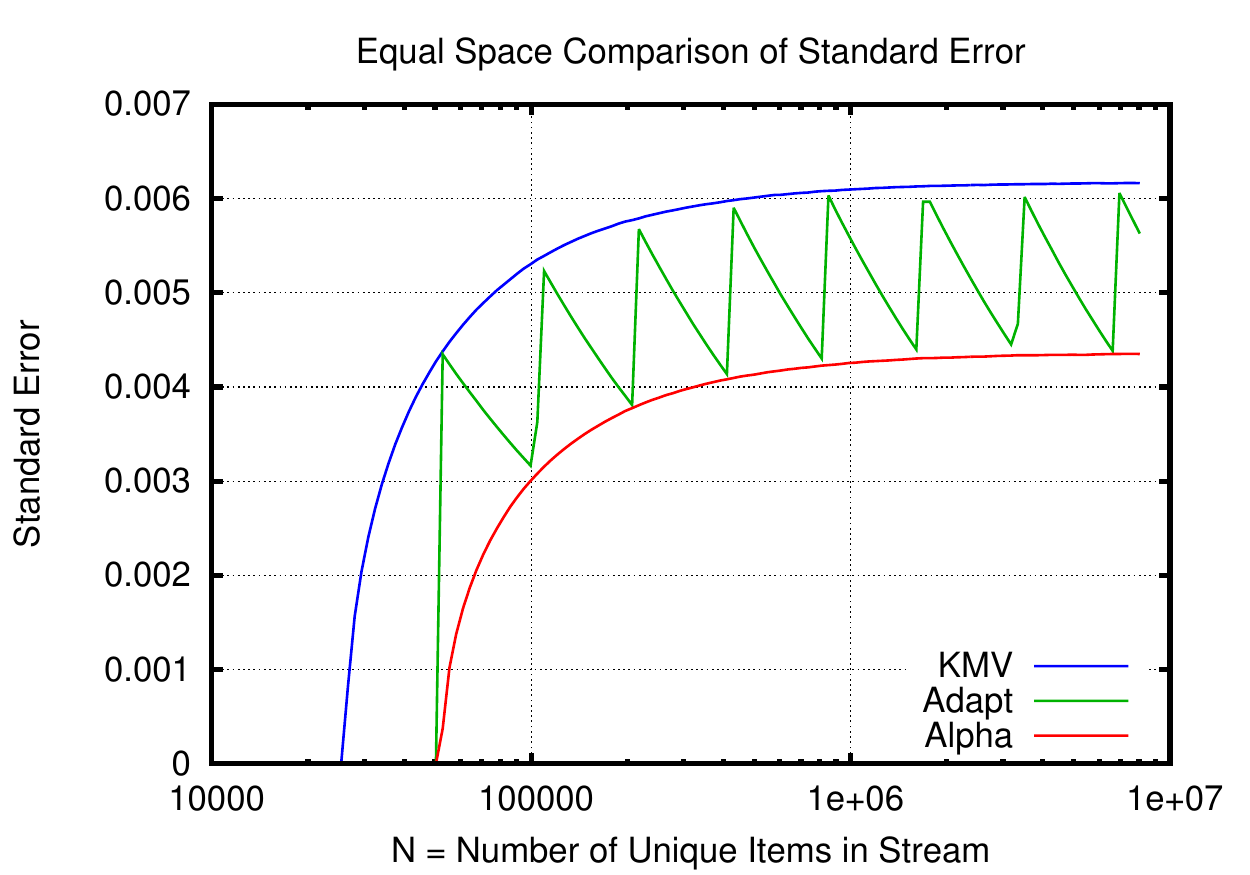} \\
\end{center}
\caption{These plots illustrate the low stream processing cost and non-oscillating
error curves of the Alpha Algorithm.}
\label{fig:base-algo-tradeoffs}
\end{figure}

\subsection{A Multi-Stream Experiment Using Real Data}
\label{sec:webscope-experiment}
\label{app:webscope-experiment}
As discussed in Section \ref{sec:applicability}, Theorem~\ref{thm:variance}'s comparative
variance result does not apply to the Alpha Algorithm in general. However, 
we proved in Section \ref{app:alpha} that Theorem \ref{thm:variance} does apply to the Alpha Algorithm when the
input streams are disjoint. In this section we present empirical
evidence suggesting that the Alpha Algorithm ``almost'' satisfies the variance bound of Theorem \ref{thm:variance} on real
data. Recall that Theorem \ref{thm:variance} asserted that $\sigma^2(\hat{n}^U_{P,U}) \le                                                      
\sigma^2(\hat{n}^{A^*}_{P,A^*})$ when the estimates are computed using TCFs satisfying $1$-Goodness and monotonicity. 
Simplifying notation, and switching
from variance to relative error, we will exhibit a scatter plot
comparing $\mathrm{RE}_U(A_1,A_2)$ versus $\mathrm{RE}_{A*}(A_1,A_2)$,
for numerous pairs $(A_1,A_2)$ of sets from a naturally occurring
dataset, using the TCF defined by the Alpha Algorithm. This scatter plot will show that only a tiny fraction of the
pairs violates the bound asserted in the theorem.

\begin{figure}
\begin{center}
\includegraphics[width=0.8\linewidth]{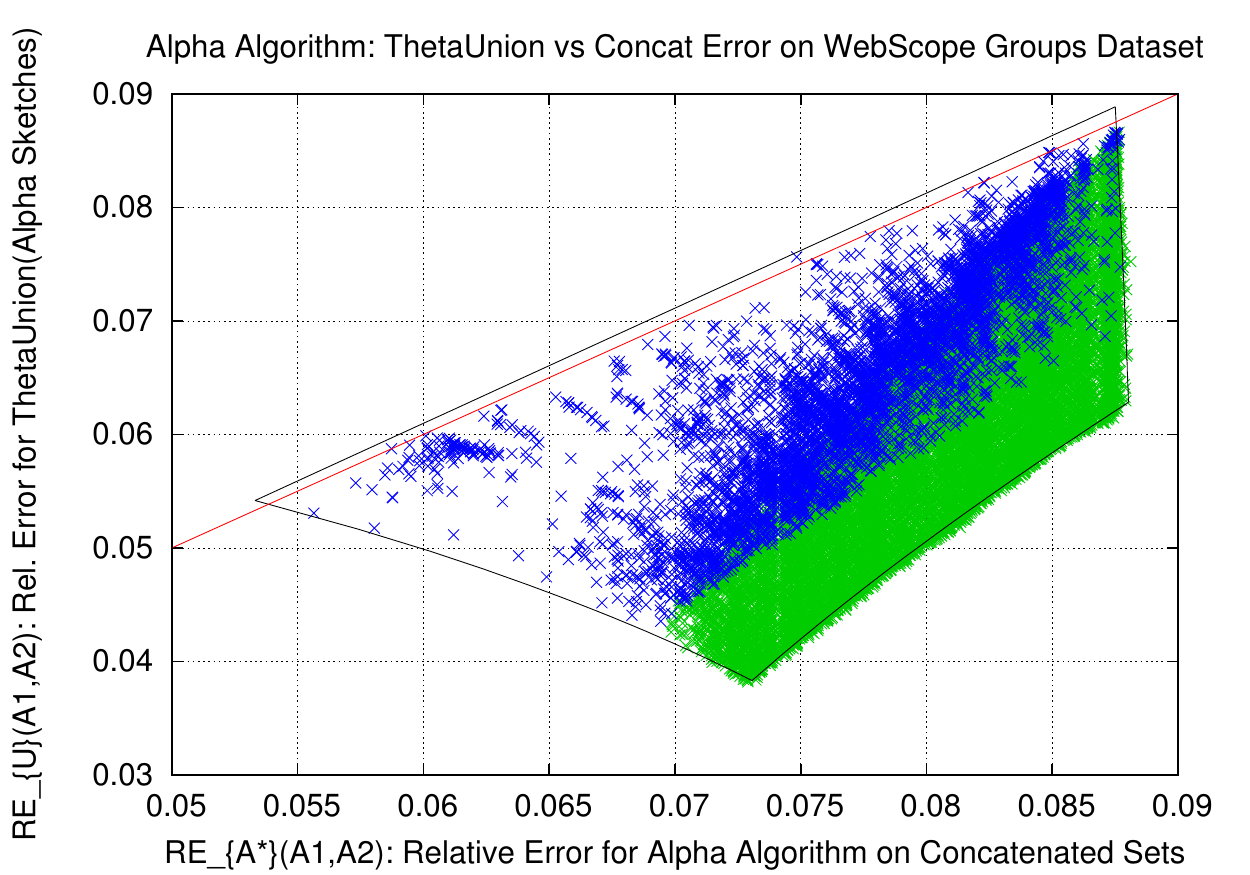}
\end{center}
\caption{Most points are below the red line, showing that the
comparative variance bound of Theorem~\ref{thm:variance} is ``nearly true'' for the Alpha Algorithm on the Webscope
Dataset.}
\label{fig:webscope-groups-scatter}
\end{figure}

\noindent \textbf{WebScope ``Groups'' Dataset.}
This experiment is based on
ydata-ygroups-user-group-membership-graph-v1\_0, a dataset that is
available from the Yahoo Research Alliance Webscope program.  It
contains anonymized and downsampled membership lists for about 640000
Yahoo Groups, circa 2005. Because of the downsampling, there are only
about 1 million members in all. We restricted our attention to the
roughly 10000 groups whose membership lists contained between 201 and
5429 members. Hence there were about 50 million pairs of groups to
consider.  Recalling that the comparative variance theorem applies to the Alpha Algorithm
under the promise that groups are disjoint, we trimmed this set of 50 million
pairs down to 5000 pairs that seemed most likely to violate the 
theorem because they had the highest overlaps as measured by the similarity score 
$\textrm{sim}(A_1,A_2) := (|A_1 \cap A_2| / \min (|A_1|,|A_2|))$.
We also examined another 13000 pairs of groups to fill out the 
scatter plot.                                                                                      

For each of these roughly 18000 pairs of groups, we empirically
measured, by means of 100000 trials with $k$ set to 128,
the values of $\mathrm{RE}_U(A_1,A_2)$ and
$\mathrm{RE}_{A*}(A_1,A_2)$, and plotted them in the scatter plot
appearing as Figure~\ref{fig:webscope-groups-scatter}.
The 5000 high-overlap pairs are plotted in blue, while the other 13000
pairs are plotted in green.  Strikingly, all but 2 of the roughly
18000 points lie on or below below the red line, thus indicating an
outcome that is consistent with the comparative variance result.
Because we included every pair of sets that had large overlap
(as measured by $\textrm{sim}(A_1,A_2)$) we conjecture that all of the
other roughly 50 million pairs of sets also conform to the theorem.



Figure~\ref{fig:webscope-groups-scatter} also includes a heuristic
``bounding box'' plotted as a black quadrilateral.
This bounding box was computed numerically from
several ingredients.  For the $\mathrm{RE}_U(A_1,A_2)$ side of the
computation, we exploited the fact that for any given values of $n$
and $k$, the Alpha Algorithm's exact distribution over $\theta$ values
can be computed by dynamic programming using recurrences similar to
the ones described in \cite{flajolet1985approximate}.  
We also made
the (counter-factual) assumption that three different hash functions are
used to process the input sets $A_1$ and $A_2$, and the output set
$S_U$. This breaks the dependencies which complicate the analysis of the actual
multi-stream instantiation of the Alpha Algorithm, in which only a single hash
function is used during any given run. However, this counter-factual
assumption also means that the resulting bounding
box is not quite accurate. Finally, we did a grid search over
all possible set-size pairs
$201 \le |A_1| \le |A_2| \le 5429$ and all
possible amounts of overlap, and traced out the boundary of the
resulting combinations of computed relative errors.
This boundary (see Figure~\ref{fig:webscope-groups-scatter}) suggests
that the comparative variance theorem is true for nearly all {\em possible} triples
$(|A_1|,|A_2|,|A_1 \cap A_2|)$ where $201 \le |A_1| \le |A_2| \le 5429$ and
$|A_1 \cap A_2| \le A_1$. Moreover, in those relatively few
cases where the theorem is violated, the magnitude of the
violation is small.

\section{Detailed Overview of Prior Work}
\label{app:priorwork}
\subsection{Algorithms for Single Streams}
\noindent \textbf{HLL: HyperLogLog Sketches.}\label{subsubHLL}
HLL is a sketching algorithm for the vanilla \distinct\ problem. It uses a hash function to randomly distribute the elements of a stream $A$ amongst $k$ buckets.
For each bucket $i$, there is a register $b_i$, whose length is $O(\log \log n)$ bits, that essentially
contains the largest number of leading zeros in the hashed value of any stream element sent to that bucket.
For each stream element, this data structure can clearly be updated in $O(1)$ time.
The HLL estimator for $n_A$ which we denote $\romHLL_A $, is a certain non-linear function of the $k$ bucket values $b_i$; 
see \cite{flajolet2008hyperloglog}. It has been proved by \cite{flajolet2008hyperloglog} that, as $n_A \rightarrow \infty$,
$E(\romHLL_A) \rightarrow\; n_A$, and $\sigma^2(\romHLL_A) \rightarrow\; 1.04 (n_A^2/k)$. 

Unlike the KMV and Adaptive Sampling algorithms described below, it is not known how to extend the HLL sketch to estimate $n_{P, A}$ for general properties $P$ (unless, of course, $P$ is known prior to stream
processing). 
Qualitatively, the reason that HLL cannot estimate $n_{P, A}$
is that, unlike the other algorithms, HLL does not maintain any kind of sample of identifiers from the stream.

\medskip
\noindent \textbf{KMV: K'th Minimum Value Sketches.}\label{subsubBKMV}
\label{sec:KMV}
The KMV sketching procedure for estimating \distinctA\ works as follows. While processing an input stream $A$, KMV keeps track of the set $S$ of the $k$ smallest unique hashed values of stream elements.
The update time of a heap-based implementation of KMV is $O(\log k)$.
The KMV estimator for \distinctA\ is
\begin{equation} \label{eq:kmv} \romKMV_A = \; k / m_{k+1},\end{equation} where $m_k$ denotes the $k$'th smallest hash value.
It has been proved by \cite{beyer2009distinct}, \cite{giroire2009order}, and others, that $E(\romKMV_A) = n_A$, and
\begin{align}
\sigma^2(\romKMV_A) = & \frac{n_A^2- k \; n_A}{k-1} < \frac{n_A^2}{k-1}.
\end{align}

Duffield et al. \cite{DuffieldLT07} proposed to change 
the heap-based implementation of priority sampling
to an implementation based on quickselect \cite{quickselect61}. 
The same idea applies to KMV, which is a special case of priority sampling,
and it reduces the sketch update cost from $O(\log k)$ to amortized $O(1)$. 
However, this $O(1)$ has a larger constant factor than
that of competing methods.

The KMV sketching procedure can be extended to estimate $n_{P, A}$ for any property $P \subseteq [n]$, as explained below.
To accomplish this, the KMV sketch must keep not just the $k$ smallest unique hash values that have been observed in the stream, but also 
the actual item identifiers corresponding to the hash values.\footnote{Technically, the sketch need not store the hash values if it stores the corresponding identifiers. 
Nonetheless, storing the hash values is often desirable in practice, to avoid the need to repeatedly evaluate the hash function.}
This allows the algorithm to determine which of the items in the sample
satisfy the property $P$, even when $P$ is not known until query time.

Motivated by the identity $n_{P, A} = n_A \cdot (n_{P, A} / n_A)$, the quantity $\romKMV_A \cdot est(n_{P, A}/n_A)$ is a plausible
estimate of $n_{P, A}$, for any sufficiently accurate estimate $est(n_{P, A}/n_A)$  of $n_{P, A}/n_A$.
Let $S_A$ denote the $k$ smallest unique hashed values in $A$, and recall (cf. Section \ref{sec:hashprelims}) that $P(S_A)$ denotes
the subset of hash values in $S_A$ whose corresponding identifiers in $[n]$ satisfy the predicate $P$ (the reason we require
the sketch to store the actual identifiers that hashed to each value is to allow $S_A$ to be determined from the sketch). 
Then the fraction $|P(S_A)| / |S_A|$ can serve as the desired estimate of the fraction $n_{P, A}/n_A$. 
Essentially because $S_A$ is a uniform random sample of $A$, it can be proved
that the estimate $\romKMV_{P, A} = \romKMV_A \cdot |P(S_A)| / |S_A|$ of $n_{P, A}$ 
is unbiased, and has the following variance:\footnote{\cite{beyer2009distinct} analyzed the closely related estimator
$\romKMV'_{P, A} = \romKMV_A \cdot |P(S'_A)| / |S'_A|$, where $S'_A = S_A \cup \{m_{k+1}\}$,
proving unbiasedness and 
deriving the variance $\sigma^2(\romKMV'_{P, A}) = 
(n_{P, A}((k+1) \; n_A - (k+1)^2 - n_A + k+1 + n_{P, A}))\;\;/\;\;((k+1)(k-1))
$.}
\begin{align}
\sigma^2(\romKMV_{P, A}) = & \frac{n_{P, A} (n_A - k)}{k-1} <
\frac{n_{P, A} \; n_A}{k-1}. 
\end{align}



\medskip \noindent \textbf{Adaptive Sampling.}\label{subsubAdapt}
Adaptive Sampling maintains a sampling level $i \ge 0$, and the set $S$
of all hash values less than $2^{-i}$; whenever $|S|$ exceeds a pre-specified
size limit, $i$ is incremented and $S$ is scanned discarding any hash value
that is now too big. Because a simple scan is cheaper than running
quickselect, an implementation of this scheme can be cheaper than
KMV. The estimator of $n_A$ is $\romAdapt_A = \; |S| /
2^{-i}$. It has been proved by \cite{flajolet1990adaptive} that this
estimator is unbiased, and that $\sigma^2(\romAdapt_A) \approx 1.44
(n_A^2/(k-1))$, where the approximation sign hides oscillations caused
by the periodic culling of $S$.  
Like KMV, Adaptive Sampling can be extended to estimate $n_{P, A}$ for any property $P$, via 
$\romAdapt_{P, A} = \romAdapt_A \cdot |P(S_A)| / |S_A|$. Note that, just as for KMV, this extension requires storing not just the hash values in $S$,
but also the actual identifiers corresponding to each hash value.

Although the stream processing speed of Adaptive Sampling is excellent, 
the fact that its accuracy oscillates as $n_A$ increases 
is a shortcoming of the method. 

\subsection{Algorithms for Set Operations on Multiple Streams}
\noindent \textbf{HLL Sketches for Multiple Streams.}\label{subsubHLL2}
 \begin{itemize}
 \item \textbf{Set Union.} A sketch of $U$ can be constructed from $m$ HLL sketches of the $A_j$'s by taking
the maximum of the $m$ register values for each of the $k$ buckets. 
The resulting sketch is identical to an HLL sketch constructed directly from $U$, so
$E(\romHLL_U) \rightarrow\; n_U$, and
$\sigma^2(\romHLL_U) \rightarrow\; 1.04 (n_U^2/k)$.

\item \textbf{Set Intersection.} Given constituent streams $A_1, \dots, A_m$, the HLL scheme can be extended via the Inclusion/Exclusion (IE) rule to estimate \distinct\ 
for various additional set-expressions other than set-union applied to $A_1, \dots, A_m$. 
This approach is awkward for
complicated expressions, but is straightforward for simple expressions. For example, if $m=2$, then
the HLL+IE estimate of $|I| = | A_1 \cap A_2 | $ is $\romHLL_{A_1} + \romHLL_{A_2} - \romHLL_U$.

Unfortunately, the variance of this estimate is approximately
$n_U^2/k$. This is a factor of $n^2_U/n^2_I$ larger than the variance
of roughly $n_I^2/k$ if one could somehow run HLL directly on $I$, and a factor of $n_U/n_I$ worse than the variance achieved by
the $\romIKMV$ algorithm described below. When $n_I \ll n_U$, this penalty
factor overwhelms HLL's fundamentally good accuracy per bit.
\end{itemize}
In summary, the main limitations of HLL are its bad error scaling behavior when dealing with set operations other than set-union, 
as well as the inability to estimate \distinctP\ queries for general properties $P$, even for a single stream $A$. \\

\medskip \noindent \textbf{\IKMV: KMV for Multiple Streams.}\label{subsubBKMV2} \label{sec:IKMV}
 \begin{itemize}
 \item \textbf{Set Union.} 
For any property $P$, there are two natural ways to extend KMV to estimate $n_{P, U}$, 
given a KMV sketch $S_j$ containing the $k+1$ smallest unique hash values for each constituent stream $A_j$. 
The first is to use a ``non-growing'' union rule, and the second is to use a ``growing'' union rule (our term). 

With the non-growing union rule, the sketch 
of $U$ is simply defined to be the set of $k+1$ smallest unique hash values in $\cup_{j=1}^m S_j$.
The resulting sketch is identical to a KMV sketch constructed directly from $U$, so $E(\romKMV_U) = n_U$, and
$\sigma^2(\romKMV_U) < n_U^2/(k-1)$. 
Just as the KMV sketch for a single stream $A$ can be adapted to estimate $n_{P, A}$ for any property $P$, 
this multi-stream variant of KMV can be adapted to provide an estimate $\romKMV_{P,U}$ of
$n_{P, U}$.

The growing union rule
was introduced by Cohen and Kaplan \cite{cohen2009leveraging}.
This rule decreases the variance of estimates
for unions and for other set expressions, but also increases the space
cost of computing those estimates. Throughout, we refer to Cohen and Kaplan's algorithm as $\rommultiKMV$.
\label{subsubIKMV}
For each KMV input sketch $S_j$, let $M_j$ denote that sketch's value of $m_{k+1}$. 
Define $M_U = \min_{j=1}^m M_j$, and $S_U = \{x \in \cup_j S_j \colon x < M_U\}$.
Then $n_U$ is estimated by $\rommultiKMV_U := |S_U|/M_U$, 
and  $n_{P, U}$ is estimated by $\rommultiKMV_{P, U} := \rommultiKMV_U \cdot |P(S_U)| /|S_U| = |P(S_U)| / M_U$.
\cite{cohen2009leveraging} proved that $\rommultiKMV_{P, U}$ is unbiased and has
variance that 
dominates the variance of the ``non-growing'' estimator $\romKMV_{P,U}$:
\begin{align}
\sigma^2(\rommultiKMV_{P, U}) \le & \sigma^2(\romKMV_{P, U}). 
\end{align}

\item \textbf{Set Intersection.} $\rommultiKMV$ can be tweaked in a natural way to handle set intersection and other set operations.
Specifically, as in the set-union case, define $M_U = \min M_j$, and $S_U = \{x \in \cup_j S_j \colon x < M_U\}$. In addition, define $S_I =  \{(x \in \cap_j S_j) < M_U\}$.
The estimator for $n_{P, I}$ is $\rommultiKMV_{P, I} := \rommultiKMV_U \cdot |P(S_I)| /|S_U| = |P(S_I)| / M_U$.
It is not difficult to see that $\rommultiKMV_I$ is exactly equal to $\rommultiKMV_{P', U}$, 
where $P'= P \cap I$ is the property that evaluates to 1 on an identifier if and only if
the identifier satisfies $P$ and is also in $I$. 
Since the latter estimator was already shown to be unbiased with variance bounded as per Equation \eqref{growing-variance-dominates},
$\rommultiKMV_{P, I}$ satisfies the same properties. 
\end{itemize}


\medskip \noindent \textbf{multiAdapt: Adaptive Sampling for Multiple Streams.}\label{subsubAdapt2}
\begin{itemize}
\item \textbf{Set Union.}
Just as with KMV, for any property $P$, there are two natural ways to extend Adaptive Sampling 
to estimate $n_{P, U}$, given an Adaptive Sampling sketch $S_j$ for each constituent stream $A_j$. 
The first is to use a non-growing union rule, and the second is to use a growing union rule. 
 For brevity, we will
only discuss the growing union rule, as proposed by
\cite{gibbons2001estimating}. We refer to this algorithm as $\romIAdapt$. 
Let $(i_j,S_j)$ be the sketch of the
$j$'th input stream $A_j$.  The union sketch constructed from these
sketches is $(i_U = \max i_j$,\; $S_U = \{x \in \cup S_j\colon x < 2^{-i_U}\})$.  
Then $n_U$ is estimated by $\romIAdapt_U := |S_U| / 2^{-i_U}$,
and $n_{P, U}$ is estimated by $\romIAdapt_{P, U} := \romIAdapt_U \cdot |P(S_U)| / |S_U|$.  
\cite{gibbons2001estimating} proved epsilon-delta bounds on the error of the estimator
$\romIAdapt_{P, U}$, but did not
derive expressions for mean or variance. However, $\romIAdapt$ and $\rommultiKMV$ are in fact both special cases
of our Theta-Sketch Framework, and in Section~\ref{sec:framework}
of this paper we will prove (apparently for the first time) that $\romIAdapt_{P, U}$ is unbiased.

\item \textbf{Set Intersection.} To our knowledge, prior work has not considered extending $\romIAdapt$
to handle set operations other than set-union on constituent streams. However, it is possible to tweak $\romIAdapt$
in a manner similar to $\rommultiKMV$ to handle these operations.
\end{itemize}

\subsection{Other Related Work}
\label{sec:related}
Estimating the number of distinct values for data streams is a well studied problem. 
The problem of estimating result sizes of set expressions over multiple streams 
was concretely formulated by Ganguly et al. \cite{ganguly2003processing}. 
Motivated by the question of handling streams containing both insertions and deletions, their construction
involves a 2-level hash function that essentially stores a set of counters for each bit-position of an HLL-type hash, 
and hence is inherently more resource intensive, both in terms of the space and update times. 

K'th Minimum Value sketches were introduced by Bar-Yossef et al. \cite{bar2002counting}, and developed into
an unbiased scheme that handles set expressions by Beyer et al. \cite{beyer2009distinct}.
Our own scheme is closely related to the schemes
proposed and analyzed in Cohen and Kaplan~\cite{cohen2009leveraging}, 
and in Gibbons and Tirthapura~\cite{gibbons2001estimating}.
Chen, Cao and Bu~\cite{chen2007simple} propose a somewhat different
scheme for estimating unique counts with set expressions
that is based on a data-structure related to the ``probabilistic counting'' sketches of \cite{flajolet1985probabilistic},
and also to the multi-bucket KMV sketches of \cite{giroire2009order} (with $K=1$).
However, the guarantees proved by \cite{chen2007simple} are asymptotic in nature,
and their system's union sketches are the same size as base sketches, and therefore do not provide the increased
accuracy that is possible with a ``growing'' union rule as in \cite{cohen2009leveraging}, in \cite{gibbons2001estimating},
and in this paper's scheme.

Bottom-k sketches~\cite{cohen2007summarizing,cohen2009leveraging} are a weighted generalization
of KMV that provides unbiased estimates of the weights of arbitrary subpopulations of identifiers.
They have small errors even under 2-independent hashing~\cite{thorup2013bottomk}. A closely 
related method for estimating subpopulation weights is priority sampling~\cite{DuffieldLT07}.
Although this paper's Theta-Sketch Framework offers a broad generalization of KMV, 
it is not clear that it can support the entire generality of bottom-k sketches for weighted sets. 


This paper's ``Alpha Algorithm'' is inspired by the elegant {\em Approximate Counting} 
method of Morris~\cite{morris1978counting}, that has previously been
applied 
to the estimation of the frequency moments $F_p$, for $p \ge 1$. 
By contrast, {\em our} task is to estimate $\distinct_P$. The Alpha Algorithm is
able to do this because its Approximate Counting process is tightly
interleaved with another process that removes duplicates from the
input stream while maintaining a small memory footprint by using
feedback from the approximate counter.

Kane et al. \cite{kane2010optimal} gave a streaming algorithm for the \DistinctElements\ problem
that outputs a $(1+\epsilon)$-approximation with constant probability, using
$\Theta(\epsilon^{-2} + \log(n))$ bits of space. 
This improves over the bit-complexity of HLL by roughly a $\log\log n$ factor (and avoids the assumption of truly random hash functions). 
Like HLL, it is not known how to extend the algorithm to handle $\distinctsub_P$ queries for non-trivial properties $P$, and the algorithm does not appear to 
have been implemented \cite{heule2013hll}. 

Tirthapura and Woodruff \cite{finalcite} give sketching algorithms for estimating $\distinctsub_P$ queries
for a special class of properties $P$. Specifically, they consider streams that contain tuples of the form $(x,y)$, where $y$
is a numerical parameter, and the subpopulation $P$ is specified via a lower or upper bound on $y$. 

In very recent work, Cohen \cite{cohennew} and Ting \cite{ting} have proposed new estimators for \DistinctElements\ (called "Historical Inverse Probabililty" (HIP) estimators in \cite{cohennew}).
Any sketch which is generated by hashing of each element in the data stream and
is not affected by duplicate elements (such as HLL, KMV, Adaptive Sampling, and our Alpha Algorithm) has a corresponding HIP estimator, and \cite{cohennew, ting} show that the HIP estimator reduces the
variance of the original sketching algorithm by a factor of 2. However, HIP estimators, in general, can only be computed when processing the stream, and this applies
in particular to the HIP estimators of KMV and Adaptive Sampling. Hence, they do not satisfy the mergeablity properties necessary to apply to multi-stream settings.




\appendix
\section{Proof of Theorem \ref{thm:variance}}
\label{app:variance}
\subsection{Proof Overview} 
The proof introduces the notion of the \emph{fix-all-but-two projection}
of a threshold choosing function $T$. We then introduce 
a new condition on TCF's that
we call $2$-Goodness (cf. Appendix \ref{sec:twogood}). On its face, $2$-Goodness may appear to be a stronger requirement than $1$-Goodness. However,
we show in Section \ref{1implies2} that this is not the case: $1$-Goodness in fact implies $2$-Goodness.\footnote{In fact, the two properties can
be shown to be equivalent. We omit the reverse implication, since we will not require it to establish our variance bounds.}
We show in Appendix \ref{sec:zerocovariance} that $2$-Goodness implies that ``per-identifier estimates'' output by the Theta-Sketch Framework
are uncorrelated. Finally, in Section \ref{sec:finalvariance}, we use this result to complete the proof of Theorem \ref{thm:variance}.

\subsection{Definition of Fix-All-But-Two Projections and $2$-Goodness}
\label{sec:twogood} 
We begin by defining the Fix-All-But-Two Projection of a TCF.

\begin{definition}\label{def:fabt-projection}
Let $T$ be a threshold choosing function and fix a stream $A$.
Let $\ell_1 \neq \ell_2$ be two of the $n_A$ unique identifiers in $A$. Let $\xnmltwo$ be a fixed assignment of 
hash values to all unique identifiers in $A$ {\em except} for $\ell_1$ and $\ell_2$. 
Then the fix-all-but-two projection $T_{\ell_1, \ell_2}[\xnmltwo](x_{\ell_1}, x_{\ell_2}) : [0,1) \times [0,1) \rightarrow (0,1]$ of $T$
is the function that maps values of $(x_{\ell_1}, x_{\ell_2})$ to theta-sketch thresholds via the definition
$T_{\ell_1, \ell_2}[\xnmltwo](x_{\ell_1, \ell_2}) = T(X^{n_A}),$ where $X^{n_A}$ is the obvious combination of $\xnmltwo$, $x_{\ell_1}$, and $x_{\ell_2}$.
\end{definition}

Next, we define the notion of $2$-Goodness for bivariate functions. 
\begin{definition}\label{def:bivariate2goodness}
Let $f(x,y):[0,1) \times [0,1) \rightarrow (0,1]$ be a bivariate function. We say that $f$ satisfies $2$-Goodness if
there exists an $F \in (0, 1]$ such that

\begin{itemize}
\item $\max(x,y)   < F \Rightarrow f(x,y) = F$.
\item $\max(x,y) \ge F \Rightarrow f(x,y) \le \max(x,y)$.
\end{itemize}
\end{definition}

Finally we are ready to define $2$-Goodness for TCF's.

\begin{condition}\label{key-condition2}
A threshold choosing function $T(X^{n_A})$ satisfies $2$-Goodness iff for every stream $A$
containing $n_A$ unique identifiers, every pair of identifiers $\ell_1, \ell_2 \in A$, and every fixed 
assignment $X^{n_A}_{-\ell_1, -\ell_2}$ of hash values to the identifiers in $A \!\setminus\! \{\ell_1, \ell_2\}$,
the fix-all-but-two projection $\txntwo$ satisfies Definition~\ref{def:bivariate2goodness}.
\end{condition}

\subsection{$1$-Goodness Implies $2$-Goodness}
\label{1implies2}
We are ready to show the (arguably surprising) result that if $T$ satisfies $1$-Goodness, then it also satisfies $2$-Goodness. 

\begin{theorem} \label{thm:1implies2}
Let $T$ be a threshold choosing function that satisfies $1$-Goodness. Then $T$ also satisfies $2$-Goodness.
\end{theorem}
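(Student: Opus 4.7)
Fix a stream $A$, a pair of distinct identifiers $\ell_1 \ne \ell_2$ in $A$, and an assignment $\xnmltwo$ of hash values to every identifier in $A \setminus \{\ell_1, \ell_2\}$; I will show that the bivariate projection $f(x,y) := \txntwo$ satisfies $2$-Goodness. The strategy is to reduce the bivariate condition to repeated applications of the univariate $1$-Goodness hypothesis, by noting that for each fixed $y$, the function $g_y(x) := f(x,y)$ is itself a fix-all-but-one projection of $T$ (namely, the one obtained by freezing $\ell_2$ to the value $y$ alongside $\xnmltwo$), and similarly for $h_x(y) := f(x,y)$ with $x$ fixed. Thus by the $1$-Goodness hypothesis on $T$, there exists a univariate threshold $F(y) \in (0,1]$ witnessing $1$-Goodness for $g_y$, and a threshold $F^*(x) \in (0,1]$ witnessing $1$-Goodness for $h_x$.

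The plan is to define the $2$-Goodness threshold as $F := f(0,0)$, viewing this as the common value obtained by taking both arguments to their minimum (one may read this via the natural extension of $T$ to boundary values, or equivalently as the common constant value $f$ takes on a sufficiently small neighborhood of the origin). First I would observe that $F(0) = F^*(0) = F$: since both $F(0)$ and $F^*(0)$ are strictly positive (otherwise Subcondition~\eqref{shape-condition-b} would force $f(0,0) \le 0$, contradicting the codomain $(0,1]$), Subcondition~\eqref{shape-condition-a} applied at $0$ forces $g_0(0) = F(0)$ and $h_0(0) = F^*(0)$, both of which equal $f(0,0) = F$. Next I would prove the key lemma that $F^*(x) = F$ for every $x < F$: since $x < F = F(0)$, Subcondition~\eqref{shape-condition-a} for $g_0$ gives $f(x,0) = F$; but $f(x,0) = h_x(0) = F^*(x)$ by the same reasoning as for $F^*(0)$, yielding $F^*(x) = F$.

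With this in hand, verifying $2$-Goodness is a short case analysis. If $\max(x,y) < F$, then in particular $x < F$, so by the key lemma $F^*(x) = F$; since also $y < F = F^*(x)$, Subcondition~\eqref{shape-condition-a} for $h_x$ gives $f(x,y) = F$. If instead $\max(x,y) \ge F$, assume without loss of generality that $x \ge F$ (the case $y \ge F$ is symmetric, using $g_0$ in the role of $h_0$). Subcondition~\eqref{shape-condition-b} for $g_0$ yields $f(x,0) \le x$, i.e., $F^*(x) \le x$. Now split on $y$: if $y < F^*(x)$, Subcondition~\eqref{shape-condition-a} for $h_x$ gives $f(x,y) = F^*(x) \le x \le \max(x,y)$; if $y \ge F^*(x)$, Subcondition~\eqref{shape-condition-b} for $h_x$ gives $f(x,y) \le y \le \max(x,y)$. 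Either way $f(x,y) \le \max(x,y)$, completing the verification.

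The main obstacle is conceptual rather than computational: it is perhaps surprising that the single univariate threshold $F = f(0,0)$ simultaneously controls the behavior of $f$ along both coordinate directions, and the proof hinges on the nontrivial fact that $F^*(x)$ is constant (and equal to $F$) on the whole interval $[0,F)$. The only technical wrinkle is the mild domain mismatch between $1$-Goodness (stated for $(0,1)$) and $2$-Goodness (stated for $[0,1) \times [0,1)$); this is handled by noting that the univariate threshold $F^*(x)$ is constant on a neighborhood of $0$, so evaluating $f$ at $0$ is well defined as the limiting value, which is all the proof uses.
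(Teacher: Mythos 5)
Your proof is correct and takes essentially the same route as the paper's: your $F(y)$ and $F^{*}(x)$ play the roles of the paper's $G^{y'}$ and $H^{x'}$, your key lemma that $F^{*}(x)=F$ for all $x<F$ is exactly the paper's Lemma~\ref{lemma1} step showing $H^{x}=G^{0}$, and your final case split (via $F^{*}(x)=f(x,0)\le x$ when $x\ge F$) matches Lemma~\ref{lemma2}. The only cosmetic differences are that you name the threshold $f(0,0)$ instead of $G^{0}$, explicitly justify the symmetry behind the ``without loss of generality'' step, and flag the $(0,1)$ versus $[0,1)$ domain mismatch that the paper itself glosses over.
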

\begin{proof}
Let $T_{\ell_1, \ell_2}[\xnmltwo]$ be any fix-all-but-two projection of $T$. Notice that
for any $y' \in [0,1)$, $f(x) := T_{\ell_1, \ell_2}[\xnmltwo](x, y')$ is a fix-all-but-one projection of $T$.
Similarly for any $x' \in [0,1)$, $g(y) := T_{\ell_1, \ell_2}[\xnmltwo](x', y)$ is a fix-all-but-one-projection of $T$. Hence, $1$-Goodness of $T$
implies the following conditions hold:

\medskip \noindent \textbf{Property 1.} For each $y' \in [0,1)$, there exists a $G^{y'} \in (0, 1]$ such that:
\begin{itemize}
\item $x   < G^{y'} \Rightarrow T_{\ell_1, \ell_2}[\xnmltwo](x,y') = G^{y'}$.
\item $x \ge G^{y'} \Rightarrow T_{\ell_1, \ell_2}[\xnmltwo](x,y') \le x$.
\end{itemize}

\medskip \noindent \textbf{Property 2.} For each $x' \in [0,1)$, there exists a $H^{x'} \in (0, 1]$ such that:
\begin{itemize}
\item $y   < H^{x'} \Rightarrow T_{\ell_1, \ell_2}[\xnmltwo](x',y) = H^{x'}$.
\item $y \ge H^{x'} \Rightarrow T_{\ell_1, \ell_2}[\xnmltwo](x',y) \le y$.
\end{itemize}

To establish that $T$ satisfies $2$-Goodness, we want to prove that  there exists an $F \in (0, 1]$ such that

\begin{itemize}
\item $\max(x,y)   < F \Rightarrow T_{\ell_1, \ell_2}[\xnmltwo](x,y) = F$.
\item $\max(x,y) \ge F \Rightarrow T_{\ell_1, \ell_2}[\xnmltwo](x,y) \le \max(x,y)$.
\end{itemize}

\noindent We will break the proof down into two lemmas.

\begin{lemma}\label{lemma1}
There exists an $F \in (0, 1]$ such that
$\max(x,y)   < F \Rightarrow T_{\ell_1, \ell_2}[\xnmltwo](x,y) = F$.
\end{lemma}
\begin{proof}

\noindent By Property 1 above, there exists a $G^0 \in (0, 1]$ such that 
\begin{equation}\label{eqn1}
x   < G^{0} \Rightarrow T_{\ell_1, \ell_2}[\xnmltwo](x,0) = G^{0}.
\end{equation}

\noindent Now consider any $x$ in $[0,G^0)$.
By Property 2 above, there exists a $H^{x} \in (0, 1]$ such that:
\begin{equation}\label{eqn2}
y   < H^{x} \Rightarrow T_{\ell_1, \ell_2}[\xnmltwo](x,y) = H^{x}.
\end{equation}

\noindent Plugging $y=0$ into Equation~\eqref{eqn2} gives
$T_{\ell_1, \ell_2}[\xnmltwo](x,0) = H^{x}$, while Equation~\eqref{eqn1} guarantees that
$T_{\ell_1, \ell_2}[\xnmltwo](x,0) = G^0$, so $H^{x} = G^0$. Substituting $G^0$
into Equation~\eqref{eqn2} yields 
\begin{equation}\label{eqn3}
y   < G^0 \Rightarrow T_{\ell_1, \ell_2}[\xnmltwo](x,y) = G^0.
\end{equation}

\noindent Because $x$ was any value in the interval $[0,G^0$), the lemma is proved with $F = G^0$.
\end{proof}

\begin{lemma}\label{lemma2}
The threshold $F$ whose existence was proved in Lemma~\ref{lemma1} also has the
property that if $\max(x,y) \ge F$, then $T_{\ell_1, \ell_2}[\xnmltwo](x,y) \le \max(x,y)$.

\end{lemma}
\begin{proof}
\noindent We start by assuming that $\max(x,y) \ge F$, so at least one of the following must be true: 
($x \ge F$) or ($y \ge F$). 
Without loss of generality we will assume that $x \ge F$. 
\noindent By Property 2 above, there exists an $H^{x} \in (0, 1]$ such that
\begin{itemize}
\item $y   < H^{x} \Rightarrow T_{\ell_1, \ell_2}[\xnmltwo](x,y) = H^{x}$.
\item $y \ge H^{x} \Rightarrow T_{\ell_1, \ell_2}[\xnmltwo](x,y) \le y$.
\end{itemize}

\noindent Our proof will have two cases, determined by whether 
$y < H^x$ or $y \ge H^x$.

\noindent First case: $y < H^x$. In this case, because $y < H^x$, $T_{\ell_1, \ell_2}[\xnmltwo](x,y) = H^x$. Also, $T_{\ell_1, \ell_2}[\xnmltwo](x,0) = H^x$.
But $x \ge F = G^0$, so $T_{\ell_1, \ell_2}[\xnmltwo](x,0) \le x$. Putting this all together gives:
\begin{equation}
T_{\ell_1, \ell_2}[\xnmltwo](x,y) = H^x = T_{\ell_1, \ell_2}[\xnmltwo](x,0) \le x \le \max(x,y).
\end{equation}

\noindent Second case: $y \ge H^x$. In this case,
because $y \ge H^x$,
\begin{equation}
T_{\ell_1, \ell_2}[\xnmltwo](x,y) \le y \le \max(x,y).
\end{equation}
\end{proof}
\end{proof}

\begin{figure*}
\begin{center}
\includegraphics[width=0.5\linewidth]{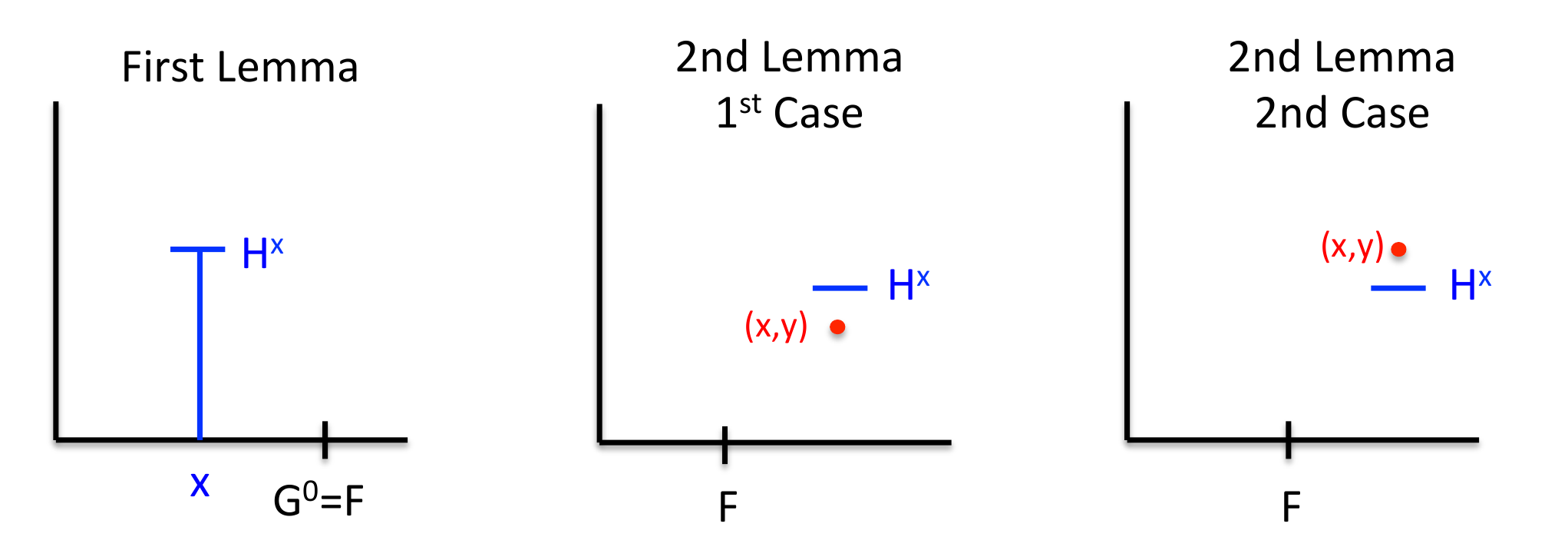}
\end{center}
\caption{Some diagrams for Lemmas \ref{lemma1} and \ref{lemma2}}
\end{figure*}

\noindent \textbf{$1$-Goodness Implies Per-Identifier Estimates Are Uncorrelated}
 \label{sec:zerocovariance}

\begin{lemma} \label{lemma:zerocovariances} Fix any stream $A$, threshold choosing function $G$, and pair $\ell_1 \neq \ell_2$ in $A$. Define the ``per-identifier estimates'' $V_{\ell_1}$ and $V_{\ell_2}$ as in Equation~\eqref{define-v-ell}. Then
if $T$ satisfies $1$-Goodness, the covariance of $V_{\ell_1}$ and $V_{\ell_2}$ is 0. In symbols,
$$\sigma(V_{\ell_1}, V_{\ell_2}) = E_{\xn}(V_{\ell_1} \cdot V_{\ell_2}) -E_{\xn}(V_{\ell_1}) \cdot E_{\xn}(V_{\ell_2})=0.$$
\end{lemma}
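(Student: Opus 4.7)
The plan is to leverage Theorem \ref{thm:1implies2}, which tells us that $1$-Goodness of $T$ implies $2$-Goodness, and then mimic the single-variable expectation calculation used in the proof of Theorem \ref{thm:unbiased} but in two variables, conditioning on a fix-all-but-two assignment of hash values.

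First, I would condition on $X^{n_A}_{-\ell_1, -\ell_2}$ and work with the fix-all-but-two projection $T_{\ell_1, \ell_2}[X^{n_A}_{-\ell_1, -\ell_2}](x_{\ell_1}, x_{\ell_2})$. By Theorem \ref{thm:1implies2}, this projection satisfies $2$-Goodness, so there exists a threshold $F = F(X^{n_A}_{-\ell_1, -\ell_2})$ with the property that the projected TCF equals $F$ when $\max(x_{\ell_1}, x_{\ell_2}) < F$ and is $\le \max(x_{\ell_1}, x_{\ell_2})$ otherwise.

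Next, I would observe that $V_{\ell_1} \cdot V_{\ell_2} = \mathbb{1}[x_{\ell_1} < T] \cdot \mathbb{1}[x_{\ell_2} < T] / T^2$, where $T$ denotes the projected TCF value. The product of indicators equals $\mathbb{1}[\max(x_{\ell_1}, x_{\ell_2}) < T]$. I would then do a two-case split on $\max(x_{\ell_1}, x_{\ell_2})$ versus $F$. When $\max < F$, we are in the regime where $T = F$, so both indicators fire and $V_{\ell_1} V_{\ell_2} = 1/F^2$. When $\max \ge F$, we have $T \le \max(x_{\ell_1}, x_{\ell_2})$, so at least one of the two indicators is zero, giving $V_{\ell_1} V_{\ell_2} = 0$. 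Thus
\begin{equation*}
E(V_{\ell_1} V_{\ell_2} \mid X^{n_A}_{-\ell_1, -\ell_2}) \;=\; \int_0^F \!\!\int_0^F \frac{1}{F^2}\, dx_{\ell_1}\, dx_{\ell_2} \;=\; 1.
\end{equation*}

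Since this holds for every fixed $X^{n_A}_{-\ell_1, -\ell_2}$, taking the outer expectation gives $E(V_{\ell_1} V_{\ell_2}) = 1$. Combining with $E(V_{\ell_1}) = E(V_{\ell_2}) = 1$ from the calculation in the proof of Theorem \ref{thm:unbiased} yields $\sigma(V_{\ell_1}, V_{\ell_2}) = 1 - 1 \cdot 1 = 0$, as desired.

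I do not expect a significant obstacle here, since the heavy lifting — the jump from the univariate $1$-Goodness structure to the bivariate $2$-Goodness structure — has already been performed in Theorem \ref{thm:1implies2}. The only subtlety worth noting explicitly is that the product of indicators collapses to a single indicator on $\max(x_{\ell_1}, x_{\ell_2})$, which is precisely the quantity $2$-Goodness was designed to control; once this alignment is noticed, the double integral is essentially the two-dimensional analogue of Equation~\eqref{expectedeq}.
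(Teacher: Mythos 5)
Your proposal is correct and follows essentially the same route as the paper: invoke Theorem~\ref{thm:1implies2} to pass from $1$-Goodness to $2$-Goodness, verify that $V_{\ell_1}V_{\ell_2}$ equals $1/F(\xnmltwo)^2$ precisely when $\max(x_{\ell_1},x_{\ell_2}) < F(\xnmltwo)$ and $0$ otherwise, integrate conditionally on $\xnmltwo$ to get $E(V_{\ell_1}V_{\ell_2})=1$, and combine with $E(V_{\ell_1})=E(V_{\ell_2})=1$ from Theorem~\ref{thm:unbiased}. Your writeup merely makes explicit the case analysis the paper calls a ``straightforward exercise to verify,'' which is fine.
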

\begin{proof}
Because $T$ satisfies $1$-Goodness, it also satisfies $2$-Goodness (cf. Theorem \ref{thm:1implies2}), and hence there exists a threshold $F(\xnmltwo)$ for which it is a straightforward 
exercise to verify that:

\begin{equation}
V_{\ell_1}(\xn) \cdot V_{\ell_2}(\xn) 
= \left\{
\begin{array}{l}
1 / F(\xnmltwo)^2
\;\mathrm{if}\; \max (x_{\ell_1}, x_{\ell_2}) < F(\xnmltwo) \\
0 \;\mathrm{otherwise}.
\end{array}\right.
\end{equation}

\noindent Now, conditioning on $\xnmltwo$ and taking the expectation with respect to pairs ($x_{\ell_1}$, $x_{\ell_2}$):

\begin{equation}
E(V_{\ell_1} \cdot V_{\ell_2} | \xnmltwo) = \int_0^1 \int_0^1 
V_{\ell_1}(\xn) V_{\ell_2}(\xn) 
dx_{\ell_1} dx_{\ell_2}
= F(\xnmltwo)^2 \cdot \frac{1}{F(\xnmltwo)^2} = 1.
\end{equation}

\noindent Since 
$E(V_{\ell_1} V_{\ell_2} | \xnmltwo) = 1$
when conditioned on each 
$\xnmltwo$, 
we also have
$E(V_{\ell_1} V_{\ell_2}) = 1$
when the expectation is taken over all $\xn$.
Meanwhile, since $T$ satisfies $1$-Goodness, 
$E(V_{\ell_1}) = E(V_{\ell_2})=1$ (cf. Theorem \ref{thm:unbiased}).
Hence, $\sigma(V_{\ell_1}, V_{\ell_2}) = 0$.
\end{proof}

As a corollary of Lemma \ref{lemma:zerocovariances}, we obtain the following result, establishing that the variance 
of $\romTS_{P, A}$ is equal to the sum of the variances of the per-identifier estimates for all identifiers in $A$ satisfying property $P$.

\begin{lemma} \label{the_corollary} Suppose that $T$ satisfies $1$-Goodness. Fix any stream $A$, and let $\romTS_{P, A}$
denote the estimate for $n_{P, A}$ obtained by running $\samp[T]()$ on $A$ and feeding the resulting theta-sketch into $\estonsub$().
Then
$$\sigma^2(\romTS_{P,A}) = \sum_{\ell \in A \colon P(\ell) = 1} \sigma^2(V_\ell).$$
\end{lemma}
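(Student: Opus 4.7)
The plan is to decompose $\romTS_{P,A}$ as a sum of per-identifier estimates and then apply the zero-covariance result (Lemma~\ref{lemma:zerocovariances}) to each cross term in the expanded variance.

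First, I would observe that $\romTS_{P,A} = |P(S_A)|/\theta$ can be rewritten as a sum of indicator-style contributions. Specifically, recalling the definition of $V_\ell$ from Equation~\eqref{define-v-ell}, namely $V_\ell(\xn) = \mathbbm{1}[x_\ell < T(\xn)]/T(\xn)$, we have
\begin{equation*}
\romTS_{P,A} \;=\; \frac{|P(S_A)|}{\theta} \;=\; \sum_{\ell \in A \colon P(\ell)=1} V_\ell(\xn),
\end{equation*}
since for each identifier $\ell$ satisfying $P$, the contribution to the numerator is $1$ precisely when $x_\ell < \theta$, divided by $\theta = T(\xn)$. Identifiers not satisfying $P$ contribute $0$.

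Next, I would apply the bilinearity of covariance to expand the variance of this sum:
\begin{equation*}
\sigma^2(\romTS_{P,A}) \;=\; \sum_{\ell \in A : P(\ell)=1} \sigma^2(V_\ell) \;+\; \sum_{\substack{\ell_1 \neq \ell_2 \\ P(\ell_1)=P(\ell_2)=1}} \sigma(V_{\ell_1}, V_{\ell_2}).
\end{equation*}
Since $T$ satisfies $1$-Goodness by assumption, Lemma~\ref{lemma:zerocovariances} applies to every pair $\ell_1 \neq \ell_2$ in $A$, forcing each covariance term in the second sum to vanish. Only the diagonal terms survive, yielding the claimed identity.

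There is essentially no obstacle here — all of the substantive work is already done in Lemma~\ref{lemma:zerocovariances} (which in turn relies on the chain $1$-Goodness $\Rightarrow$ $2$-Goodness $\Rightarrow$ uncorrelated per-identifier estimates). The only point worth double-checking is the decomposition $\romTS_{P,A} = \sum_{\ell : P(\ell)=1} V_\ell$, which is a routine consequence of the definitions of $P(S_A)$, $S_A$, and $V_\ell$; once that is in hand, the result is immediate from linearity of variance for uncorrelated random variables.
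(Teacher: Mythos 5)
Your proof is correct and matches the paper's own argument essentially verbatim: the paper likewise writes $\romTS_{P,A} = \sum_{\ell \in A \colon P(\ell)=1} V_\ell$ and then invokes Lemma~\ref{lemma:zerocovariances} together with the fact that the variance of a sum of pairwise uncorrelated random variables is the sum of their variances. Your expansion via bilinearity of covariance just makes that last step slightly more explicit; there is no substantive difference.
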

\begin{proof}
Note that $\romTS_{P,A} = \sum_{\ell \in A \colon P(\ell)=1} V_\ell$. The claim then follows from Lemma \ref{lemma:zerocovariances} combined with the fact
that the variance of the sum of random variables equals the sum of the variances, provided that the variables appearing in the sum are uncorrelated. 

\end{proof}

\subsection{Completing the Proof of Theorem \ref{thm:variance}}
\label{sec:twoimpliesvariance}
\label{sec:finalvariance} 
\begin{proof}
For every $\ell$ that appears in the concatenated stream $A^*$, and for all $\xnstar$,
we define the ``per-identifier estimate'' $V_\ell(\xnstar)$ as in Equation~\eqref{define-v-ell} with $A=A^*$, and relate it to the threshold
$F_{\ell}(\xnmlstar)$ as in Equation~\eqref{v-in-terms-of-f}, also with $A=A^*$. It is then straightforward to verify that

\begin{equation}\label{var-v-from-f}
\sigma^2(V_\ell | \xnmlstar) = 1 / F_\ell(\xnmlstar) - 1.
\end{equation}

\noindent Let $T'$ be the TCF that was (implicitly) used to construct $(\theta^U,S^U)$ from the $m$ sketches of the individual streams $A_j$.
By Theorem \ref{thm:preserved}, $T'$ satisfies $1$-Goodness, so let $F'_{\ell}(\xnmlstar)$ denote the corresponding threshold value for $T'$ as in Equation~\eqref{v-in-terms-of-f}. 
We claim that $T'$ satisfies the following property:
\begin{equation}\label{importantprop} \text{For all identifiers } \ell \in [n] \text{ and for all } \xnstar, F'_\ell(\xnmlstar) \ge F_\ell(\xnmlstar). \end{equation} 

\noindent \textbf{Finishing the proof, assuming $T'$ satisfies Property \ref{importantprop}.}
By Equation~\eqref{var-v-from-f}:

\begin{equation}
\sigma^2(V'_\ell | \xnmlstar) \le \sigma^2(V_\ell | \xnmlstar).
\end{equation}

\noindent Because this inequality holds for every specific $\xnmlstar$, it also holds for any convex combination over 
$\xnmlstar$'s, so 

$$\sigma^2(V'_\ell) \le \sigma^2(V_\ell).$$

Combining this with Lemma \ref{the_corollary},
we conclude that 
$$\sigma^2(\hat{n}^U_{P,U}) \;=\; \sum_{\ell \in A : P(\ell) = 1} \sigma^2(V'_\ell) \;\le\;
\sum_{\ell \in A : P(\ell) = 1} \sigma^2(V_\ell) \;=\; \sigma^2(\hat{n}^{A^*}_{P,A^*}).$$

\noindent \textbf{Proving that $T'$ satisfies Property \ref{importantprop}.}
Fix any hash function $h$, which determines $\xnu$, and also fixes hashed versions of
the streams $A_1, \dots, A_m$ and $A^*$. We will overload the symbols $A_j$ and $A^*$ to denote these hashed streams
as well as the original streams.
We need to prove that $F'_{\ell}(\xnuml) \ge F_{\ell}(A^*_{-\ell})$. 
This can be done in three steps. First, from the proof of Theorem~\ref{thm-union-preserves-condition}
we know that there exists a $j$ such that $F'_{\ell}(\xnuml) = F_{\ell}(A_{j, -\ell})$.
Second, because $T$ satisfies $1$-Goodness, 
$F_{\ell}(A_{j, -\ell}) = T(Z(A_j,\ell))$ and $F_{\ell}(A^*_{-\ell}) = T(Z(A^*,\ell))$, where $Z$ is a function that 
makes a copy of a hashed stream in which $h(\ell)$ has been artificially set to zero.
Third, $Z(A^*,\ell))$ can be rewritten as the concatenation of 3 streams as follows: $B_0 \circ Z(A_j,\ell)
\circ B_2$, where $B_0 = Z(A_1, \ell) \circ Z(A_2, \ell) \circ \dots, Z(A_{j-1}, \ell)$, and $B_2 = Z(A_{j+1}, \ell) \circ \dots \circ Z(A_m, \ell)$. 
Because $T$ was assumed to satisfy the monotonicity condition, Condition \ref{mildcondition}, we then have
\begin{equation}
F'_{\ell}(\xnuml) = 
T(Z(A_j,\ell)) \ge 
T(B_1 \circ Z(A_j,\ell) \circ B_3) =
T(Z(A^*,\ell)) =
F_{\ell}(A^*_{-\ell}). 
\end{equation}

\end{proof}

\section{Details of the Analysis of the Alpha Algorithm for Single Streams}

\subsection{Proof of Theorem~\ref{space-theorem}}\label{appendix-proof-of-space-theorem}

Let $S$ be the set produced by 
Line~\ref{code:framework-line-3}
of
Algorithm~\ref{code:framework}
when AlphaTCF is plugged
into the Theta Sketch Framework, 
and let $\mathcal{S}$ be the random variable corresponding to $|S|$.
In this section we compute $E(\mathcal{S})$ and bound $\sigma^2(\mathcal{S})$. 

We prove the top-level theorem using a lemma. The proofs of the theorem and lemma
both involve two levels of conditioning. First we condition on the value $\mathcal{I}$ of $i$ when 
Line \ref{code:alpha-tcf-line-15} of 
Algorithm~\ref{code:alpha-tcf} is reached.
Then we further condition on $\mathcal{J}^+$, which we define to be the
particular set of $i$ stream positions on which increments occurred in 
Line~\ref{code:alpha-tcf-line-10} of Algorithm~\ref{code:alpha-tcf}.

\subparagraph*{Restatement of Theorem~\ref{space-theorem}.}
\begin{align}
\ept(\mathcal{S}) = & \; k. \\
\sigma^2(\mathcal{S}) < & \; \frac{k}{2} + \frac{1}{4}.
\end{align}


\begin{proof}
Using standard laws of probability, we perform the following decompositions:
\begin{align}
E(\mathcal{S}) =        & \sum_i \Pr (\mathcal{I}=i) \ept(\mathcal{S} | \mathcal{I}=i) \label{s-eqn-1} \\
\ept(\mathcal{S}| \mathcal{I}=i) = & \sum_{J} \Pr (\mathcal{J}^+=J | \mathcal{I}=i) \ept(\mathcal{S} | \mathcal{I}=i, \mathcal{J}^+ = J) \label{s-eqn-2} \\
E(\mathcal{S}^2) =      & \sum_i \Pr (\mathcal{I}=i) \ept (\mathcal{S}^2 | \mathcal{I}=i) \label {s-eqn-3} \\
\ept_i(\mathcal{S}^2|\mathcal{I}=i) = & \sum_{J} \Pr (\mathcal{J}^+=J | \mathcal{I}=i) \ept (\mathcal{S}^2 | \mathcal{I}=i, \mathcal{J}^+ = J) \label{s-eqn-4}
\end{align}
In Lemma~\ref{space-lemma}, we prove that for all $i$ and $J$,
\begin{align}
\ept(\mathcal{S} | \mathcal{I}=i, \mathcal{J}^+ = J) = & k.
\end{align}
Because this answer does not depend on $J$, the RHS of 
Equation~\eqref{s-eqn-2} is a convex combination of equal values, so
\begin{align*}
\ept(\mathcal{S}|\mathcal{I}=i) = & k.
\end{align*}
Because this answer does not depend on $i$, the RHS of 
Equation~\eqref{s-eqn-1} is a convex combination of equal values, so
\begin{align*}
E(\mathcal{S}) = & k. \\
\end{align*}
In Lemma~\ref{space-lemma}, we also prove that
\begin{align*}
\ept(\mathcal{S}^2 | \mathcal{I}=i, \mathcal{J}^+=J) = & k^2 + \frac{\alpha - \alpha^{2i+1}}{1-\alpha^2}.
\end{align*}
Because this answer does not depend on $J$, the RHS of 
Equation~\eqref{s-eqn-4} is a convex combination of equal values, so
\begin{align}
\ept(\mathcal{S}^2|\mathcal{I}=i) = & k^2 + \frac{\alpha - \alpha^{2i+1}}{1-\alpha^2}. \label{s-eqn-5}
\end{align}
This answer {\em does} depend on $i$, so we cannot use the exact same argument for a fourth time.
However, with a little bit of algebra, one can go from Equation~\eqref{s-eqn-5} to the inequality
\begin{align*}
\ept(\mathcal{S}^2|\mathcal{I}=i) < & k^2 + \frac{k}{2} + \frac{1}{4},
\end{align*}
whose RHS does not depend on $i$. Then the RHS of Equation~\eqref{s-eqn-3}
is a convex combination of values that are all less than 
$k^2 + \frac{k}{2} + \frac{1}{4}$. So:
\begin{align*}
\ept(\mathcal{S}^2) < & k^2 + \frac{k}{2} + \frac{1}{4}. \\
\sigma^2(\mathcal{S}) < & (k^2 + \frac{k}{2} + \frac{1}{4}) - k^2 = \frac{k}{2} + \frac{1}{4}.
\end{align*}
\end{proof}

\noindent Now we will prove the lemma that was used above:

\begin{lemma}\label{space-lemma}
For all $i$ and $J$,
\begin{align}
\ept (\mathcal{S} | \mathcal{I}=i, \mathcal{J}^+=J) = & \; k, \label{space-lemma-line-1} \\
\ept (\mathcal{S}^2 | \mathcal{I}=i, \mathcal{J}^+=J) = & \; k^2 + \frac{\alpha - \alpha^{2i+1}}{1-\alpha^2}.
\end{align}
\end{lemma}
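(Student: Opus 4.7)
The plan is to decompose $\mathcal{S}$ as a sum of indicators, one per stream position, and then carefully characterize the conditional joint distribution of the hash values given the event $\{\mathcal{I}=i,\ \mathcal{J}^+=J\}$. Write $J=\{p_1<p_2<\cdots<p_i\}$ for the $i$ increment positions (all lying in the suffix). Under the conditioning, the threshold trajectory is deterministic: just before processing position $p_m$ the program variable $i$ equals $m-1$, and just after it equals $m$; at any non-increment suffix position $p$, the value $i_p$ is also fixed as $|J\cap\{k+1,\dots,p-1\}|$. The set $S$ output by the framework equals the set of hash values strictly below $\alpha^i$, so I want to count how many distinct stream hash values fall below $\alpha^i$.

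The key structural claim is that, conditional on $\{\mathcal{I}=i,\mathcal{J}^+=J\}$, the hash values in the three groups (prefix of length $k$, the $i$ increment positions, and the remaining non-increment suffix positions) are mutually independent and have the following marginals, up to a measure-zero correction for novelty collisions: (a) each of the $k$ prefix values is uniform on $(0,1)$; (b) each increment value $x_{p_m}$ is uniform on $[0,\alpha^{m-1})$ (it must lie below the current threshold, and the novelty event $x_{p_m}\notin D$ has full measure); (c) each non-increment suffix value $x_p$ is uniform on $[\alpha^{i_p},1)$. Consequence of (c): because the threshold only decreases, $\alpha^{i_p}\ge \alpha^i$, so \emph{no} non-increment suffix position contributes to $\mathcal{S}$. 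Hence $\mathcal{S}=N_P+N_J$, where $N_P$ counts prefix hits below $\alpha^i$ and $N_J$ counts increment hits below $\alpha^i$.

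From (a)--(b), $N_P\sim\mathrm{Binom}(k,\alpha^i)$, and $N_J=\sum_{m=1}^{i} Y_m$ with independent $Y_m\sim\mathrm{Bernoulli}(\alpha^{i-m+1})$, since $\Pr[x_{p_m}<\alpha^i\mid x_{p_m}\text{ unif on }[0,\alpha^{m-1})]=\alpha^i/\alpha^{m-1}$. The identity $\alpha/(1-\alpha)=k$ (from $\alpha=k/(k+1)$) yields
\begin{equation*}
\ept(N_P)=k\alpha^i,\qquad \ept(N_J)=\sum_{j=1}^{i}\alpha^{j}=k(1-\alpha^{i}),
\end{equation*}
so $\ept(\mathcal{S}\mid \mathcal{I}=i,\mathcal{J}^+=J)=k$, establishing the first line of the lemma. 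For the second line, use independence of $N_P$ and $N_J$ to write
\begin{equation*}
\sigma^2(\mathcal{S})=k\alpha^i(1-\alpha^i)+\sum_{j=1}^{i}\alpha^{j}(1-\alpha^{j})
=k(1-\alpha^{2i})-\frac{\alpha^{2}(1-\alpha^{2i})}{1-\alpha^{2}},
\end{equation*}
and then simplify using $\alpha=k/(k+1)$ (which gives $\alpha^2/(1-\alpha^2)=k^2/(2k+1)$ and $k-k^2/(2k+1)=k(k+1)/(2k+1)=\alpha/(1-\alpha^2)$) to obtain $\sigma^2(\mathcal{S}\mid\cdot)=(\alpha-\alpha^{2i+1})/(1-\alpha^2)$. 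Adding $k^2=\ept(\mathcal{S})^2$ delivers the stated second moment.

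The main obstacle is rigorously justifying the conditional distribution claim, especially the independence and the uniform marginals on the specified sub-intervals. The subtle point is the novelty condition $x_{p_m}\notin D_{p_m}$: formally, the conditioning event is the intersection of $\{x_{p_m}<\alpha^{m-1},\ x_{p_m}\notin D_{p_m}\}$ for $m\le i$ and $\{x_p\ge \alpha^{i_p}\}\cup\{x_p<\alpha^{i_p},\ x_p\in D_p\}$ for $p\notin J$, $p>k$. Because the hash function is perfectly random and all stream items are distinct, each $x_p\in D_p$ event has Lebesgue measure zero, so it drops out of the continuous-density calculation, leaving the clean description above. I would make this explicit by writing the joint density on $(0,1)^{n_A}$ as a product of indicators and showing the conditional density factorizes into the claimed product of uniform densities, modulo a null set. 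Once that is in hand, the binomial and sum-of-independent-Bernoulli computations above yield both claims.
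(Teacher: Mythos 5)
Your proposal is correct and follows essentially the same route as the paper's proof: condition on $(\mathcal{I}=i,\mathcal{J}^+=J)$, observe that the per-position hash values become independent and uniform on the appropriate sub-intervals (prefix on $(0,1)$, the $m$-th increment position below $\alpha^{m-1}$, non-increment suffix positions above the current threshold and hence never counted), and then sum the resulting Bernoulli means and variances using $\alpha/(1-\alpha)=k$. Your explicit treatment of the measure-zero novelty/collision events is a minor refinement of the same argument, not a different approach.
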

\begin{proof}
Because Line \ref{code:alpha-tcf-line-9} 
of Algorithm~\ref{code:alpha-tcf} causes the algorithm to ignore duplicate labels,
it will suffice to analyze a stream $A$ of length $n$ that doesn't contain any
duplicates.
Let $h$ be a hash function that is
chosen randomly. Let $\{ X_p | 1 \le p \le n \}$ be a sequence of $n$ {\em iid}
random variables, one per stream position, each drawn from the distribution Uniform(0,1). 
Let $X^{n_A}$ be the cross product of the $X_p$'s; this random variable is our model of $h(A)$.
$\mathcal{I}$ is distributed as a random variable generated by first choosing a random $X^{n_A}$, then running Algorithm~\ref{code:alpha-tcf} on $X^{n_A}$,
and then setting $\mathcal{I}$ to be the value of the program variable $i$ when Line~\ref{code:alpha-tcf-line-15} is reached.
Define a set of $n$ Bernoulli random variables $S_p$, one
per stream position, derived from the variable $\mathcal{I}$ and the variables
$X_p$ by the rule $S_p = 1$ iff $X_p < \alpha^i$.
Note that the $S_p$'s are {\em not} independent of each other. However, as we will see, they become 
independent after conditioning on the event ($\mathcal{I}=i$ and $\mathcal{J}^+=J$).

Now we will describe the effect of conditioning on both $\mathcal{I}=i$ and $\mathcal{J}^+=J$,
by first describing how the original variables $X_p$ are transformed
into modified variables $Y_p$ that are drawn from specific
subintervals of $(0,1)$.  We will then introduce new Bernoulli
variables $S'_p$ defined by the rule $S'_p = 1$ iff $Y_p <
\alpha^i$, and finally compute the expected value and variance of
$(\mathcal{S}|\mathcal{I}=i,\mathcal{J}^+=J) = \sum_p S'_p$.

We are fixing a specific value $i$ of $\mathcal{I}$ and $J$ of $\mathcal{J}^+$; the latter is a size-$i$ subset of the set of 
$n-k$ non-initial stream positions $\{k+1, k+2, \ldots, n-1, n\}$.
The set of $n-k-i$ non-initial stream positions that are not in $J$ will be 
referred to as $J^-$.
Let $f(p,J)$ be the function that maps any non-initial position
$p$ 
to the number of non-initial positions before $p$ that are members of $J$.
We note that for $p \in J$, $f(p,J) \in \{0,1,\ldots,i-1\}$, and the
mapping is one-to-one.
For $p \in J^-$, $f(p,J) \in \{0,1,\ldots,i\}$, and the mapping is not necessarily one-to-one.

Now we are ready to characterize the $Y_p$'s and the $S'_p$'s. 

First let $p$ be one of the $k$ initial positions in the stream.
In this case, conditioning on $\mathcal{I}=i$ and $\mathcal{J}^+=J$ does not tell us anything about the value of $X_p$,
so $Y_p$ is drawn from the full interval $(0,1)$, so $\Pr(Y_p < \alpha^i) = \alpha^i$;
$E(\mathcal{S}'_p) = \alpha^i$, and $\sigma^2(S'_p) = \alpha^i (1 - \alpha^i)$.

Next, let $p$ be one of the $n-k-i$ positions in $J^-$. 
For this position, the test in 
Line \ref{code:alpha-tcf-line-8} of Algorithm \ref{code:alpha-tcf}
failed, so we know that $X_p \ge \alpha^{f(p,J)}$,
so $Y_p$ is drawn uniformly from the interval $[\alpha^{f(p,J)},1)$. Because $p \in J^-$,
$f(p,J) \le i$, so $\alpha^i \le \alpha^{f(p,J)} \le Y_p$, 
so $\Pr(Y_p < \alpha^i) = 0$, $E(S'_p) = 0$, and $\sigma^2(S'_p) = 0$.

Finally, let $p$ be one of the $i$ positions that are in $J$. For this position,
the test in 
Line \ref{code:alpha-tcf-line-8} of Algorithm \ref{code:alpha-tcf}
succeeded, so we know that $X_p < \alpha^{f(p,J)}$,
so $Y_p$ is drawn uniformly from the interval $(0,\alpha^{f(p,J)})$, so
$\Pr(Y_p < \alpha^i) = \alpha^i/\alpha^{f(p,J)} = \alpha^{i-f(p,J)}$.
Now, because $f(p,J)$ assumes each value in $\{0,1,\ldots,i-1\}$ as
$p$ is varied over the contents of $J$, $E(S'_p) = \Pr(Y_p < \alpha^i) =
\alpha^{i-f(p,J)}$ assumes each value in 
$\{\alpha^i, \alpha^{i-1}, \ldots, \alpha^1\}$.
Similarly, $\sigma^2(S'_p)$ assumes each value in
$\{
\alpha^i (1\!- \!\alpha^i),
\ldots
\alpha^1 (1\!- \!\alpha^1)
\}$. 
Note that the above analysis implies that the $S_p$'s are independent of each other after conditioning on the 
event ($\mathcal{I}=i$ and $\mathcal{J}^+=J$).

Putting together all of the above, and remembering that the random variables
$S'_p$ are independent due to the conditioning on $\mathcal{I}=i$ and $\mathcal{J}^+=J$:
\begin{align*}
E(\mathcal{S}|\mathcal{I}=i,\mathcal{J}^+=J) = & k \cdot \alpha^i + (n-k-i) \cdot 0 + \sum_{j=1}^{i} \alpha^j = k.
\end{align*}
Here, we have used the fact that $\alpha = \frac{k}{k+1}$.
In addition:
\begin{align*}
\sigma^2(\mathcal{S}| \mathcal{I}=i,\mathcal{J}^+=J) = & k \alpha^i \cdot (1 \!-\! \alpha^i) + (n\!-\!k\!-\!i) \cdot 0 + \sum_{j=1}^{i} \alpha^j (1 \!-\!\alpha^j) \\
             = & \frac{\alpha - \alpha^{2i+1}}{1 - \alpha^2}, \\
E(\mathcal{S}^2|\mathcal{I}=i,\mathcal{J}^+ = J) 
= & k^2 + \frac{\alpha - \alpha^{2i+1}}{1 - \alpha^2}
\end{align*}
\end{proof}

\subsection{Proof of Theorem~\ref{thm:alpha-basic-variance}}\label{appendix-proof-of-alpha-basic-variance}


\noindent {\bf Preliminaries and Notation:} Because Line~\ref{code:alpha-tcf-line-9} 
of Algorithm~\ref{code:alpha-tcf} causes the algorithm to ignore duplicate labels,
it will suffice to analyze streams that do not contain any duplicates.
$\mathcal{S}$ and $\mathcal{I}$ are random variables giving
the final values of $|S|$ and $i$ when Line~\ref{code:alpha-tcf-line-15} of Algorithm
\ref{code:alpha-tcf} is
reached. Let $\mathcal{Z} = \mathcal{S} / \alpha^{\mathcal{I}}$ denote the
random variable for the estimate produced by the Theta Sketch
framework when the Alpha Algorithm's TCF is used. 
It will be convenient to introduce a new variable $u = n_A - k$
representing the number of stream items that are processed by the 
Alpha Algorithm {\em after} $k$ initial items have been processed to
initialize the set $S$.  Recall that $\alpha = k/(k+1)$. 





\subparagraph*{Restatement of Theorem~\ref{thm:alpha-basic-variance}.}
\[
\sigma^2(\mathcal{Z}) = \frac{(2k+1)n_A^2 - (k^2+k)(2n_A-1)- n_A}{2k^2} < \frac{n_A^2}{k - \frac{1}{2}}.
\]
\begin{proof}
Due the 1-goodness of the Alpha Algorithm's TCF, we already 
know that $\mathcal{Z}$ is an unbiased estimator for $n_A$, i.e., that $E(\mathcal{Z}) = n_A = k+u$. Hence,
\begin{align*}
\sigma^2(\mathcal{Z}) 
= & E(\mathcal{Z}^2) - E^2(\mathcal{Z}) \\
= & E(\mathcal{Z}^2) - (k+u)^2.
\end{align*}

Lemma~\ref{lemma:e-z-sq} (stated and proved below) gives a formula for $E(\mathcal{Z}^2)$. This allows us to complete the 
the analysis of $\sigma^2(\mathcal{Z})$ as follows:
\begin{align*}
             E(\mathcal{Z}^2) - (k+u)^2         = & \frac{k^2u + ku^2 + u(u\!-\!1)/2}{k^2} \\
                                = & \frac{(2k+1)n_A^2 - (k^2+k)(2n_A-1)- n_A}{2k^2} \\
                                          < & \frac{n_A^2}{k - \frac{1}{2}}.
\end{align*}
\end{proof}

\begin{lemma}\label{lemma:e-z-sq}
\begin{align*}
E(\mathcal{Z}^2) = & \frac{k^2u + ku^2 + u(u\!-\!1)/2 + k^4 + 2k^3u + k^2u^2}{k^2}. \\
\end{align*}
\end{lemma}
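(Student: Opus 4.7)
The plan is to reduce the computation of $E(\mathcal{Z}^2)$ to the computation of $E(\alpha^{-2\mathcal{I}})$, and then evaluate the latter via a recurrence on the number of post-initial stream positions. The first step is to condition on the pair $(\mathcal{I},\mathcal{J}^+)$ and use the explicit formula from Lemma~\ref{space-lemma}, namely $E(\mathcal{S}^2 \mid \mathcal{I}=i, \mathcal{J}^+=J) = k^2 + (\alpha - \alpha^{2i+1})/(1-\alpha^2)$. Because $\alpha^{-2\mathcal{I}}$ depends only on $\mathcal{I}$, dividing through by $\alpha^{2i}$ gives a conditional expectation independent of $J$, so averaging over $J$ is vacuous and
\[
E(\mathcal{Z}^2 \mid \mathcal{I}=i) = k^2 \alpha^{-2i} + \frac{\alpha(\alpha^{-2i}-1)}{1-\alpha^2}.
\]
Taking the outer expectation over $\mathcal{I}$ then yields $E(\mathcal{Z}^2) = k^2 E(\alpha^{-2\mathcal{I}}) + \frac{\alpha}{1-\alpha^2}(E(\alpha^{-2\mathcal{I}}) - 1)$, where $\alpha/(1-\alpha^2) = k(k+1)/(2k+1)$.

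For $E(\alpha^{-2\mathcal{I}})$, I would reuse the observation from the proof of Lemma~\ref{space-lemma}: on a duplicate-free stream the novelty test in line~\ref{code:alpha-tcf-line-9} is automatic, so after the initial $k$ items the sequence of $i$-values evolves as a Markov chain $\mathcal{I}_{u+1} = \mathcal{I}_u + \mathrm{Bernoulli}(\alpha^{\mathcal{I}_u})$ starting at $\mathcal{I}_0 = 0$. A one-step computation gives
\[
E(\alpha^{-\mathcal{I}_{u+1}} \mid \mathcal{I}_u = i) = (1-\alpha^i)\alpha^{-i} + \alpha^i \alpha^{-(i+1)} = \alpha^{-i} + 1/k,
\]
which telescopes to $E(\alpha^{-\mathcal{I}_u}) = (k+u)/k$. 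An analogous one-step computation gives $E(\alpha^{-2\mathcal{I}_{u+1}} \mid \mathcal{I}_u=i) = \alpha^{-2i} + \frac{2k+1}{k^2}\alpha^{-i}$, so the second recurrence is driven by the first, and summing yields
\[
E(\alpha^{-2\mathcal{I}_u}) = 1 + \frac{(2k+1)\,u\,(2k+u-1)}{2k^3}.
\]

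Substituting this expression for $E(\alpha^{-2\mathcal{I}})$ into the formula for $E(\mathcal{Z}^2)$ and collecting terms should produce the claimed numerator $k^2 u + ku^2 + u(u-1)/2 + k^4 + 2k^3 u + k^2 u^2$. The main obstacle is neither conceptual nor combinatorial but rather the patience required for the final algebraic simplification, in which the factors $k(k+1)/(2k+1)$ and $(2k+1)/(2k^3)$ conspire to cancel the $(2k+1)$ denominator. A useful sanity check along the way is that the leading term $k^2 + 2ku + u^2 = (k+u)^2 = n_A^2$ emerges cleanly, which is forced by $E(\mathcal{Z}) = n_A$ (already established by $1$-Goodness) and lets the remaining lower-order terms be identified with $\sigma^2(\mathcal{Z})$. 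I note that the derivation of $E(\alpha^{-\mathcal{I}_u}) = (k+u)/k$ appearing as an intermediate step is exactly the unbiasedness of the HIP estimator $k/\alpha^{\mathcal{I}}$, so these computations also furnish the raw ingredients needed for the proof of Theorem~\ref{thm:hip} in Appendix~\ref{app:hip}.
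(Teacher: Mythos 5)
Your proposal is correct and follows essentially the same route as the paper: you reduce to $E(\mathcal{S}^2\mid\mathcal{I}=i)$ via Lemma~\ref{space-lemma} and then to $E(\alpha^{-2\mathcal{I}})$, which is exactly the paper's $g(2,k,u)$, and your one-step Markov computation $E(\alpha^{-q\mathcal{I}_{u+1}}\mid\mathcal{I}_u=i)=\alpha^{-qi}+\frac{1-\alpha^q}{\alpha^q}\alpha^{-(q-1)i}$ is precisely the recurrence of Lemma~\ref{lemma:gqku-recurrences}. The only (cosmetic) difference is that you solve that recurrence by telescoping forward, whereas the paper verifies guessed closed forms by induction; your formula for $E(\alpha^{-2\mathcal{I}_u})$ agrees with Equation~\eqref{eqn:g-2-k-u}, and the remaining algebra indeed yields the claimed numerator.
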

\begin{proof}
\begin{align*}
E(\mathcal{Z}^2) 
= & \sum_{i=0}^u \sum_{s=0}^{k+i} \left( \frac{s}{\alpha^i} \right)^2 
    \mathrm{Pr}(\mathcal{S}\!=\!s| \mathcal{I}=i) \; \mathrm{Pr}(\mathcal{I}\!=\!i;u) \\
= & \sum_{i=0}^u \frac{1}{\alpha^{2i}} 
\mathrm{Pr}(\mathcal{I}\!=\!i;u)
\sum_{s=0}^{k+i} s^2 
\mathrm{Pr}(\mathcal{S}\!=\!s| \mathcal{I}=i) \\
= & \sum_{i=0}^u \frac{1}{\alpha^{2i}} 
\mathrm{Pr}(\mathcal{I}\!=\!i;u)
E(\mathcal{S}^2| \mathcal{I}=i) \\
\end{align*}

Above, we use the somewhat onerous notation $\mathrm{Pr}(\mathcal{I}\!=\!i;u)$ to emphasize that the distribution
of $\mathcal{I}$ depends on the fixed quantity $u$ (i.e., on the number of distinct elements in the stream, minus $k$). Making this dependence explicit
will be useful later, when we analyze this distribution by establishing recurrences involving $u$.

A formula for $E(\mathcal{S}^2|\mathcal{I}=i)$ appeared in Equation~\eqref{s-eqn-5}. 
Substituting this formula and continuing:

\begin{align}
= & \sum_{i=0}^u \frac{1}{\alpha^{2i}} 
\mathrm{Pr}(\mathcal{I}\!=\!i;u) \notag
\left(  \frac{\alpha-\alpha^{2i+1}}{1-\alpha^2} + k^2  \right) \\
= & \label{bigexpression} \frac{1}{1-\alpha^2} 
\left[
\alpha \sum_{i=0}^u \frac{1}{\alpha^{2i}} \mathrm{Pr}(\mathcal{I}\!=\!i;u) -
\sum_{i=0}^u \alpha \mathrm{Pr}(\mathcal{I}\!=\!i;u)
\right] 
 + k^2 \sum_{i=0}^u \frac{1}{\alpha^{2i}} \mathrm{Pr}(\mathcal{I}\!=\!i;u).
 \end{align}

Define the function

\begin{equation} \label{gdef}
g(q,k,u) = \sum_{i=0}^u \frac{1}{\alpha^{q \cdot i}} \mathrm{Pr}(\mathcal{I}\!=\!i;u)
\end{equation}

\noindent where $\mathrm{Pr}(\mathcal{I}\!=\!i;u)$ is the probability distribution 
governing the random variable $\mathcal{I}$.
In Section~\ref{gqku-section} we prove Lemma~\ref{lemma:gqku-formulas}, which includes the following 
formula for $g(2,k,u)$:

\begin{equation}
g(2,k,u) = \;\frac{k^3 + 2 k^2 u + ku^2 + u(u\!-\!1)/2}{k^3}. \label{eqn:g-2-k-u}
\end{equation}

By the definition of $g(q, k, u)$, we have that Expression \eqref{bigexpression} equals:
 \begin{align*}
& \frac{1}{1-\alpha^2} 
\left[ \alpha \cdot g(2,k,u) - \alpha \cdot 1 \right] + k^2 g(2,k,u) \\
= & \left(\frac{\alpha}{1-\alpha^2} + k^2\right) \cdot g(2,k,u) - \frac{\alpha}{1-\alpha^2} \\
= & \frac{k(2k^2+2k+1)}{2k+1}
\cdot g(2,k,u) - 
\frac{k(k+1)}{2k+1}. \\
\end{align*}

\noindent Finally, substituting Equation~\eqref{eqn:g-2-k-u}'s formula for $g(2,k,u)$ and performing
elementary algebra manipulations yields the result:

\begin{align*}
= & \frac{
\begin{array}{c}
2k^5 + 4k^4u + 2k^3u^2 + 3k^2u^2 + k^4 + 4k^3u + 2ku^2 + k^2u - ku + u(u-1)/2
\end{array}
}{k^2 (2k+1)} \\
= & \frac{k^2u + ku^2 + u(u\!-\!1)/2 + k^4 + 2k^3u + k^2u^2}{k^2}.
\end{align*}
\end{proof}


\subsection{Analysis of the function $g(q,k,u)$}\label{gqku-section}

Recall from the proof of Lemma \ref{lemma:e-z-sq} that $\mathrm{Pr}(\mathcal{I}\!=\!i ; u)$ is the probability distribution
governing the final value of the Alpha Algorithm's variable $i$. The analysis
of approximate counting in \cite{flajolet1985approximate} includes an explanation
of why the following base cases and
recurrence define the distribution $\mathrm{Pr}(\mathcal{I}\!=\!i ; u)$.

\begin{align*}
\mathrm{Pr}(\mathcal{I}\!=\!0 ; 0) = & 1 \\
\mathrm{Pr}(\mathcal{I}\!=\!i ; 0) = & 0, \quad \forall i > 0 \\
\mathrm{Pr}(\mathcal{I}\!=\!0 ; u) = & 0, \quad \forall u > 0 \\
\mathrm{Pr}(\mathcal{I}\!=\!i ; u) = & (1 - \alpha^i) \cdot \mathrm{Pr}(\mathcal{I}\!=\!i ; u\!-\!1) \; +
 \alpha^{i\!-\!1} \cdot \mathrm{Pr}(\mathcal{I}\!=\!i\!-\!1 ; u\!-\!1), \quad \forall i\!>\!0, \forall u\!>\!0
\end{align*}

Recall that $\alpha = k/(k+1)$, and define the function $g(q,k,u)$ as in Equation \eqref{gdef}:

\begin{equation*}
g(q,k,u) = \sum_{i=0}^u \frac{1}{\alpha^{q \cdot i}} \mathrm{Pr}(\mathcal{I}\!=\!i;u).
\end{equation*}

We will now prove two lemmas that partially characterize the function $g(q,k,u)$.
In Lemma~\ref{lemma:gqku-recurrences},  
we will prove that $g(q,k,u)$ satisfies a certain recurrence. In Lemma~\ref{lemma:gqku-formulas}
we will use that recurrence to prove the correctness of explicit formulas for 
$g(0,k,u)$, $g(1,k,u)$, and $g(2,k,u)$.

\begin{lemma}\label{lemma:gqku-recurrences}
$g(q,k,u)$ satisfies the following base cases and recurrence:
\begin{align}
g(0,k,u) = & \;1 \notag \\
g(q,k,0) = & \;1 \notag \\
g(q,k,u\!+\!1) = & g(q,k,u) + \left(\frac{1-\alpha^q}{\alpha^q}\right) \cdot g(q\!-\!1,k,u) \label{eqn:gqku-recurrence}
\end{align}
\end{lemma}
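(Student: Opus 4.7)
The two base case identities are essentially bookkeeping: when $q = 0$ the weight $1/\alpha^{q \cdot i}$ collapses to $1$, so $g(0,k,u)$ reduces to $\sum_{i=0}^u \mathrm{Pr}(\mathcal{I}=i;u)$, which equals $1$ because the listed recurrence defines a probability distribution (and this can be verified by a trivial induction on $u$ using the given recurrence for $\mathrm{Pr}(\mathcal{I}=i;u)$). Similarly, $g(q,k,0)$ has only the single term $i = 0$, and $\mathrm{Pr}(\mathcal{I}=0;0) = 1$, giving $g(q,k,0) = 1$. I would dispatch both of these in one or two lines.

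The heart of the lemma is the recurrence \eqref{eqn:gqku-recurrence}. The plan is to expand $g(q,k,u+1)$ using its definition and substitute the given recurrence for $\mathrm{Pr}(\mathcal{I}=i;u+1)$ term by term. The $i=0$ term vanishes because $\mathrm{Pr}(\mathcal{I}=0;u+1) = 0$, so I can start the sum at $i = 1$ and write
\[
g(q,k,u+1) = \underbrace{\sum_{i=1}^{u+1} \alpha^{-qi}(1-\alpha^i)\,\mathrm{Pr}(\mathcal{I}=i;u)}_{=:\,S_1} + \underbrace{\sum_{i=1}^{u+1} \alpha^{-qi}\alpha^{i-1}\,\mathrm{Pr}(\mathcal{I}=i-1;u)}_{=:\,S_2}.
\]

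For $S_1$, I would first note that $\mathrm{Pr}(\mathcal{I}=u+1;u) = 0$ (since after $u$ steps the counter cannot exceed $u$) and that $\mathrm{Pr}(\mathcal{I}=0;u) = 0$ for $u \ge 1$ (and the $(1-\alpha^0)$ factor kills the missing term when $u=0$), so the range can be rewritten as $i = 0,\ldots,u$. Splitting the factor $(1 - \alpha^i)$ then gives $S_1 = g(q,k,u) - g(q-1,k,u)$. For $S_2$, the reindexing $j = i - 1$ together with the identity $\alpha^{-qi}\alpha^{i-1} = \alpha^{-q}\cdot\alpha^{-(q-1)j}$ turns the sum into $\alpha^{-q}\, g(q-1,k,u)$. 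Adding these and simplifying $\alpha^{-q} - 1 = (1 - \alpha^q)/\alpha^q$ yields exactly \eqref{eqn:gqku-recurrence}.

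I do not expect any step to be a serious obstacle: everything is a direct manipulation of sums using the stated recurrence for $\mathrm{Pr}(\mathcal{I}=i;u)$. The only mild care point is correctly handling the boundary indices ($i=0$ and $i=u+1$) when swapping the summation range, and confirming that the small-$u$ edge cases (notably $u = 0$) are consistent with the formulas; I would explicitly check the $u = 0$ instance $g(q,k,1) = 1/\alpha^q$ as a sanity check against the recurrence.
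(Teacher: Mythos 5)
Your proposal is correct and follows essentially the same route as the paper: expand $g(q,k,u+1)$ via the recurrence for $\mathrm{Pr}(\mathcal{I}=i;u)$, split into the two sums, show the first equals $g(q,k,u)-g(q-1,k,u)$ and the second (after the shift $j=i-1$) equals $\alpha^{-q}g(q-1,k,u)$, and add. The boundary-index bookkeeping you flag (the $i=0$ term killed by the $1-\alpha^0$ factor and the vanishing $i=u+1$ term) is exactly how the paper handles it, and the base cases are likewise dispatched by inspection.
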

\begin{proof}
The base cases can be verified by inspection. The recurrence can be derived from the
recurrence for $Pr(\mathcal{I}\!=\!i ; u)$ as follows:
\begin{align*} 
  g(q,k,u\!+\!1)
= & \sum_{i=0}^{u+1} \frac{1}{\alpha^{q\cdot i}} \mathrm{Pr}(i;u+1) \\
= & \left[ \sum_{i=0}^{u} \frac{1}{\alpha^{q \cdot i}} (1-\alpha^i) \mathrm{Pr}(i;u) \right] + 0 
   + 0 + \left[ \sum_{i=1}^{u+1} \frac{1}{\alpha^{q \cdot i}} (\alpha^{i-1}) \mathrm{Pr}(i\!-\!1;u) \right].
\end{align*} 
\noindent Now we will consider the two bracketed sums. For the first one:
\begin{align} 
  & \sum_{i=0}^{u} \frac{1}{\alpha^{q \cdot i}} (1-\alpha^i) \mathrm{Pr}(i;u) \notag \\
= & \sum_{i=0}^u \frac{1}{\alpha^{q\cdot i}} \mathrm{Pr}(i;u) - \sum_{i=0}^u \frac{\alpha^i}{\alpha^{q \cdot i}} \mathrm{Pr}(i;u) \notag \\
= & g(q,k,u) - \sum_{i=0}^u \frac{1}{\alpha^{(q-1)\cdot i}} \mathrm{Pr}(i;u) \notag \\
= & g(q,k,u) - g(q-1,k,u).\label{first-bracketed-sum}
\end{align}
\noindent For the second one (performing the change of variables $j = i\!-\!1$):
\begin{align} 
  & \sum_{i=1}^{u+1} \frac{1}{\alpha^{q\cdot i}} (\alpha^{i-1}) \mathrm{Pr}(i\!-\!1;u) \notag \\
= & \sum_{j=0}^{u} \frac{1}{\alpha^{q \cdot (j+1)}} (\alpha^{j}) \mathrm{Pr}(j;u)\notag \\
= & \frac{1}{\alpha^q}\sum_{j=0}^{u} \frac{1}{\alpha^{(q-1)j}} \mathrm{Pr}(j;u) \notag \\
= & \frac{1}{\alpha^q} g(q-1,k,u).\label{second-bracketed-sum}
\end{align}
\noindent Adding (\ref{first-bracketed-sum}) and (\ref{second-bracketed-sum})
yields the claimed result:
\begin{align*} 
  & g(q,k,u) - g(q-1,k,u) + \frac{1}{\alpha^q} g(q-1,k,u) \\
= & g(q,k,u) + \left(\frac{1-\alpha^q}{\alpha^q}\right) \cdot g(q\!-\!1,k,u)
\end{align*}
\end{proof}

\begin{lemma}\label{lemma:gqku-formulas}
For all integers $u \geq 0$: 
\begin{align*}
g(0,k,u) = & \;1 \\
g(1,k,u) = & \;\frac{k+u}{k} \\
g(2,k,u) = & \;\frac{k^3 + 2 k^2 u + ku^2 + u(u\!-\!1)/2}{k^3}
\end{align*}
\end{lemma}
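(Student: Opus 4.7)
The plan is to prove the three formulas by induction on $u$, working up through the values of $q$ in succession, each time feeding the freshly proved formula for $g(q-1, k, u)$ into the recurrence from Lemma~\ref{lemma:gqku-recurrences}. The case $g(0,k,u) = 1$ is immediate, since it is literally one of the base cases of that lemma.

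For $g(1,k,u)$, I would first simplify the coefficient appearing in the recurrence at $q=1$. Using $\alpha = k/(k+1)$, a short calculation gives $(1-\alpha)/\alpha = 1/k$. Substituting $g(0,k,u) = 1$ into the recurrence then reduces it to the very simple form $g(1,k,u+1) = g(1,k,u) + 1/k$. Combined with the base case $g(1,k,0) = 1$, induction on $u$ immediately yields $g(1,k,u) = 1 + u/k = (k+u)/k$.

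For $g(2,k,u)$, I would again start by simplifying the coefficient; using $\alpha^2 = k^2/(k+1)^2$, one finds $(1-\alpha^2)/\alpha^2 = ((k+1)^2 - k^2)/k^2 = (2k+1)/k^2$. Plugging the formula $g(1,k,u) = (k+u)/k$ into the recurrence then produces
\[
g(2,k,u+1) \;=\; g(2,k,u) \;+\; \frac{(2k+1)(k+u)}{k^3}.
\]
Telescoping down to the base case $g(2,k,0) = 1$ gives $g(2,k,u) = 1 + \frac{2k+1}{k^3}\sum_{j=0}^{u-1}(k+j) = 1 + \frac{(2k+1)(ku + u(u-1)/2)}{k^3}$. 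A short algebraic rearrangement then matches this with the claimed closed form $\frac{k^3 + 2k^2u + ku^2 + u(u-1)/2}{k^3}$; equivalently, one could simply verify inductively that this closed form satisfies the recurrence, which reduces the proof to a purely algebraic identity.

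There is no real obstacle here: the lemma is a routine consequence of Lemma~\ref{lemma:gqku-recurrences}, and the only steps that require any care are the simplifications $(1-\alpha)/\alpha = 1/k$ and $(1-\alpha^2)/\alpha^2 = (2k+1)/k^2$ coming from $\alpha = k/(k+1)$, plus the final algebraic bookkeeping to see that the telescoped sum collapses to the stated polynomial in $u$.
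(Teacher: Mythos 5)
Your proposal is correct and follows essentially the same route as the paper: both arguments establish the formulas by induction on $u$ via the recurrence of Lemma~\ref{lemma:gqku-recurrences}, using the base cases and the simplifications $(1-\alpha)/\alpha = 1/k$ and $(1-\alpha^2)/\alpha^2 = (2k+1)/k^2$. The only cosmetic difference is that you telescope the recurrence to derive the closed forms, whereas the paper verifies that the stated closed forms satisfy the recurrence; as you note yourself, these are interchangeable.
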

\begin{proof}
$g(0,k,u) =1$ can be verified by inspection.

\vspace{0.5em}
\noindent The result for $q=1$ can be proved by noting that
$\frac{k+u}{k}$ satisfies the same base cases and recurrence as $g(1,k,u)$.
The base cases can be verified by inspection. The recurrence (\ref{eqn:gqku-recurrence})
can be verified as follows:
\begin{equation*}
\frac{k+(u+1)}{k} = \frac{k+u}{k} + \frac{1-\alpha}{\alpha} \cdot 1 = \frac{k+u}{k} + \frac{1}{k}.
\end{equation*}

\vspace{0.5em}
\noindent The result for $q=2$ can be proved by noting that
$\frac{k^3 + 2 k^2 u + ku^2 + u(u\!-\!1)/2}{k^3}$ satisfies the same base cases and recurrence as $g(2,k,u)$.
The base cases can be verified by inspection. The recurrence (\ref{eqn:gqku-recurrence})
can be verified as follows:
\begin{align*}
   & g(2,k,u) + \left(\frac{1-\alpha^2}{\alpha^2}\right) \cdot g(1,k,u) \\
 = & g(2,k,u) + \frac{1}{\alpha^2} \cdot \frac{k+u}{k} - \frac{k+u}{k} \\
 = & g(2,k,u) + \frac{(k+1)^2}{k^2} \cdot \frac{k+u}{k} - \frac{k+u}{k} \\
 = & g(2,k,u) + \frac{2k^2+k+2uk+u}{k^3} \\
 = & \frac{k^3 + 2k^2u +2k^2 + ku^2 + 2uk + k + \frac{1}{2} u^2 + \frac{1}{2} u}{k^3} \\
 = & \frac{k^3 + 2k^2(u+1) + k(u+1)^2 + (u+1)u/2}{k^3} \\
 = & g(2,k,u+1).
\end{align*}
\end{proof}

\subsection{Analysis of the Alpha Algorithm's HIP Estimator: Proof of Theorem \ref{thm:hip}}
\label{app:hip}

\begin{proof}
As in Appendix~\ref{appendix-proof-of-alpha-basic-variance}, let
 $u = n_A - k$, and let $\mathcal{I}$ denote the random variable
for the final value of the Alpha algorithm's variable $i$.
\begin{align*}
 E(k/\alpha^{\mathcal{I}})  = & \sum_{i=0}^{u}\frac{k}{\alpha^i} \cdot \mathrm{Pr}(\mathcal{I}\!=\!i ; u) \\ 
  = & k \cdot g(1,k,u) \\
= & k\cdot\frac{k+u}{k} = (k+u) = n_A \\
\vspace{0.75em}
 \sigma^2(k/\alpha^{\mathcal{I}}) = & - E^2(k/\alpha^{\mathcal{I}}) + E((k/\alpha^{\mathcal{I}})^2) \\
 = & -n_A^2 + \sum_{i=0}^{u}\frac{k^2}{\alpha^{2i}} \cdot \mathrm{Pr}(\mathcal{I}\!=\!i ; u) \\ 
  = & -n_A^2 + k^2 \cdot g(2,k,u) \\
  = & \frac{u(u-1)}{2k} \\
  = & \frac{n_A^2-2n_Ak+k^2-n_A+k}{2k} \\
  < & \frac{n_A^2}{2k}
\end{align*}
\end{proof}


\begin{thebibliography}{10}

\bibitem{AMS99}
N.~Alon, Y.~Matias, and M.~Szegedy.
\newblock The space complexity of approximating the frequency moments.
\newblock {\em J. Comput. Syst. Sci.}, 58(1):137--147, 1999.

\bibitem{deamortizedcuckoo1}
Y.~Arbitman, M.~Naor, and G.~Segev.
\newblock De-amortized cuckoo hashing: Provable worst-case performance and
  experimental results.
\newblock In S.~Albers, A.~Marchetti{-}Spaccamela, Y.~Matias, S.~E.
  Nikoletseas, and W.~Thomas, editors, {\em Automata, Languages and
  Programming, 36th International Colloquium, {ICALP} 2009, Rhodes, Greece,
  July 5-12, 2009, Proceedings, Part {I}}, volume 5555 of {\em Lecture Notes in
  Computer Science}, pages 107--118. Springer, 2009.

\bibitem{deamortizedcuckoo2}
Y.~Arbitman, M.~Naor, and G.~Segev.
\newblock Backyard cuckoo hashing: Constant worst-case operations with a
  succinct representation.
\newblock In {\em 51th Annual {IEEE} Symposium on Foundations of Computer
  Science, {FOCS} 2010, October 23-26, 2010, Las Vegas, Nevada, {USA}}, pages
  787--796. {IEEE} Computer Society, 2010.

\bibitem{bar2002counting}
Z.~Bar-Yossef, T.~Jayram, R.~Kumar, D.~Sivakumar, and L.~Trevisan.
\newblock Counting distinct elements in a data stream.
\newblock In {\em Randomization and Approximation Techniques in Computer
  Science}, pages 1--10. Springer, 2002.

\bibitem{beyer2009distinct}
K.~Beyer, R.~Gemulla, P.~J. Haas, B.~Reinwald, and Y.~Sismanis.
\newblock Distinct-value synopses for multiset operations.
\newblock {\em Communications of the ACM}, 52(10):87--95, 2009.

\bibitem{chen2007simple}
A.~Chen, J.~Cao, and T.~Bu.
\newblock A simple and efficient estimation method for stream expression
  cardinalities.
\newblock In {\em Proceedings of the 33rd international conference on Very
  large data bases}, pages 171--182. VLDB Endowment, 2007.

\bibitem{cohennew}
E.~Cohen.
\newblock All-distances sketches, revisited: {HIP} estimators for massive
  graphs analysis.
\newblock In R.~Hull and M.~Grohe, editors, {\em Proceedings of the 33rd {ACM}
  {SIGMOD-SIGACT-SIGART} Symposium on Principles of Database Systems, PODS'14,
  Snowbird, UT, USA, June 22-27, 2014}, pages 88--99. {ACM}, 2014.

\bibitem{cohen2007summarizing}
E.~Cohen and H.~Kaplan.
\newblock Summarizing data using bottom-k sketches.
\newblock In {\em Proceedings of the Twenty-sixth Annual ACM Symposium on
  Principles of Distributed Computing}, PODC '07, pages 225--234, New York, NY,
  USA, 2007. ACM.

\bibitem{cohen2009leveraging}
E.~Cohen and H.~Kaplan.
\newblock Leveraging discarded samples for tighter estimation of multiple-set
  aggregates.
\newblock In {\em Proceedings of the eleventh international joint conference on
  Measurement and modeling of computer systems}, pages 251--262. ACM, 2009.

\bibitem{Cormode11sketch}
G.~Cormode.
\newblock Sketch techniques for massive data.
\newblock In G.~Cormode, M.~Garofalakis, P.~Haas, and C.~Jermaine, editors,
  {\em Synposes for Massive Data: Samples, Histograms, Wavelets and Sketches},
  Foundations and Trends in Databases. NOW publishers, 2011.

\bibitem{DuffieldLT07}
N.~G. Duffield, C.~Lund, and M.~Thorup.
\newblock Priority sampling for estimation of arbitrary subset sums.
\newblock {\em Journal of the ACM}, 54(6), 2007.

\bibitem{flajolet1985approximate}
P.~Flajolet.
\newblock Approximate counting: a detailed analysis.
\newblock {\em BIT Numerical Mathematics}, 25(1):113--134, 1985.

\bibitem{flajolet1990adaptive}
P.~Flajolet.
\newblock On adaptive sampling.
\newblock {\em Computing}, 43(4):391--400, 1990.

\bibitem{flajolet2008hyperloglog}
P.~Flajolet, {\'E}.~Fusy, O.~Gandouet, and F.~Meunier.
\newblock Hyperloglog: the analysis of a near-optimal cardinality estimation
  algorithm.
\newblock {\em DMTCS Proceedings}, 0(1), 2008.

\bibitem{flajolet1985probabilistic}
P.~Flajolet and G.~Nigel~Martin.
\newblock Probabilistic counting algorithms for data base applications.
\newblock {\em Journal of computer and system sciences}, 31(2):182--209, 1985.

\bibitem{ganguly2003processing}
S.~Ganguly, M.~Garofalakis, and R.~Rastogi.
\newblock Processing set expressions over continuous update streams.
\newblock In {\em Proceedings of the 2003 ACM SIGMOD international conference
  on Management of data}, pages 265--276. ACM, 2003.

\bibitem{gibbons2007distinct}
P.~B. Gibbons.
\newblock Distinct-values estimation over data streams.
\newblock {\em Data Stream Management: Processing High-Speed Data Streams, M.
  Garofalakis, J. Gehrke, and R. Rastogi, Eds. Springer, New York, NY, USA},
  2007.

\bibitem{gibbons2001estimating}
P.~B. Gibbons and S.~Tirthapura.
\newblock Estimating simple functions on the union of data streams.
\newblock In {\em Proceedings of the thirteenth annual ACM symposium on
  Parallel algorithms and architectures}, pages 281--291. ACM, 2001.

\bibitem{giroire2009order}
F.~Giroire.
\newblock Order statistics and estimating cardinalities of massive data sets.
\newblock {\em Discrete Applied Mathematics}, 157(2):406--427, 2009.

\bibitem{gronemeier2009applying}
A.~Gronemeier and M.~Sauerhoff.
\newblock Applying approximate counting for computing the frequency moments of
  long data streams.
\newblock {\em Theory of Computing Systems}, 44(3):332--348, 2009.

\bibitem{heule2013hll}
S.~Heule, M.~Nunkesser, and A.~Hall.
\newblock Hyperloglog in practice: Algorithmic engineering of a state of the
  art cardinality estimation algorithm.
\newblock In {\em EDBT '13}, pages 683--692, New York, NY, USA, 2013. ACM.

\bibitem{quickselect61}
C.~A.~R. Hoare.
\newblock Algorithm 65: Find.
\newblock {\em Communications of the ACM}, 4(7):321--322, July 1961.

\bibitem{kane2010optimal}
D.~M. Kane, J.~Nelson, and D.~P. Woodruff.
\newblock An optimal algorithm for the distinct elements problem.
\newblock In {\em Proceedings of the twenty-ninth ACM SIGMOD-SIGACT-SIGART
  symposium on Principles of database systems}, pages 41--52. ACM, 2010.

\bibitem{KnuthVol3Ed2}
D.~E. Knuth.
\newblock {\em The Art of Computer Programming, Volume 3: (2Nd Ed.) Sorting and
  Searching}.
\newblock Addison Wesley Longman Publishing Co., Inc., Redwood City, CA, USA,
  1998.

\bibitem{morris1978counting}
R.~Morris.
\newblock Counting large numbers of events in small registers.
\newblock {\em Communications of the ACM}, 21(10):840--842, 1978.

\bibitem{muthukrishnan2005data}
S.~Muthukrishnan.
\newblock {\em Data streams: Algorithms and applications}.
\newblock Now Publishers Inc, 2005.

\bibitem{thorup2013bottomk}
M.~Thorup.
\newblock Bottom-k and priority sampling, set similarity and subset sums with
  minimal independence.
\newblock In {\em Proceedings of the Forty-fifth Annual ACM Symposium on Theory
  of Computing}, STOC '13, pages 371--380, New York, NY, USA, 2013. ACM.

\bibitem{ting}
D.~Ting.
\newblock Streamed approximate counting of distinct elements: Beating optimal
  batch methods.
\newblock In {\em Proceedings of the 20th ACM SIGKDD International Conference
  on Knowledge Discovery and Data Mining}, KDD '14, pages 442--451, New York,
  NY, USA, 2014. ACM.
  
  \bibitem{finalcite}
 S.~Tirthapura and D. P. Woodruff.
\newblock A General Method for Estimating Correlated Aggregates over a Data Stream. 
In \emph{Proceedings of ICDE}, pages 162-173, 2012.

\end{thebibliography}
\end{document}